\renewcommand{\theequation}{
\arabic{equation}}
\def\be{\begin{equation}}
\def\ee{\end{equation}}
\def\ba{\begin{eqnarray}}
\def\ea{\end{eqnarray}}
\def\Sets{{\bf Sets}}
\def\op{{\rm op}}
\newcommand\ps[1]{\underline{#1}}
\def\Sub{{\rm Sub_{cl}}}
\def\us{\underline{\Sigma}}
\def\tr{{\rm tr }}
\def\Tr{{\rm Tr }}
\def\mb{\mathcal{B}}
\def\t{\mathbb{T}}
\def\tob{\ps{\mathbb{T}}}
\def\mn{\mathcal{N}}
\def\mh{\mathcal{H}}
\def\mv {\mathcal{V}}
\def\cV {\mathcal{V}}
\def\mP{\mathcal{P}}
\def\c{\mathcal{C}}
\def\cvn {\mathcal{V}(\mn)}
\def\cvnf {\mathcal{V}_{\rm f}(\mn)}
\def\cvfm {\mathcal{V}_{\rm f}(\mn)^-}
\def\d{\mathcal{D}}
\def\cvh {\mathcal{V}(\mh)}
\newcommand\Setn{{\Sets^{\cvn^{\op}}}}
\newcommand\Setnf{{\Sets^{\cvnf^{\op}}}}
\newcommand\dV{\downarrow\!V}
\newcommand\dase{\delta^{\text{o}}}
\newcommand\at{\alpha_{t}}
\newcommand\Hom{{\rm Hom}}
\newcommand\Autr{{\rm Aut}_{\rho}}
\newcommand\Aut{{\rm Aut}}
\newcommand\autf{{\rm Aut}_{\rm \,fth}}
\newcommand\Sh{{\rm Sh}}
\newcommand\Shvn{{\rm Sh}(\cvn)}
\newcommand\Shvfm{{\rm Sh}(\cvnf)}
\newtheorem{Theorem}{Theorem}[section]
\newtheorem{Definition}{Definition}[section]
\newtheorem{Lemma}{Lemma}[section]
\newtheorem{Proposition}{Proposition}[section]
\def\Rl{{\mathchoice
{\setbox0=\hbox{$\displaystyle\rm R$}\hbox{\hbox to0pt
{\kern0.4\wd0\vrule height0.99\ht0\hss}\box0}}
{\setbox0=\hbox{$\textstyle\rm R$}\hbox{\hbox to0pt
{\kern0.4\wd0\vrule height0.9\ht0\hss}\box0}}
{\setbox0=\hbox{$\scriptstyle\rm R$}\hbox{\hbox to0pt
{\kern0.4\wd0\vrule height0.9\ht0\hss}\box0}}
{\setbox0=\hbox{$\scriptscriptstyle\rm R$}\hbox{\hbox to0pt
{\kern0.4\wd0\vrule height0.9\ht0\hss}\box0}}
}}
\def\Cl{{\mathchoice
{\setbox0=\hbox{$\displaystyle\rm C$}\hbox{\hbox to0pt
{\kern0.4\wd0\vrule height0.9\ht0\hss}\box0}}
{\setbox0=\hbox{$\textstyle\rm C$}\hbox{\hbox to0pt
{\kern0.4\wd0\vrule height0.9\ht0\hss}\box0}}
{\setbox0=\hbox{$\scriptstyle\rm C$}\hbox{\hbox to0pt
{\kern0.4\wd0\vrule height0.9\ht0\hss}\box0}}
{\setbox0=\hbox{$\scriptscriptstyle\rm C$}\hbox{\hbox to0pt
{\kern0.4\wd0\vrule height0.9\ht0\hss}\box0}}}}
\begin{document}

\title{Topos Analogues of the KMS State}

\author{{\bf Joseph Ben Geloun}}
\email{jbengeloun@perimeterinstitute.ca}
\affiliation{Perimeter Institute for Theoretical Physics, 31 Caroline
St, Waterloo, ON, Canada}
\affiliation{International Chair in Mathematical Physics and Applications, 
ICMPA-UNESCO Chair, 072BP50, Cotonou, Rep. of Benin}
 
\author{{\bf Cecilia Flori}}
\email{cflori@perimeterinsitute.ca}
\affiliation{Perimeter Institute for Theoretical Physics, 31 Caroline
St, Waterloo, ON, Canada}

\date{\small\today}

\begin{abstract}
\noindent  
We identify the analogues of KMS state in topos theory. 
Topos KMS states can be viewed as {\it classes} of truth objects
associated with a measure $\mu^\rho$ (in one-to-one correspondence 
with an original KMS state $\rho$) which satisfies topos condition analogues to the ordinary KMS conditions. 
Topos KMS states can be defined both {\it externally} or {\it internally} according to whether the automorphism group
of geometric morphisms acts {\it externally} on the topos structure, or the automorphism group is an object in the topos itself, in which case, it acts {\it internally}. 
We illustrate the formalism by a simple example on the Hilbert 
space $\Cl^3$.

\medskip

\noindent Pacs: 02.10.-v, 02.10.Ab, 02.10.De\\
\noindent MSC: 03G30, 18B25  \\
\noindent Key words: Topos theory, KMS state, category theory, 
Tomita-Takesaki modular theory. \\ 
\noindent pi-mathphys-286 and ICMPA-MPA/2012/11

\end{abstract}

\maketitle

\setcounter{footnote}{0}

\tableofcontents

\section{Introduction}
\label{intro}

Topos formulation of quantum theory was put forward by 
Isham,  Butterfield, D\"oring and co-workers
\cite{isham1,isham2,isham3, isham4,isham5,isham6,andreas,andreas1,andreas2,andreas3,andreas4,
andreas5,andreas6,jordan,unsharp,quantization,consistent2,consistent3,mixed, dasain,cecilia,cecilia2,cecilia3,foundations} as an attempt to solve certain interpretational problems inborn in quantum theory and mainly due to the mathematical representation of the theory \cite{ceciliabook}.
In particular, the main idea stressed by the authors, in the above-mentioned
papers, is that the use of topos theory to redefine the mathematical
structure of quantum theory leads to a reformulation of quantum theory in such a way that it is made to `look like' classical
physics. Furthermore, this reformulation of quantum theory has the
key advantages that (i)  no fundamental role is played by the
continuum; (ii) propositions can be given truth values without
needing to invoke the concepts of `measurement' or `observer'; 
and (iii) probabilities are given by a logical definition, that is to say the notion of probabilities becomes secondary while the notion of truth values becomes primary.

The reasons why such a reformulation is needed in the
first place concern quantum theory in general and quantum
cosmology in particular.
\begin{itemize}
\item As it stands, quantum theory is {\it non-realist}.\footnote{By a {\it realist} theory, we mean a theory in which the following conditions are satisfied: (i) propositions form a Boolean algebra; and (ii)
propositions can always be assessed to be either true or false. As
will be delineated in the following, in the topos approach to
quantum theory both of these conditions are relaxed, leading to
what Isham and D\"oring called  a {\it neo-realist} theory \cite{andreas6}.} From a mathematical perspective, this is reflected in the Kocken-Specher theorem.\footnote{
\textbf{Kochen-Specker Theorem}: If the dimension of the Hilbert space $\mh$ is
greater than 2, then there does not exist any valuation function
$V_{\vec{\Psi}}:\mathcal{O}\rightarrow\Rl$  (depending on $\vec\Psi$ a vector in $\mh$) from the set
$\mathcal{O}$ of all bounded self-adjoint operators $A$ of
$\mh$ to the reals $\Rl$ such that,  for all
$A\in\mathcal{O}$ and all $f:\Rl\rightarrow\Rl$, the following holds $V_{\vec{\Psi}}(f(\hat{A}))=f(V_{\vec{\Psi}}(\hat{A})).$} 
\item Notions of {\it measurement} and {\it external observer} introduce 
issues  when dealing with cosmology. 
\item The existence of the Planck scale suggests that there is {\it a priori} no  justification for the adoption of the notion of a continuum in the quantum theory used in formulating quantum gravity.
\end{itemize}

These considerations led Isham and D\"oring to search for a
reformulation of a more realist quantum theory. It turns out that this can be achieved through
the adoption of topos theory, as the mathematical framework with
wish to reformulate  the notion of quantum theory.
The strategy adopted to attain such a new formulation is to
re-express quantum theory as a type of classical theory in a
particular topos.
The notion of classicality in this setting is defined in terms of the notion of {\it context} or {\it classical snapshots}. In this framework, quantum theory is seen as a collection of {\it local} classical snapshots, where the quantum information is determined by the relation of these local classical snapshots.

Mathematically, each classical snapshot is represented by an
abelian von-Neumann sub-algebra $V$ of the algebra
$\mathcal{B}(\mh)$ of bounded operators on a Hilbert space. The
collection of all these contexts forms a category $\cvh$
which is actually a poset by inclusion.
As one goes to smaller sub-algebras $V'\subseteq V$, one obtains a coarse-grained classical perspective on the theory.
The fact that the collection of all such classical snapshots forms a category, in particular a poset, means that the quantum information can be retrieved by the relations of such snapshots, i.e. by the categorical structure.

A topos that allows for such a classical local description is the
topos of presheaves over the category $\mv(\mh)$. This is denoted
as $\Setn$. By using the topos $\Setn$ to
reformulate quantum theory, it was possible to define pure quantum
states, quantum propositions and truth values of the latter
without any reference to external observer, measurement or any
other notion implied by the instrumentalist interpretation. In
particular, for pure quantum states, probabilities are replaced by
truth values which derive from the internal structure of the
topos itself.
These truth values are lower sets in the poset $\mv(\mh)$, thus they are interpreted as the collection of all classical snapshots for which the proposition is true. Of course, being true in one context implies that it will be true in any coarse graining of the latter.

As just mentioned, the mathematics underpinning the definition of 
classical  sub-components of a quantum theory is  a
topos  over the category of abelian von Neumann sub-algebras
of bounded operators over a given Hilbert space. 
The von Neumann algebras are the preferred fields 
for defining another important object in quantum mechanics 
and statistical physics: the KMS (Kubo-Martin-Schwinger) state \cite{Kubo:1957mj}\cite{Martin:1959jp}. 

The KMS state occurs in statistical analysis
 as a state describing the thermal equilibrium
of many body quantum systems \cite{Kubo:1957mj}\cite{Martin:1959jp}. 
For oscillator systems and under particular conditions, the  KMS state turns out to coincide with Gibbs state \cite{ion,taku,sewell}. Its significance has also been highlighted through a mathematical framework called the Tomita-Takesaki modular theory on von Neumann algebras \cite{stratila,takesaki, takesaki2, maeda, Borchers:2000pv, Summers:2003tf,Connes:1994yd}. The Tomita-Takesaki theory on its own was the focus on intense research activities in the mid 70's and after. 
Indeed, it has been revealed that, for algebras of observables associated with a quantum field theory obeying Wightman axioms,  the modular objects associated with the vacuum state and algebras of observables, localized in certain wedge-shaped regions in Minkowski space, have the
some geometric content (the unitary modular group
is associated to Lorentz boost symmetry, whereas the modular involution 
implements space-time reflection about the edge of the wedge and
charge conjugation) \cite{Borchers:2000pv,Summers:2003tf}.
Thus, KMS state and Tomita-Takesaki theory are significant in physics.

 In this paper, we succeed in identifying the topos analogues
of the KMS state or KMS condition for an equilibrium state. 
In fact, given a (modular) automorphism in the Hilbert space of states,
to which there corresponds a geometric morphism in the topos formulation,
 there can be defined two  KMS conditions:
i) the {\it external } condition, defined through 
a geometric morphism, which does not respect the truth value
object (topos state) fibration but maps the topos KMS state
onto a twisted KMS presheaf. ii) The {\it internal} condition,
defined via an internal group object and which respects the 
presheaf structure of the topos state. 
Roughly speaking, a topos KMS state can be viewed 
as a truth object presheaves for which, the KMS condition is satisfied on the sub-objects at each context. More precisely, the topos KMS state will be  defined as a collection of sub-objects of the state space such that their measure, as defined for each context, is invariant with respect to a given automorphism of the von-Neumann algebra.

When defining the measures on the spectral presheaf and
seeking a characterization of  measures
corresponding to normal states, D\"oring 
mentioned a likely link between KMS states and topos theories \cite{doring}.
The present study, never addressed before
to the best of our knowledge, provides
explicitly this link and, thereby,  illustrates once
again the wealth of Topos quantum theory. 

 This paper also provides an interesting illustration of the possible uses of the internal group transformations in topos quantum theory defined in \cite{group,gelfandII}. There group transformations were defined in such a way so as to solve the problem of twisted presheaves (see \cite{andreas5}) thus giving a well-defined meaning to the concept of group action in topos quantum theory. The results of this paper show that the definition of group transformations of \cite{group} also provides a solution for the problem of defining the topos analogue of KMS state. In fact, these group transformations are given in terms of automorphisms on the topos analogue of the state space and one of the KMS conditions is the state's invariance under an automorphisms of the algebra to which it belongs. However, since topos states are particular collection of sub-objects of the state space, such a KMS condition can be reproduced at the level of topos quantum theory only if group transformations are defined as automorphisms on the state space.

It should be noted, at this point, that by defining the KMS condition via the above mentioned automorphism on the state space we are able to define the KMS condition globally in the sense that we are defining it for all abelian sub-algebras of a non commutative von Neumann algebra $\mn$. As explained later in the paper, such a collection of abelian von Neumann sub-algebras form the base category $\mv(\mn)$ of the topos utilised to express quantum theory. 
In this sense, the definition of KMS state present in this paper is different from the definition of classical KMS condition put forward in \cite{gallavotti1, gallavotti2} where the authors only consider abelian algebras of observables. In particular in \cite{gallavotti1} the authors define a classical KMS boundary condition while in \cite{gallavotti2}, they define a non trivial time evolution in terms of the KMS condition so that, also in the classical case, a non trivial instance of the Tomita-Takesaki theorem can be defined. 
Given that these works only deal with commutative algebras, it would be very interesting to see how and if it is possible to translate them in the covariant topos approach to quantum theory \cite{bass, bass1} rather than the contravariant approach dealt with in the present paper. The reasoning being that in the covariant approach the starting point is an internal (to the topos) abelian $C^*$-algebra which is precisely the starting point in \cite{gallavotti1, gallavotti2}. However this line of ideas is beyond the scope of this paper and will deserve full investigation.

The plan of the paper is the following: The next 
section briefly yields basic definition on von Neumann 
algebra, the simplest instance referring to the KMS
condition on a state and a short summary of the modular theory 
of Tomita-Takesaki \cite{takesaki}. In Section \ref{sect:toposkms}, we start
our analysis and define the topos analogue of the 
KMS state under an {\it external} condition. 
We then identify its properties. 
Section \ref{sect:inter} is devoted to the construction 
of the topos KMS state subjected now to an {\it internal}
condition. 
 Section \ref{sect:tttheo}
deals, in a streamlined analysis, with the extension of the Tomita-Takesaki modular structure  in topos formulation.  
Finally, a detailed appendix gathers basic definitions 
on category and topos theories and important statements used throughout the text.

\section{KMS state and the Tomita-Takesaki modular theory}
\label{sect:tt-theo}

We briefly provide here some definitions related to 
von Neumann algebras, the canonical KMS  state
and  the Tomita-Takesaki modular theory (see, for example,
\cite{sewell,stratila,takesaki}).

\medskip 
\noindent{\bf On von Neumann algebras.} Let $\mh$ be a complex separable Hilbert space of any dimension,
we denote by $\mb(\mh)$ the set of all bounded
operators on $\mh$. Let $\mathcal{A}$ be a subset of
$\mb(\mh)$ and $\mathcal{A}'$
its {\it commutant}, i.e. the set of all elements of  $\mb(\mh)$ commuting with every element of $\mathcal{A}$.

Assuming that (1) $\mathcal{A}$ be closed under linear combinations,
(operator) multiplication and adjoint conjugation,
and (2) $\mathcal{A} = \mathcal{A}''$,
then $\mathcal{A}$ is called a {\it von Neumann algebra}.
A von Neumann algebra always contains the identity operator $I_\mathcal{H}$ on $\mathcal{H}$.
This statement relies on the fact that $\mathcal{A}$ is a weakly closed set
with respect to some operator scalar product.

The von Neumann algebra $\mathcal{A}$ is called a {\it factor} if
the only common elements of $\mathcal{A}$ and its commutant
are those proportional to the identity of $\mathcal{H}$, i.e.
if $\mathcal{A} \cap \mathcal{A}' = \mathbb C I_\mathcal{H}$.

Given $\varphi : \mathcal{A} \longrightarrow
\mathbb C$, be a bounded linear functional on $\mathcal{A}$, its action
is given by duality scalar product denoted by
$\langle\varphi\,;\, A\rangle$ or equivalently by $\varphi(A)$, for $A\in \mathcal{A}$. The functional
$\varphi$ is called a {\it state} on this algebra if (a)
$\langle \varphi\,;\, A^*A\rangle \geq 0$, $\forall A \in \mathcal{A}$
and  (b) $\langle \varphi \, ; \, I_\mathcal{H}\rangle =1$.

\noindent$\bullet$ A state $\varphi$ is said to be {\it faithful}
if $\langle \varphi\,;\,A^*A\rangle > 0$ for all $A \neq 0$.

\noindent$\bullet$ A state is said to be {\it normal} if and only if there is a density matrix
$\varrho$, such that $\langle \varphi\mid A\rangle = \text{Tr}[\varrho A]$, $\forall
A\in \mathcal{A}$.

\noindent$\bullet$ A state is called a {\it vector state} if there exists a vector $\phi \in \mathcal{H}$,
such that $\langle \varphi\,;\,A\rangle = \langle\phi \mid A\phi\rangle$, $\forall
A \in \mathcal{A}$. Note that such a state is also normal.

\noindent$\bullet$ A vector $\psi \in \mathcal{H}$ is called {\it cyclic} for the von Neumann
algebra if the set $\{A\psi \mid A \in \mathcal{A}\}$ is dense in $\mathcal{H}$.

\noindent$\bullet$ A vector $\psi \in \mathcal{H}$ is said to be {\it separating } for $\mathcal{A}$
if $A\psi = B\psi$, $\forall A, B \in \mathcal{A}$, if and only if $A = B$.

\medskip 
\noindent{\bf Canonical KMS condition.}
The general notion of  KMS state in  a von-Neumann algebra $\mn$
 can be stated as follows:
\begin{Definition}[\cite{kms}]
\label{def:0kms}
Given a von Neumann algebra $\mathcal{N}$ equipped with a one parameter automorphisms group $\{\alpha(t)|\,t\in\Rl\}$, then a  normal state $\rho:\mathcal{N}\rightarrow \Cl$ is a KMS state if it satisfies the following properties:
\begin{enumerate}
\item Invariance under the automorphism group: $[\alpha(t)\cdot \rho](A)= \rho(\alpha(t)A)=\rho(A)$;
\item Given any two elements $A, B\in \mathcal{N}$, the function
\be
F_{A,B}(t)=\langle\rho;A\,\alpha(t)B\rangle\,,
\ee
for all $t\in \Rl$, has an extension to the strip $\{z=t+iy|\,t\in\Rl, y\in[0,\beta]\}\subset \Cl$, such that $F_{A,B}(z)$ is analytic in the open strip $(0,\beta)$ and continuous on its boundary. Moreover, it satisfies the boundary condition
\be
F_{A,B}(t+i \beta) = \langle\rho;\alpha(t)BA\rangle\,,
\label{fabit}
\ee
for $t\in \Rl$ and $A, B$ in $\mn$.
\end{enumerate}
\end{Definition}

\medskip

 As shown in Proposition 5.3.7 of \cite{kms} the above definition is equivalent to the following
\begin{Definition}\label{def:1kms}
Given a von Neumann algebra $\mn$ with an automorphism group $\{\alpha(t)|\,t\in\Rl\}$, a state $\rho$ on $\mn$ is a KMS state if
\be\label{equ:second}
\rho(A\alpha_{i\beta}(B))=\rho(BA)\,,
\ee
for all $A, B$ in $\mn_{\alpha}$, a $\sigma$-weakly dense $\alpha$-invariant $*$-sub-algebra of $\mn$.
\end{Definition}
From the equivalence of these two definitions it follows that for all $A, B\in\mn_{\alpha}$ then $F_{A,B}(t+i\beta)=\rho(A\alpha_{t+i\beta}B)=\rho(\alpha(t)BA)$.  This is precisely condition \eqref{equ:second} in fact $\rho(A\alpha_{t+i\beta}B)=\rho(A\alpha_{i\beta}(\alpha_t(B)))$, renaming $B'=\alpha_t(B)$, we obtain $\rho(A\alpha_{i\beta}(B'))=\rho(B'A)$.

\medskip

\noindent{\bf Tomita-Takesaki modular theory.}
Let $\mathcal{A}$ be a von Neumann algebra on a Hilbert space $\mh$ and $\mathcal{A}'$ its commutant.
Let $\Phi \in \mh$ be a unit vector which is cyclic and separating for $\mathcal{A}$.  There exists 
a corresponding state $\varphi=\varphi_{\Phi}$ on the algebra,
$\langle \varphi\, ; \, A\rangle =
\langle \Phi \mid A \Phi \rangle$, $A \in \mathcal{A}$, which
 is faithful and normal.
The map
\ba
  S:\mh\longmapsto \mh\; ,
\qquad SA\Phi = A^*\Phi\; , \; \forall A \in \mathcal{A} \; .
\label{mod-antilin-map}
\ea
is antilinear. Due to the fact that $\Phi$ is cyclic, $S$ is densely defined and closable.
Denoting its closure again by $S$, the latter admits a polar decomposition as
\ba
  S = J\Delta^\frac 12 = \Delta^{-\frac 12}J\; , \quad \text{with} \quad \Delta = S^*S\; .
\label{mod-pol-decomp}
\ea
The operator $\Delta$ is called the {\it modular operator}. It is positive and self-adjoint.
Moreover, $J$, called the {\it modular conjugation operator}, is antiunitary and obeys
$J = J^*$, $J^2 = I_\mathcal{H}$. By the antiunitarity of $J$, we have
$\langle J\phi | J\psi \rangle = \langle \psi | \phi\rangle$, $\forall \phi, \psi
\in \mathcal{H}$.

Using the fact that $\Delta$ is self-adjoint and therefore admits a spectral representation,
the family of operators $\Delta^{-i\frac {t}{ \beta}}$, for $t\in \mathbb R$
and for some fixed $\beta > 0$, is well defined. From this point, there exists
a unitary family $\{\alpha_\varphi (t)\}_{t\in \Rl}$ of automorphisms of the algebra $\mathcal{A}$, such that
\ba
  \alpha_\varphi (t)[A] = \Delta^{i\frac {t}{ \beta}}A\Delta^{-i\frac{t}{\beta}}\; , \;\;
  \forall A \in \mathcal{A}\; .
\label{mod-automorph-grp}
\ea
These automorphisms determine a strongly continuous one-parameter group of automorphisms,
called the {\it modular automorphism group\/}.
The generator $\mathbf H_\varphi$ of this one-parameter group is such that
$\Delta^{-i\frac {t}{\beta}} = e^{it\mathbf H_\varphi}$ or
$\Delta = e^{-\beta \mathbf H_\varphi}$.
It can be shown that the state $\varphi$ is invariant under this automorphism group:
$  e^{-\beta \mathbf H_\varphi}\Phi = \Phi$ (and that the von Neumann algebra is stable
$\Delta^{i\frac {t}{\beta}}\;\mathcal{A} \Delta^{-i\frac {t}{\beta}} = \mathcal{A}$).
Remarkably, the antilinear map $J$ interchanges $\mathcal{A}$ with its commutant $\mathcal{A}'$:
$J\mathcal{A} J = \mathcal{A}'$ and $J \mathcal{A}' J = \mathcal{A}$.

In this specific instance, 
using the modular evolution operator $\alpha_\varphi(t)$, 
the KMS condition can be easily translated in terms of the
modular structure $\Delta$ itself related to the vector $\Phi$,
 the state $\varphi$ or its associated density.

\section{Topos analogue of the KMS state}
\label{sect:toposkms}

In this section, we define the topos analogue of a KMS state. The existence of such a state is granted by the 1:1 correspondence between states $\rho$ and truth objects $\ps{\mathbb{T}}^{\rho}$, which represent the topos analogue of a state. Since the definition of $\ps{\mathbb{T}}^{\rho}$ depends on the definition of a measure $\mu^{\rho}$ on the state space $\us$, we  first need to introduce such a measure. 

\subsection{Measure on the state space}
\label{subsect:mes}

Let $\mn$ be a von Neumann algebra and $\cvn$ be the category\footnote{A short survey on category theory is provided in 
Appendix \ref{app:category}.}  of abelian von Neumann sub-algebras of $\mn$ ordered by sub-algebra inclusion. The elements $V\in\cvn$ will be called {\it contexts}.

For now, let us consider $\Sets$ the category of sets ordered by subset inclusion
and the topos of contravariant functors from $\cV(\mn)$ to $\Sets$ which we denote as $\Setn$ 
(see Appendix \ref{app:stat}, Definition \ref{appdefsetn}).

To start the analysis, we need the notion of measure and of 
measurable subsets of the state space in the topos $\Setn$ \cite{andreas}. 
The state space $\us$ is defined as an object of $\Setn$
such that, to each $V$ it assigns the set $\us_V=\{\lambda:V\to \Cl|\,\lambda(\hat{1})=1\}$ which is the Gel'fand
spectrum of $V$,
(see Appendix \ref{app:stat}, Definition \ref{appdefus} for 
more precisions). To have a proper notion of measure on $\us$, we need to define (I) the collection of measurable sets and (II) the object in which these measures take their values. These notions are defined respectively as: (I) the collection of clopen sub-objects of $\us$ which is denoted as  $\Sub(\us)$ and such that  for each $V\in \cvn$, $\Sub(\us)_V$ represents the lattice of clopen subsets of  $\us_V$. The lattice operations are given by intersection and union while the lattice ordering is given by subset inclusion (see Appendix \ref{app:stat}, Definition \ref{appdefsub}); (II) The
collection of global elements (global sections) $\Gamma\ps{[0,1]}$ 
which is a collection of natural transformations from the terminal object $\ps{1}\in\Setn$ to the presheaf $\ps{[0,1]}$ 
(see Appendix \ref{app:mes}, Definition \ref{appdefgam}). The presheaf $\ps{[0,1]} \in \Setn$ is the object in which the measure takes its value. In fact, it assigns to each 
context $V$ the set $\ps{[0,1]}_V$ of order reversing (which we denote by OR)
functions  from $\dV$ to $[0,1]$ 
(see Appendix \ref{app:mes}, Definition \ref{appdef01}).

In order to be able to define a measure, we need an ulterior feature, namely the existence, for each context $V$, of a 1:1 correspondence between the lattice $\mathcal{P}(V)$ of projection operators in $V$ and clopen subsets of $\us_V$. This correspondence is defined 
via the lattice homeomorphism 
\begin{equation}
\label{equ:smap}
\mathfrak{S}:\mathcal{P}(V)\rightarrow \Sub(\us)_V\hspace{.2in}
\end{equation}
such that
\begin{equation}
\hat{P}\mapsto
\mathfrak{S}(\hat{P}):=\ps{S}_{\hat{P}_V}:=\{\lambda\in\us_V|\lambda
(\hat{P})=1\}\,.
\end{equation}
Thus $\hat{P}_{\ps{S}_V}=\mathfrak{S}^{-1}(\ps{S}_V)$ (see Appendix \ref{subsect:proj}). 

We are in a position to define a measure on $\us$.  
\begin{Definition}[\cite{andreas}]
Given any state $\rho$ of $\mn$, the associated measure on $\Sub(\us)$ is
\ba
\label{ali:measure2}
\mu^{\rho}: \Sub(\us)&\rightarrow& \Gamma\ps{[0,1]}\\
\ps{S}=(S_V)_{V\in\cV(\mh)}&\mapsto&\mu^{\rho}(\ps{S}):=(\rho(\hat{P}_{\ps{S}_V}))_{V\in\cV(\mh)}\,.
\ea
\end{Definition}
The association between a state $\rho$ and a measure is 1:1 
(Appendix \ref{app:mes} provides more details on this correspondence). 
Let us now analyze the measure in \eqref{ali:measure2} in more detail. In particular, for any $V\in \cvn$, we obtain
\ba
[\mu^{\rho}(\ps{S})](V):\ps{1}_V&\rightarrow &\ps{[0,1]}_V\\
\{*\}&\mapsto&[\mu^{\rho}(\ps{S})](V)(\{*\}):=
[\mu^{\rho}(\ps{S})](V)
\ea
where $[\mu^{\rho}(\ps{S})](V):\,\dV\rightarrow [0,1]$.
Thus, the measure $\mu^{\rho}$ defined in \eqref{ali:measure2} takes a clopen sub-object of $\us$ and defines an OR function 
\be
\mu^{\rho}(\ps{S}):\cvn\rightarrow [0,1]
\ee
such that, for each $V\in \cV(\mn)$, $[\mu^{\rho}(\ps{S})](V)$ defines the expectation value with respect to $\varrho$ of the projection operator $\hat{P}_{\ps{S}_V}$ (associated with $\ps{S}_V$). Therefore, given two contexts $V'\subseteq V$, since $\hat{P}_{\ps{S}_{V'}}\geq \hat{P}_{\ps{S}_V}$ then $\rho(\hat{P}_{\ps{S}_{V'}})\geq \rho( \hat{P}_{\ps{S}_V})$. 
Results from \cite{gelfandI} and in particular Proposition 20 of \cite{gelfandII} show that $\mu^{\rho}_V=\rho_V\circ \mathfrak{S}^{-1}_V$ for all $V\in\mv(\mh)$. Moreover Lemma 23 in  \cite{gelfandII} shows that, given an automorphism $\alpha:\mn\rightarrow\mn$ it is possible to define a group action on the measure as follows
\be
\alpha \cdot \mu^{\rho}=\mu^{\alpha\cdot \rho}=\mu^{\rho\circ \alpha^{-1}}
\ee
such that $\alpha\cdot \mu^{\rho}$ is the measure uniquely associated to the state $\alpha\cdot \rho:=\rho\circ \alpha^{-1}$. These results will play a key role in KMS condition on the measure. \\

From the above discussion, given a KMS state $\rho:\mn\rightarrow\Cl$ we can deduce that there is a measure $\mu^{\rho}$ uniquely associated with it. In the following section, we will define some characterising properties of such a measure and, consequently, define the topos analogue of the KMS state.

\subsection{External KMS condition}
\label{subsect:kms}

From this section, our main results are discussed. 
 We are now interested in understanding the properties of the measure $\mu^{\rho}$ associated to a KMS state. To achieve this goal, first of all, we  analyze the automorphisms group $\Aut=\{\alpha(t) |\,t\in \Rl\}$ acting on $\mn$. 
Since each $\alpha(t)$ can be extended to a functor $\alpha(t):\cvn\rightarrow\cvn$; $V\mapsto\alpha(t)V$, this induces a geometric morphism\footnote{
A {\it geometric morphism} \cite{topos7}\cite{sv}  
$\phi$ between topoi $\tau_1$ and $ \tau_2$ is defined as a pair 
$(\phi_*, \phi^*)$ of functors such that 
$\phi_*:\tau_1\rightarrow \tau_2$ and
$\phi^*:\tau_2\rightarrow \tau_1$ called, respectively, the
{\it direct image} and the {\it inverse image} part of $\phi$. Moreover, 
(a) $\phi^*\dashv  \phi_*$ i.e., $\phi^*$ is the left adjoint of $\phi_*$;
(b) $\phi^*$ is left exact, i.e., it preserves all finite limits.
}
whose inverse image part is 
\ba
\alpha(t)^*:\Setn&\rightarrow&\Setn\crcr
\ps{A}&\mapsto&\alpha(t)^*\ps{A}
\ea
where, for each $V\in\cvn$,  $\alpha(t)^*\ps{A}_V:=\ps{A}_{\alpha(t)(V)}$.

Importantly, note that we have defined 
the automorphism group in an {\it external} way, 
i.e. $\Aut$ is not itself an object in the topos. Hence the ensuing
analysis and conditions that we investigate can be qualified
in this respect as being {\it external}. In the next section,
another type of automorphism group will be introduced.

The following statement holds:

\begin{Proposition}[KMS measure]
\label{pro:kms}
Given a KMS state $\rho$, for all elements of the one parameter group
$\rm{Aut}=\{\alpha(t)|\,t\in \Rl\}$, the associated measure $\mu^{\rho}$ satisfies the following conditions:
 \begin{enumerate}
\item [(C1)]$ \mu^{\rho}(\ps{S})=\mu^{\rho}(\alpha(t)^*\ps{S})$,
$\forall \ps{S} \in \Sub(\us)$;
\item [(C2)] $\forall \ps{S}, \ps{T} \in \Sub(\us)$ and $V\in \mv(\mh)$ the function
\be
F_{\ps{T},\ps{S}}(t):= [\mu^{\rho}(\ps{T}\wedge \alpha(z)^* \ps{S})](V)\,,
\ee
for all $t\in \Rl$ has an extension to the strip $\{z=t+iy|\, t\in\Rl, \,y\in[0,\beta]\}\subset \Cl$ such that $F_{\ps{T},\ps{S}}(z)$ is analytic in the open strip $(0,\beta)$ and continuous on its boundary where it satisfies the following boundary condition
 \be 
F_{\ps{T},\ps{S}}(t+i\beta)=\mu^{\rho}(\alpha^*(t)\ps{S}\wedge \ps{T} )(V)\,.
\label{equ:cong2}
\ee
\end{enumerate}

\end{Proposition}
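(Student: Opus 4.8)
The plan is to reduce both conditions (C1) and (C2) to the corresponding statements in the ordinary (Hilbert-space) KMS Definition~\ref{def:0kms}, using the dictionary that translates between clopen sub-objects of $\us$ and projection operators, and between the geometric morphism $\alpha(t)^*$ and the algebra automorphism $\alpha(t)$. The two facts I will lean on are: (i) $\mu^{\rho}_V = \rho_V \circ \mathfrak{S}^{-1}_V$ for every context $V$ (Proposition~20 of \cite{gelfandII}), so that $[\mu^{\rho}(\ps{S})](V) = \rho(\hat P_{\ps{S}_V})$; and (ii) the group-action identity $\alpha(t)\cdot\mu^{\rho} = \mu^{\alpha(t)\cdot\rho} = \mu^{\rho\circ\alpha(t)^{-1}}$ (Lemma~23 of \cite{gelfandII}), together with the compatibility of $\mathfrak{S}$ with the automorphism, i.e.\ the clopen set $\alpha(t)^*\ps{S}$ at a context corresponds (up to the relabelling $V\mapsto\alpha(t)V$) to the projection $\alpha(t)(\hat P_{\ps{S}_V})$.

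\medskip\noindent\textbf{Proof of (C1).} First I would unravel the left-hand side at an arbitrary context: $[\mu^{\rho}(\ps{S})](V)=\rho(\hat P_{\ps{S}_V})$. Next I would compute the right-hand side. By definition $\alpha(t)^*\ps{S}_V = \ps{S}_{\alpha(t)V}$, and since $\mathfrak{S}$ intertwines the algebra automorphism with its action on clopen subsets one has $\hat P_{\ps{S}_{\alpha(t)V}} = \alpha(t)(\hat P_{\ps{S}_V})$ (this is essentially the content of the group-action construction of \cite{group,gelfandII}). Therefore $[\mu^{\rho}(\alpha(t)^*\ps{S})](V) = \rho\bigl(\alpha(t)(\hat P_{\ps{S}_V})\bigr)$. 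Now invoke the first KMS property of $\rho$ in Definition~\ref{def:0kms}, namely $\rho(\alpha(t)A)=\rho(A)$ for all $A\in\mn$, applied to $A=\hat P_{\ps{S}_V}$. This gives $\rho(\alpha(t)(\hat P_{\ps{S}_V})) = \rho(\hat P_{\ps{S}_V})$, i.e.\ equality of the two OR functions at every $V$, hence $\mu^{\rho}(\ps{S})=\mu^{\rho}(\alpha(t)^*\ps{S})$ as global sections.

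\medskip\noindent\textbf{Proof of (C2).} Fix $\ps{S},\ps{T}\in\Sub(\us)$ and $V\in\cvn$. The key computation is to identify the projection corresponding to the clopen set $\ps{T}\wedge\alpha(z)^*\ps{S}$ evaluated at $V$. Since the lattice operation $\wedge$ in $\Sub(\us)_V$ is intersection of clopen subsets of $\us_V$, and $\mathfrak{S}_V$ is a lattice isomorphism onto $\Sub(\us)_V$, the projection is $\hat P_{\ps{T}_V}\wedge \alpha(t)(\hat P_{\ps{S}_V})$ in $\mathcal{P}(V)$ --- but here I must be careful, because the poset meet of two projections is \emph{not} in general the operator product; however, on a fixed abelian algebra $V$ the meet of two commuting projections \emph{is} their product, $\hat Q\wedge\hat R = \hat Q\hat R$. (If $\alpha(t)(\hat P_{\ps{S}_V})$ fails to lie in $V$ one works in the abelian algebra generated by $V$ and this projection, or restricts to $\mn_\alpha$ as in Definition~\ref{def:1kms}; I would flag this as the delicate point below.) Granting this, $F_{\ps{T},\ps{S}}(t) = \rho\bigl(\hat P_{\ps{T}_V}\,\alpha(t)(\hat P_{\ps{S}_V})\bigr) = \langle\rho; \hat P_{\ps{T}_V}\,\alpha(t)\hat P_{\ps{S}_V}\rangle$, which is precisely the function $F_{A,B}(t)$ of Definition~\ref{def:0kms} with $A=\hat P_{\ps{T}_V}$, $B=\hat P_{\ps{S}_V}$. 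Property~2 of Definition~\ref{def:0kms} then supplies the analytic continuation to the strip, analyticity in $(0,\beta)$, continuity on the boundary, and the boundary identity $F_{A,B}(t+i\beta) = \langle\rho;\alpha(t)BA\rangle = \rho\bigl(\alpha(t)(\hat P_{\ps{S}_V})\,\hat P_{\ps{T}_V}\bigr)$. Reinterpreting the right-hand side back through $\mathfrak{S}^{-1}$ gives $\mu^{\rho}(\alpha^*(t)\ps{S}\wedge\ps{T})(V)$, which is \eqref{equ:cong2}. Finally I would note that the extension $F_{\ps{T},\ps{S}}(z) := [\mu^{\rho}(\ps{T}\wedge\alpha(z)^*\ps{S})](V)$ is well-defined because $\alpha(z)$ for complex $z$ is the analytic continuation of the modular automorphism acting on the $\sigma$-weakly dense algebra $\mn_\alpha$, so $\alpha(z)^*$ makes sense on the relevant sub-objects.

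\medskip\noindent\textbf{Main obstacle.} The genuine difficulty, as hinted above, is the treatment of the meet $\ps{T}\wedge\alpha(z)^*\ps{S}$ at the operator level: $\alpha(t)(\hat P_{\ps{S}_V})$ need not belong to the same context $V$ as $\hat P_{\ps{T}_V}$, so the passage from the lattice meet in $\Sub(\us)_V$ to a product of projections --- which is what lets one apply the scalar KMS condition --- requires either restricting attention to the $\alpha$-invariant dense sub-algebra $\mn_\alpha$ (as in Definition~\ref{def:1kms}), or carefully tracking that $\alpha(t)^*$ is a geometric morphism that relabels contexts rather than acting fibrewise, so that the comparison is between $\ps{T}_V$ and $\ps{S}_{\alpha(t)V}$ and one should work in $V\cap\alpha(t)V$ or the algebra they jointly generate. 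I would devote the bulk of the write-up to making this identification precise, and everything else is then a direct translation of Definition~\ref{def:0kms} through the isomorphism $\mu^{\rho}_V=\rho_V\circ\mathfrak{S}^{-1}_V$.
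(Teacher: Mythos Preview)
Your proposal is correct and follows essentially the same route as the paper: both reduce (C1) and (C2) to the scalar KMS conditions via the identity $\mu^{\rho}_V=\rho_V\circ\mathfrak{S}_V^{-1}$ together with the intertwining $\hat P_{(\alpha_t^*\ps{S})_V}=\alpha_t(\hat P_{\ps{S}_V})$, and both resolve the obstacle you flag by restricting the (C2) computation to $\alpha$-invariant contexts $V_\alpha$ so that the lattice meet becomes a product of commuting projections. The only cosmetic difference is that the paper proves the intertwining property directly as a commuting-square lemma just before the proof, whereas you import it from \cite{gelfandII}.
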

From the definition of KMS states given in Definition \ref{def:1kms}, it follows that it is possible to give a precise characterisation of the function $F_{\ps{T},\ps{S}}(t+i\beta)$ for an appropriate domain. In particular, given any algebra $V$, we consider the $\sigma$-weakly dense $\alpha$ invariant *-sub-algebra $V_{\alpha}\subseteq V$. For these contexts, $F_{\ps{T},\ps{S}}(t+i\beta)$ can be defined as  
\be
F_{\ps{T},\ps{S}}(t+i\beta):=\mu^{\rho}(\ps{T}\wedge \alpha(t+i\beta)^* \ps{S})(V_{\alpha})\,,
\ee such that the boundary condition becomes
 \be
\mu^{\rho}(\ps{T}\wedge \alpha(t+i\beta)^* \ps{S})(V_{\alpha})=\mu^{\rho}(\alpha^*(t)\ps{S}\wedge \ps{T} )(V_{\alpha})\,.
\ee

Let us analyse conditions C1 and C2, separately.
For simplicity, in the following, we will write $\alpha(t)=\at$. 

 In order to study C1 and C2, we first  need to define how $\at$ acts on a state $\rho$. If $\rho$ is a normal state, then  it can be defined as $\tr(\varrho -)$, where $\varrho$ is a density matrix. We then obtain
\be
\at \cdot\rho=\at \cdot \tr(\varrho-):=\tr(\at\varrho -)=\tr(\varrho\alpha_t^{-1}-)=\rho\circ \alpha_t^{-1}
\ee
Thus, for a general state we can define
\be
\at \cdot \rho:=\rho\circ \at^{-1}\,.
\ee
This being introduced, we can turn to the question of the significance of  C1. The condition C1 means that, for all $V\in \cvn$, one has
\ba
[\mu^{\rho}(\at^*\ps{S})](V)&=&\mu^{\rho}[(\ps{S}\circ \at)|_V]:=\rho_{|V}\circ \mathfrak{S}_{|V}^{-1}(\at^*\ps{S})(V)\\
&=&\rho_{|V} (\hat{P}_{(\at^*\ps{S})(V)})=\rho_{|V}\circ \at\hat{P}_{\ps{S}_V}\\
&=& \at^{-1}\cdot \rho_{|V}\circ \mathfrak{S}_{|V}^{-1}\ps{S}_V\\
&=& [\mu^{\rho}(\ps{S})](V)=\rho_{|V}\circ \mathfrak{S}_{|V}^{-1}\ps{S}_V
\ea

Our claim is that condition C1 on the measure represents the first condition of the KMS state $\rho$ (see point 1. in Definition \ref{def:0kms}), namely that $\rho$ is invariant under the automorphism group. 
To proceed with this, let us prove that the following diagram commutes:
 \[\xymatrix{
\mP(V)\ar[rr]^{\mathfrak{S}_V}\ar[dd]_{\alpha(t)}&&\Sub(\us_V)\ar[dd]^{\alpha(t)}\\
&&\\
\mP(\alpha(t)V)\ar[rr]^{\mathfrak{S}_{\alpha(t)V}}&&
\Sub((\alpha^*(t)\us)_V)\\
}\]

The diagram in one direction yields, for a given $\hat{P}\in \mP(V)$, 
 \be
[\at\circ \mathfrak{S}_V](\hat{P})=\at(\ps{S}_{\hat{P}_V})=\{\at(\lambda)
\in\us_{\at V}|\; [\at\lambda](\hat{P})=1\}=
\{\lambda\in\us_{\at V}|\; \lambda(\at(\hat{P}))=1\}\,.
\ee

On the other direction, we get:
\be
[\mathfrak{S}_{\at V}\circ \at] (\hat{P})=\mathfrak{S}_{\at V}( \at(\hat{P}))=\ps{S}_{(\at\hat{P})_{\at V}}=\{\lambda\in\us_{\at V}|\;\lambda(\at(\hat{P}))=1\}\,.
\ee

Thus the diagram commutes. Using this, we can now prove 
conditions C1 and C2 in Proposition \ref{pro:kms}:
\begin{proof}[Proof of C1.]
Given a KMS state $\rho$, for any $V\in \cvn$, we have:
\ba
[\mu^{\rho}(\alpha_t^*\ps{S})](V)&=&\mu^{\rho}_{|V}\big(\ps{S}\circ \at|_V\big):=\rho_{|V}\circ \mathfrak{S}_{|V}^{-1}(\at^*\ps{S})(V)\\
&=&\rho_{|V}(\hat{P}_{(\at^*\ps{S})(V)})=\rho_{|V}\circ \at\hat{P}_{\ps{S}_V}\\
&=& \at^{-1}\rho_{|V}\circ \mathfrak{S}_{|V}^{-1}(\ps{S}_V)\\
&=&\rho_{|V}\circ \mathfrak{S}_{|V}^{-1}(\ps{S}_V)= [\mu^{\rho}(\ps{S})](V)\,.
\ea
\end{proof}

\begin{proof}[Proof of C2.]
Let us consider the expression 
$\mu^{\rho}(\at^*\ps{S}\wedge \ps{T})$
and evaluate in some context $V_{\alpha}$ (which should be
a $\sigma$-weakly dense $\alpha$-invariant *-sub-algebra of some context $V$). This gives
\ba
[\mu^{\rho}(\at^*\ps{S}\wedge \ps{T})](V_{\alpha})&:=&(\rho_{|V_{\alpha}}\circ \mathfrak{S}_{|V_{\alpha}}^{-1})(\alpha_{t}^*(\ps{S})\wedge \ps{T})(V_{\alpha})\\
&=&\rho_{|V_{\alpha}}(\mathfrak{S}_{|V_{\alpha}}^{-1}\alpha_{t}^*(\ps{S})(V_{\alpha})\wedge \mathfrak{S}_{|V_\alpha}^{-1}\ps{T}(V_{\alpha}))\\
&=&\rho_{|V_{\alpha}}(\alpha_{t}\hat{P}_{\ps{S}_{V_{\alpha}}}\wedge \hat{P}_{\ps{T}_{V_{\alpha}}})\\
&=&\rho_{|V_{\alpha}}( \hat{P}_{\ps{T}_{V_{\alpha}}}\wedge \alpha_{(t+i\beta)}\hat{P}_{\ps{S}_{V_{\alpha}}})\\
&=&[\mu^{\rho}(\ps{T}\wedge \alpha_{(t+i\beta)}^* \ps{S})](V_{\alpha})\,,
\ea
where the fourth equality follows from the second condition of a KMS state on a suitable $\sigma$-weakly dense $\alpha$-invariant *-sub-algebra $V_{\alpha}\subseteq V$.  We thus obtain 
\be
\mu^{\rho}(\alpha_{t}^*\ps{S}\wedge \ps{T})(V_{\alpha})=
\mu^{\rho}(\ps{T}\wedge \alpha_{(t+i\beta)}^* \ps{S})(V_{\alpha}) =  F_{\ps{T},\ps{S}}(t+i \beta) \,.
\ee 
which is precisely the boundary condition on the measure. 
The fact that $[\mu^{\rho}(\ps{S}\wedge \alpha(z)^* \ps{T})](V_{\alpha})$ is analytic in the complex strip and continuous on the boundary follows from  from the definition of $V_{\alpha}$.\footnote{From now on, the analyticity and continuity properties of condition C2 will be immediate. Furthermore, the notation $V_\alpha$ 
will always refer  to a suitable dense *-sub-algebra  of any context $V$
so that the automorphism $\alpha_{t + i \beta}$ finds a sense.}.
\end{proof}
We now want to relate conditions C1 to invariance of the measure with respect to the automorphisms group. To this end we will utilise Lemma 23 in \cite{gelfandII} which shows that $\at\cdot\mu^{\rho}=\mu^{ \at\rho}=\mu^{\rho\circ \at^{-1}}$ is the measure associated to the state
$\at \cdot\rho$. With this in mind, condition C1 becomes

\ba
[\mu^{\rho}(\at^*\ps{S})](V)&=& \mu_{|V}^{\rho}(\ps{S}\circ \at|_V):=\rho_{|V}\circ \mathfrak{S}_{|V}^{-1}(\at^*\ps{S})(V)\\
&=&\rho_{|V}(\hat{P}_{(\at^*\ps{S})(V)})=\rho_{|V}\circ \at\hat{P}_{\ps{S}_V}\\
&=& \at^{-1}\rho_{|V}\circ \mathfrak{S}_{|V}^{-1}\ps{S}_V\\
&=& [\at^{-1} \cdot \mu^{\rho}(\ps{S})](V)\\
&=&[\mu^{\at^{-1}\rho}(\ps{S})](V)\\
&=&[\mu^{\rho}(\ps{S})](V)
\ea
which is clearly satisfied for KMS states since $\at^{-1}\,\rho=\rho$.
\footnote{Note that in the case of von
Neumann algebras, as stated in Definition \ref{def:0kms}. a KMS state $\rho$ is actually normal thus it can be defined in terms of a trace, i.e. $\rho:=\tr(\varrho-)$. Given such a state conditions C1 and C2 are easily derived. In fact the KMS measure condition C1 becomes
\ba
[\mu^{\rho}(\ps{S})](V)&:=&\tr(\varrho\hat{P}_{\ps{S}_V})\crcr
&=&[\mu^{\rho}(\alpha(t)^*\ps{S})](V):=\tr(\varrho\hat{P}_{(\alpha(t)^*\ps{S})_V})\crcr
&=&\tr(\varrho\hat{P}_{\ps{S}_{\alpha(t)V}})\,.
\ea
This is trivially satisfied for a KMS state $\rho$ in fact, we obtain
\be
[\mu^{\rho}(\ps{S})](V):=\tr(\varrho\hat{P}_{\ps{S}_V})=\tr(\alpha_{-t}\varrho\hat{P}_{\ps{S}_V})=
\tr(\varrho \at\hat{P}_{\ps{S}_V})=\tr(\varrho\hat{P}_{\ps{S}_{\at (V)}})=[\mu^{\rho}(\at^*\ps{S})](V)\,.
\ee
On the other hand, given a KMS state $\rho$ condition C2 is shown to hold:
\ba
[\mu^{\rho}(\at^*\ps{S}\wedge \ps{T})](V_{\alpha})
&=&\tr(\varrho \hat{P}_{(\at^*\ps{S})_{V_{\alpha}}}\hat{P}_{\ps{T}_{V_{\alpha}}})=
\tr(\varrho\, \alpha_{t} \hat{P}_{\ps{S}_{V_{\alpha}}}\hat{P}_{\ps{T}_{V_{\alpha}}})\\
&=&\tr(\varrho \hat{P}_{\ps{T}_{V_{\alpha}}}\alpha_{(t + i\beta)} \hat{P}_{\ps{S}_{V_{\alpha}}})=\tr(\varrho \hat{P}_{\ps{T}_{V_{\alpha}}} 
\hat{P}_{(\alpha_{(t + i\beta)}^*\ps{S})_{V_{\alpha}}})
=[\mu^{\rho}(\ps{T}\wedge \alpha_{(t+i\beta)}^* \ps{S})](V_{\alpha})\,.
\nonumber
\ea
The third equality follows from the second KMS condition on $\rho$. The fact that $[\mu^{\rho}(\ps{S}\wedge \alpha(z)^* \ps{T})](V_{\alpha})$ is analytic in the complex strip and continuous on the boundary follows from the trace properties in the case of normal states.
}

If we now consider the topos formulation of quantum theory defined in \cite{andreas5} and take $\mn=\mb(\mh)$, then the automorphisms group $\alpha$ becomes automatically internal and the KMS state has unique solution as Gibbs state. 
Thus we see that in the case of topos quantum theory defined for $\mn=\mb(\mh)$ our definition of general KMS state reduces to that of Gibbs state as it should do.

Let us now go back to Proposition \ref{pro:kms}, conditions C1 and C2 only refer to conditions on the measure associated with the KMS state. However, since there is an injective correspondence between states and truth objects in the topos $\Sets^{\cvn^\op}$, we can translate the above conditions to conditions on the truth object which should represent the topos analogue of the KMS state. 
In the following, we will focus only on condition C1, but the same
analysis can be performed for C2. The main idea is to define a topos
state using the measure $\mu^\rho$ having well defined properties.

As explained in more details in Appendix \ref{app:truth}, when introducing truth objects, one has to extend the topos to the topos of sheaves $\Sh(\mv(\mn)\times (0,1)_L)$. In this setting,  contexts now become pairs $(V, r)$ with $V\in \cvn$ and $r\equiv (0,r)\in (0,1)_L$ (see Appendix \ref{app:truth} for details pertaining to these notations and also \cite{probabilities}). 

\begin{Definition}[KMS condition C1 on truth objects]
\label{def:truth}
Given a truth object $\tob^{\rho}$, for each $(V, r)\in\cvn\times (0,1)_L$, $\tob^{\rho}$ satisfies
a condition C1 if it is defined by all sub-objects of the spectral presheaf, whose measure with respect to $\mu^{\rho}$ is invariant under $\Aut$.  In symbol, that is:
\be
\tob^{\rho}_{(V,r)}=\{\ps{S}\in 
\Sub(\us_{|\dV})|\,\forall V'\subseteq V,\; 
[\mu^{\rho} (\ps{S})](V')=
[\mu^{\rho}(\at ^*(\ps{S}))](V')\geq r, \;
\forall  \at \in \Aut\}\,.
\ee
\end{Definition}
One notices that this definition implies that the objects $\tob^{\rho}$ and $\at^*\tob^{\rho}$ are equivalent 
under a precise sense:
\begin{Definition}[Topos state $\mu$-equivalence]
\label{def:equi1}
Two truth objects $\tob$ and $\tob'$ are said to be $\mu$-equivalent (or equivalent under the measure $\mu$) iff, at each context $(V,r)$, for each $\ps{S}\in \tob_{(V,r)}$ there exists a $\ps{S}'\in\tob'_{(V,r)}$, such that 
$[\mu(\ps{S})](V')=[\mu(\ps{S}')](V')$, for all $V'\subseteq V$. Then, we write $\tob\simeq_{\mu}\tob'$.
\end{Definition}
This is trivially an equivalence relation. Given a truth object $\tob$, we will then denote its $\mu$-equivalence class by $[\tob]_{\mu}$.
Given the condition  C1 for a topos KMS state $\tob^{\rho}$, it follows that for such a state 
$\at^*\tob^{\rho}\simeq_{\mu^{\rho}} \tob^{\rho}$, for all 
$\at\in \Aut$. However, the KMS condition C1 tells us slightly more. 
Indeed, it also specifies which elements have the same measure, namely: $\mu^{\rho}(\ps{S})=\mu^{\rho}(\at^*\ps{S})$. To implement this extra condition, one  defines the notion of a $\mu$-invariant natural transformation.
\begin{Definition}[$\mu$-invariant natural transformation]
\label{def:equi2}
Given a $\mu$-equivalence topos state class $[\tob]_{\mu}$,  a $\mu$-invariant natural transformation $f_{\mu}$ is a natural transformation between a pair of elements $\tob'$ and $\tob''$
 (representatives of $[\tob]_\mu$)  which is defined, for each context $(V,r)$, as follows
\ba
f_{\mu,V}:\tob'_{(V,r)}&\rightarrow& \tob''_{(V,r)}\crcr
\ps{S}&\mapsto&f_{\mu}\ps{S}
\ea
where $f_{\mu}\ps{S}$ is the unique element such that $[\mu(f_{\mu}\ps{S})](V)=[\mu(\ps{S})](V)$. 
\end{Definition}
To show that, indeed, $f_{\mu}$ as defined above is a natural transformation, we must show that given any two contexts $(V,r)$ and $(V', r')$, such that $i:(V,r)\geq(V', r')$ 
(see Appendix \ref{app:truth}, Equation \eqref{vrvr}), the following diagram commutes
\[\xymatrix{
\tob'_{(V,r)}\ar[rr]_{f_{\mu,V}}\ar[dd]_{\tob'(i)}&&\tob''_{(V,r)}\ar[dd]^{\ps{\mathbb{T}}^{''}(i)}\\
&&\\
\tob'_{(V',r')}\ar[rr]^{f_{\mu,V'}}&&\tob''_{(V',r')}\\
}\]
In one direction, we have
\be
\tob''(i)\circ f_{\mu,V}(\ps{S})=\tob''(i)(f_{\mu}\ps{S})=(f_{\mu}\ps{S})_{|\dV'}\,.
\ee
In the other one, we get
\be
f_{\mu,V'}\circ \ps{\mathbb{T}}^{'}(i)(\ps{S})=f_{\mu,V'}(\ps{S})_{|\dV'}=f_{\mu}(\ps{S})_{|\dV'}\,.
\ee
Here, $\ps{S}\in \Sub(\us_{|\dV})$, therefore $\ps{S}=\ps{S}_{\dV}$ which implies that $(f_{\mu}\ps{S})_{|\dV'}=(f_{\mu}\ps{S}_{|\dV})_{|\dV'}=f_{\mu}(\ps{S})_{|\dV'}$.

We now join Definitions \ref{def:equi1} and \ref{def:equi2} in order to define the notion of strongly $\mu$-equivalence as follows:
\begin{Definition}[Topos state strong $\mu$-equivalence]
Two objects $\ps{\mathbb{T}}^{'}$ and $\ps{\mathbb{T}}$, are strongly $\mu$-equivalent iff they are $\mu$-equivalent and there exists a $\mu$-invariant natural transformation between them. We will denote a strong $\mu$-equivalence class as  $[[\tob]]_{\mu}$.
\end{Definition}

Using this notion, we can write the first KMS condition as:
\begin{Definition}[KMS condition C1 on strong $\mu$-classes]
A truth object $\tob^{\rho}$ satisfies a KMS condition C1
at the strong $\mu$-equivalent class level if, for all $\at\in \Aut$,
\be
[[\tob^{\rho}]]_{\mu^{\rho}}=[[\at^*\,\tob^{\rho}]]_{\mu^{\rho}}\,,
\ee
with $\mu$-invariant natural transformation defined as
\ba
\tob^{\rho}_{(V,r)}&\rightarrow& \at^*\,\tob^{\rho}_{(V,r)}\crcr
\ps{S}&\mapsto&\at^*\,(\ps{S})\,.
\ea
\end{Definition}
As a consequence of the above, for each presheaf $\tob^{\rho}$, all the twisted presheaves derived from the morphism $\at^*$ for all $\at\in \Aut$ are $\mu^{\rho}$-equivalent. In other words, they select 
sub-objects of  the respective state spaces $\us$ which have exactly the same measure.
We can finally read off the full KMS conditions.
\begin{Definition}\label{def:kmstopos}[Topos state KMS conditions]
Given a KMS state $\rho$, the truth object $\ps{\mathbb{T}}^{\rho}$ is the topos analogue of the KMS state if the following conditions are satisfied
\begin{enumerate}
\item For all $\at\in \Aut$,
\be
[[\ps{\mathbb{T}}^{\rho}]]_{\mu^{\rho}}=[[\at^*\ps{\mathbb{T}}^{\rho}]]_{\mu^{\rho}}\,,
\ee
with the $\mu$-invariant natural transformation defined  as
\ba
\tob^{\rho}_{(V,r)}&\rightarrow& \at^*\tob^{\rho}_{(V,r)}\crcr
\ps{S}&\mapsto&\at^*(\ps{S})
\ea
 \item For each context $(V, r)\in (\mv(\mn)\times (0,1)_L)$, given any two elements $\ps{S},\ps{T}\in\tob^{\rho}_{(V, r)}$, the function 
   \be
F_{\ps{T},\ps{S}}(t) = \mu^{\rho} (\ps{T}\wedge\alpha (t)^*\ps{S})](V)
\ee

for all $t\in \Rl$ admits an extension $F_{\ps{T},\ps{S}}(z)$ analytic in the complex strip  $D_\beta = \{z = (t + i y)|\; t\in \Rl, \, y \in ]0,\beta[\}$
and continuous at the boundary of $D_\beta$ such that
\be
F_{\ps{T},\ps{S}}(t+ i\beta) = 
[\mu^{\rho}(\at^*\ps{S}\wedge\ps{T})](V)
\geq r\,,
\ee
\end{enumerate}
\end{Definition}
Alternatively condition 2) above can be stated as follows
\begin{enumerate}
\item [2')] For each context $(V_{\alpha}, r)\in (\mv(\mn)\times (0,1)_L)$, given any two elements $\ps{S},\ps{T}\in\tob^{\rho}_{(V_{\alpha}, r)}$ the function 
 \be
F_{\ps{T},\ps{S}}(t+ i\beta) := [\mu^{\rho}(\alpha(t)^*\ps{T}\wedge\ps{S})](V'_{\alpha})=[\mu^{\rho}(\ps{S}\wedge\alpha(t+i\beta)^*\ps{T})](V'_{\alpha})\geq r\,.
\ee
where $V'_{\alpha}\subseteq V_{\alpha}$. 
$F_{\ps{T},\ps{S}}(t+ i\beta)$ is analytic in the complex strip $D_\beta = \{z = (t + i y)|\; t\in \Rl, \, y \in ]0,\beta[\}$ and continuous at the boundary of $D_\beta$.
\end{enumerate}

\subsection{Deriving the canonical KMS state from the topos KMS state}
\label{sect:topotocan}

As noticed in the previous analysis, the definition of the
 topos KMS state follows from the very existence of an initial KMS state
(that we call canonical). 
We are now interested in proving the reciprocal, i.e. starting with the topos KMS state definition and derive the canonical KMS state. To this end, we will use the fact that there is a 1:1 correspondence between states $\rho$ and truth objects $\tob^{\rho}$. Since $\ps{\t}^{\rho}$ is defined through the measure $\mu^{\rho}$ on $\us$, we will first 
concentrate on the link between the topos KMS measure and the
canonical  KMS state. 

The following statement holds:

\begin{Theorem}\label{the:derivekms}
Consider a  measure $\mu$ and a one parameter automorphism group $\Aut=\{\alpha(t)|\,t\in\Rl\}$ on $\mn$, whose action on the category $\cV(\mn)$ is
\ba
\Aut\times \cV(\mn)&\rightarrow& \cV(\mn)\crcr
(\alpha(t), V)&\mapsto&[\alpha(t)](V)
\ea
and such that $\Aut$ can be extended to a complex strip $\{\alpha(t+i\gamma)|\,t\in\Rl,\, \gamma\in[0,\beta]\}$. 
 If $\mu$ satisfies the following conditions
\be\label{equ:1cond}
\widetilde{C1}: \qquad \mu(\ps{S})=\mu(\alpha(t)^*\ps{S})
\ee
and $\widetilde{C2}:$ for all $\ps{S}$, $\ps{T}$ in $\Sub$ and $V_{\alpha}\in \mv(\mn)$, if $F_{\ps{T},\ps{S}}(z): = [\mu(\ps{T}\wedge\alpha(z)^*\ps{S})](V_{\alpha})$ is analytic in the complex open 
strip $D_\beta$ and continuous at the boundary of $D_\beta$ such that
 \be\label{equ:2cond}
F_{\ps{T},\ps{S}} (t+ i \beta): =[\mu(\ps{T}\wedge\alpha(t+ i \beta)^*\ps{S})] (V_{\alpha})=
[\mu(\alpha(t)^*\ps{S}\wedge\ps{T})](V_{\alpha})\,,
\ee
then the state $\rho^{\mu}$ associated to such a topos measure is the KMS state associated to the algebra $\mn$.
\end{Theorem}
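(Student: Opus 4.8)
The plan is to \emph{invert} the construction of Section~\ref{subsect:kms}. Given the measure $\mu$, the $1{:}1$ correspondence between measures on $\Sub(\us)$ and normal states recalled above produces a state $\rho^{\mu}$ on $\mn$ characterised context-wise by $\rho^{\mu}_{|V}=\mu_{|V}\circ\mathfrak{S}_{V}$, i.e.\ $\rho^{\mu}(\hat{P})=[\mu(\mathfrak{S}_{V}(\hat{P}))](V)$ for every projection $\hat{P}\in\mathcal{P}(V)$; consistency of these assignments across the poset $\cvn$ (which is precisely what makes $\mu$ a global section of $\ps{[0,1]}$) is what glues them into a genuine normal state on $\mn$. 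It then suffices to check that $\rho^{\mu}$ satisfies the two clauses of Definition~\ref{def:0kms} (equivalently Definition~\ref{def:1kms}), and each check is the argument used to prove Proposition~\ref{pro:kms}, read backwards.

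For clause~1 (invariance under $\Aut$) I would start from projections: for $\hat{P}\in\mathcal{P}(V)$ one has $\at\hat{P}\in\mathcal{P}(\at V)$, and the commuting square relating $\mathfrak{S}_{V}$, $\mathfrak{S}_{\at V}$ and the $\at$-action (established just before the proof of Proposition~\ref{pro:kms}) identifies $\mathfrak{S}_{\at V}(\at\hat{P})$ with the value of $\at^{*}\mathfrak{S}_{V}(\hat{P})$. Feeding this into $\widetilde{C1}$ gives $\rho^{\mu}(\at\hat{P})=[\mu(\at^{*}\ps{S})](V)=[\mu(\ps{S})](V)=\rho^{\mu}(\hat{P})$ with $\ps{S}=\mathfrak{S}_{V}(\hat{P})$; since every self-adjoint element of $\mn$ is a norm limit of real linear combinations of its projections and $\rho^{\mu}$ is normal, this extends to $\rho^{\mu}(\at A)=\rho^{\mu}(A)$ for all $A\in\mn$. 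For clause~2 I would translate $\widetilde{C2}$ through $\mathfrak{S}^{-1}$ at a context $V_{\alpha}$: because $V_{\alpha}$ is abelian the meet $\ps{T}\wedge\at^{*}\ps{S}$ corresponds to the \emph{product} $\hat{P}_{\ps{T}_{V_{\alpha}}}\,\at\hat{P}_{\ps{S}_{V_{\alpha}}}$, so $\widetilde{C2}$ says exactly that $t\mapsto\rho^{\mu}\big(\hat{P}_{\ps{T}_{V_{\alpha}}}\,\alpha(t)\hat{P}_{\ps{S}_{V_{\alpha}}}\big)$ extends analytically to $D_{\beta}$, is continuous on $\overline{D_{\beta}}$, and obeys $\rho^{\mu}\big(\hat{P}_{\ps{T}_{V_{\alpha}}}\,\alpha(t+i\beta)\hat{P}_{\ps{S}_{V_{\alpha}}}\big)=\rho^{\mu}\big(\alpha(t)\hat{P}_{\ps{S}_{V_{\alpha}}}\,\hat{P}_{\ps{T}_{V_{\alpha}}}\big)$ --- which is precisely the relation $F_{A,B}(t+i\beta)=\rho^{\mu}(\at B\,A)$ recorded after Definition~\ref{def:1kms}, for the pair $A=\hat{P}_{\ps{T}_{V_{\alpha}}}$, $B=\hat{P}_{\ps{S}_{V_{\alpha}}}$.

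Since $\ps{S},\ps{T}$ range over all of $\Sub(\us)$ and $V_{\alpha}$ over all (dense $*$-subalgebras of) contexts, the previous step delivers the KMS analyticity and boundary relation for \emph{every pair of commuting projections}, hence, by bilinearity and norm-continuity of $\rho^{\mu}$, for every pair of elements of $\mn$ lying in a common abelian context. The hard part --- and the step I would treat most carefully --- is upgrading this to the \emph{full} KMS condition on the non-abelian $\mn$: for $A,B\in\mn$ that do not lie in any common context, $A$ and $\at B$ cannot be simultaneously spectrally resolved, so the identity does not reduce verbatim to commuting-projection data (this is the exact mirror of the fact that the proof of Proposition~\ref{pro:kms} only ever used projection pairs inside a single $V_{\alpha}$). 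I would close this gap by passing to a $\sigma$-weakly dense $\alpha$-invariant $*$-subalgebra $\mn_{\alpha}\subseteq\mn$ as in Definition~\ref{def:1kms}, using that a normal state is determined by its values on projections to assemble the candidate functions $F_{A,B}(z)$ from the context-wise pieces, and then invoking a Vitali/Montel-type argument together with the density of $\mn_{\alpha}$ to show these analytic continuations patch consistently and still satisfy $F_{A,B}(t+i\beta)=\rho^{\mu}(\at B\,A)$ throughout $\mn_{\alpha}$; this globalization across non-commuting contexts is where the real work lies.
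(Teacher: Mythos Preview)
Your approach is essentially the same as the paper's: both pass from the measure $\mu$ to a finitely additive probability measure $m$ on projections via $m(\hat{P}):=[\mu(\ps{S}_{\hat{P}})](V)$, invoke the generalised Gleason theorem to obtain the unique normal state $\rho_m$, and then read $\widetilde{C1}$ and $\widetilde{C2}$ through $\mathfrak{S}^{-1}$ to get $\rho_m(\hat{P})=\rho_m(\at\hat{P})$ and $\rho_m(\at\hat{P}\hat{R})=\rho_m(\hat{R}\alpha_{t+i\beta}\hat{P})$ on projections in $\mn_{\alpha}$.

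The difference is that the paper's proof \emph{stops there}: once the two conditions hold on projections of $\mn_{\alpha}$, the proof is declared finished, with the passage to general $A,B\in\mn_{\alpha}$ left implicit (via linearity, normality of $\rho_m$, and the spectral theorem). You, by contrast, flag this passage as ``the hard part'' and sketch a Vitali/Montel patching argument to globalise across non-commuting contexts. This extra caution is reasonable --- the paper is indeed silent on how to handle pairs $A,B$ not lying in a common abelian $V_{\alpha}$ --- but it is not something the paper's own argument supplies; in that sense your proposal is \emph{more} careful than the original, not less. If you are matching the paper, you can safely truncate after the projection-level identities and cite Gleason plus density of $\mn_{\alpha}$; if you want a genuinely complete proof, your final paragraph is pointing at a real issue that neither you nor the paper fully resolves.
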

\begin{proof}
In order to define a state from a measure, we apply Theorem \ref{theo:measure} (see Appendix \ref{subsect:statfromes}). In particular, \eqref{equ:defmu} which can be written 
 \be
m(\hat{P}):=[\mu(\ps{S}_{\hat P})](V)
\ee
will be useful. Since our measure satisfies conditions \eqref{equ:1cond} and \eqref{equ:2cond}, it can be shown that
 \be
m(\hat{P}):=[\mu(\ps{S}_{\hat P})](V)=[\mu(\at^*\ps{S}_{\hat P})](V)=[\mu(\ps{S}_{\hat P})](\at(V))=m(\at\hat{P})\,.
\ee
Because $\mu$ is a finitely additive probability measure, by construction $m:\mP(\mn)\rightarrow [0,1]$ will be finitely additive on the projections in $\mn$, hence we can apply the generalised version of Gleason's theorem\footnote{Gleason Theorem tells us that the only possible probability measures on Hilbert spaces of dimension at least 3 are measures of the form $\mu(P)=\Tr(\rho P)$, where $\rho$ is a positive semidefinite self-adjoint operator of unit trace. This theorem was extended to a von-Neumann algebra $\mn$ in \cite{Gleason} where the author shows that, provided $\mn$ contains no direct summand of type $I_2$, then every finitely additive probability measure on $\mP(\mn)$ can be uniquely extended to a normal state on $\mn$. The general form of Gleason theorem is as follows
\begin{Theorem}
Assume that $dim(\mh)\geq 3)$ and let $\mu$ be a $\sigma$-additive probability measure on $P(\mb(\mh))$ then the following three statement hold
\begin{enumerate}
\item $\mu$ is completely additive.
\item $\mu$ has support.
\item There exists a positive operator $x\in\mb(\mh)$ of trace class such that $tr(x)=1$ and $\mu(e)=tr(xe)$ for $e\in P(\mb(\mh))$
\end{enumerate}
\end{Theorem}
}\cite{maeda} and obtain the unique state $\rho_m:\mn\rightarrow \Cl$ such that $\rho_m|_{\mP(\mn)}=m$. Hence

\be
m(\hat{P})=\rho_m|_{\mP(\mn)}(\hat{P})=m(\at\hat{P})=\rho_m|_{\mP(\mn)}(\at\hat{P})\,.
\ee
Similarly, for the second condition, we have
 \be
m(\at\hat{P}\hat{R})=
[\mu(\at^*\ps{S}_{\hat P}\wedge
\ps{T}_{\hat R})](V_{\alpha})=[\mu(\ps{T}_{\hat R}\wedge\alpha_{t+i\beta}^*\ps{S}_{\hat P})](V_{\alpha})
=m(\hat{R}\alpha_{t+i\beta}\hat{P})
\ee
which translates to
  \be
\rho_m|_{\mP(\mn_\alpha)}(\at\hat{P}\hat{R})=\rho_m|_{\mP(\mn_\alpha)}( \hat{R}\alpha_{t+i\beta}\hat{P})\,.
\ee 
where $\mn_{\alpha}$ is a $\sigma$-weakly dense $\alpha$-invariant $*$-sub-algebra of $\mn$ as defined in Definition \ref{def:1kms}.
\end{proof}

We are in position to derive the canonical KMS state from the topos KMS state. 
\begin{Theorem}
Given a topos state $\tob^{\rho}$ defined via the topos
measure $\mu^\rho$ satisfying $\widetilde{C1}$ and $\widetilde{C2}$, the unique state $\tilde\rho$ corresponding to $\mu^\rho$ is a KMS state in the canonical sense.
\end{Theorem}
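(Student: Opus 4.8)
The plan is short, since the statement is essentially a corollary of Theorem \ref{the:derivekms} once the relevant $1{:}1$ correspondences are invoked. First I would use the $1{:}1$ correspondence between truth objects and states to read off the measure $\mu^{\rho}$ from $\tob^{\rho}$, and then Theorem \ref{theo:measure} of Appendix \ref{subsect:statfromes}, together with the generalised Gleason theorem, to obtain the unique normal state $\tilde\rho$ on $\mn$ with $\tilde\rho|_{\mP(\mn)}=m$, where $m(\hat P):=[\mu^{\rho}(\ps{S}_{\hat P})](V)$. Since by hypothesis $\mu=\mu^{\rho}$ satisfies $\widetilde{C1}$ and $\widetilde{C2}$, Theorem \ref{the:derivekms} immediately yields, for all $\at\in\Aut$, the identities
\be
\tilde\rho(\at\hat P)=\tilde\rho(\hat P)\ \ (\hat P\in\mP(\mn))\,,\qquad
\tilde\rho(\hat R\,\alpha_{t+i\beta}\hat P)=\tilde\rho(\at\hat P\,\hat R)\ \ (\hat P,\hat R\in\mP(\mn_{\alpha}))\,.
\ee

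It then remains only to pass from projections to arbitrary operators, so that these identities become literally the hypotheses of Definition \ref{def:1kms}. Because $\tilde\rho$ is normal it is $\sigma$-weakly continuous, and (again by the generalised Gleason theorem) of the form $\tilde\rho=\tr(\varrho\,\cdot\,)$ for a density matrix $\varrho$; the first identity then forces $\at^{-1}\varrho=\varrho$, hence $\tilde\rho(\at A)=\tilde\rho(A)$ for all $A\in\mn$, which is condition 1 of Definition \ref{def:0kms}. For the boundary identity I would set $F_{A,B}(z):=\tilde\rho(A\,\alpha_z(B))$ for $A,B\in\mn_{\alpha}$; this is analytic on the open strip and continuous on its closure because $z\mapsto\alpha_z(B)$ is an $\mn$-valued strip-analytic function on $\mn_{\alpha}$ and $\tilde\rho(A\,\cdot\,)$ is $\sigma$-weakly continuous. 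Using bilinearity of $(A,B)\mapsto\tilde\rho(A\,\alpha_z(B))$ and $\sigma$-weak density of finite linear combinations of projections, the second identity above propagates to $F_{A,B}(t+i\beta)=\tilde\rho(\at(B)\,A)$, and renaming $B'=\at(B)$ exactly as in the remark following Definition \ref{def:1kms} gives $\tilde\rho(A\,\alpha_{i\beta}(B'))=\tilde\rho(B'A)$ on $\mn_{\alpha}$, i.e.\ precisely \eqref{equ:second}. By the equivalence of Definitions \ref{def:0kms} and \ref{def:1kms}, $\tilde\rho$ is then a canonical KMS state.

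The one step that needs real care — and which I regard as the main obstacle — is this last lift. Theorem \ref{the:derivekms} delivers the boundary identity only on the projection lattice of a dense subalgebra, whereas Definition \ref{def:1kms} wants it on the subalgebra itself; one therefore has to check that the $\sigma$-weakly dense $\alpha$-invariant $*$-sub-algebra can be chosen so that the linear span of its projections is again $\sigma$-weakly dense (and $\alpha$-invariant), which is what makes the density/continuity argument close without shrinking the domain below what the canonical condition requires. This should follow from the fact that an $\alpha$-invariant von Neumann subalgebra contains the spectral projections of each of its elements and that normal states are separating on it, but it is the point I would write out in detail, alongside the routine verification that $m(\hat P)$ is independent of the context $V$ used to evaluate it — a consequence of the compatibility of the restricted states $\rho_{V}$ under $\mathfrak{S}^{-1}$.
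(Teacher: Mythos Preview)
Your approach is essentially the paper's: extract the measure conditions $\widetilde{C1}$ and $\widetilde{C2}$ and invoke Theorem~\ref{the:derivekms}. The paper's proof differs only cosmetically in that it first unpacks the truth-object description --- writing out $\tob^{\rho}_{(V,r)}$ and $\at^*\tob^{\rho}_{(V,r)}$ and using the $\mu$-invariant natural transformation to read off $[\mu^{\rho}(\at^*\ps{S})](V)=[\mu^{\rho}(\ps{S})](V)$ and the analogous boundary identity --- and then applies Theorem~\ref{the:derivekms} to those two equations, full stop.

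Where you go beyond the paper is the final paragraph: the lift from projections to arbitrary elements of $\mn_{\alpha}$. The paper does \emph{not} carry this out; its proof of Theorem~\ref{the:derivekms} ends with the identities on $\mP(\mn)$ and $\mP(\mn_{\alpha})$ and simply declares the associated state to be KMS, so the present theorem's proof inherits that gap. Your density/continuity argument is the natural way to close it, and your caveat about needing the $\alpha$-invariant subalgebra to be generated (in the $\sigma$-weak sense) by its projections is exactly the point the paper leaves implicit. In short: your proof is the paper's proof plus the honest bookkeeping the paper omits.
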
 
\begin{proof}
The first condition on $\tob^{\rho}$ is 
\be
[[\ps{\mathbb{T}}^{\rho}]]_{\mu^{\rho}}=[[\at^*\ps{\mathbb{T}}^{\rho}]]_{\mu^{\rho}}\,,
\ee
where the $\mu$-invariant natural transformation is defined for each $\at$ as
\ba
\tob^{\rho}&\rightarrow& \at^*\,\tob^{\rho}\\
\ps{S}&\mapsto&\at^*(\ps{S})
\ea
Given two elements $\tob^{\rho}$ and $\at^*\,\tob^{\rho}$ in $[[\ps{\mathbb{T}}^{\rho}]]_{\mu^{\rho}}$, for each context $(V, r)$ the local components are, respectively,
\be
\tob^{\rho}_{(V,r)}=\{\ps{S}\in \Sub(\us_{|\dV})|\;\forall V'\subseteq V,\, [\mu^{\rho}(\ps{S})](V)\geq r\}
\ee
and
\be
\at^*\,\tob^{\rho}_{(V,r)}=\{\at^*\ps{S}\in \Sub(\at^*\us_{|\dV})|\;\forall V'\subseteq V,\, [\mu^{\rho}(\at^*\ps{S})](V)\geq r\}\,.
\ee
From the definition of the $\mu$-invariant natural transformation, we now know how to identify objects with the same measure: 
\be\label{equ:condition1}
 [\mu^{\rho}(\at^*\ps{S})](V)= [\mu^{\rho}(\ps{S})](V)\,.
\ee
The second condition on the topos state tells us that, for all $V_{\alpha}\in\cvn$,
\be\label{equ:condition2}
[\mu^{\rho}(\at^*\ps{S}\wedge\ps{T})](V_{\alpha})=[\mu^{\rho}(\ps{T}\wedge\alpha_{t+i\beta}^*\ps{S})](V_{\alpha})
\ee
Applying Theorem \ref{the:derivekms} to conditions \eqref{equ:condition2} and \eqref{equ:condition1}, one infers that $\rho$ is indeed a KMS state.
\end{proof}

\subsection{Expectation Values}
\label{subsect:inner}

In this section we would like to analyse the consequences of the measure KMS condition on expectation values as defined in topos quantum theory. In particular, we seek an analogue condition for $\langle \phi, \hat{A}\rangle=\langle \phi, \at\hat{A}\rangle$ in the topos formulation which is compatible with the first condition on the measure. A primary notion towards this 
generalization would be the one of expectation value in the topos formalism
\cite{probabilities}. Let us review quickly the main features
of that latter notion.

Consider the measure $\mu^{\rho}$ associated to a state $\rho$.
According to \eqref{ali:measure}, the following quantity  
\be
\mu^{\rho}(\ps{\delta(\hat{P})}):\cvn\rightarrow [0,1]
\ee
is an OR function such that, for all $V\in \cvn$, $[\mu^{\rho}(\ps{\delta(\hat{P})})](V):=\rho( \dase(\hat{P})_{V})$
(daseinisation $\delta$ and $\dase$ have been defined in Appendix
\ref{subsect:proj}). Thus, if $V$ is such that $\hat{P}\in V$, then $\dase(\hat{P})_V=\hat{P}$ and $[\mu^{\rho}(\ps{\delta(\hat{P})})](V)=\tr(\varrho\,\hat{P})$. For all other contexts $V'$, such that $\hat{P}\notin V'$, then $\dase(\hat{P})_{V'}>\hat{P}$, therefore $[\mu^{\rho}(\ps{\delta(\hat{P})})](V')>\rho(\hat{P})$.
It follows that the expectation value $E(\hat{P},\rho)$ is the minimum of the function $\mu^{\rho}(\ps{\delta(\hat{P})}):\cV(\mathcal{N})\rightarrow [0,1]$, i.e.
\be
E(\hat{P},\rho)=\langle\rho\,;\hat{P}\rangle= \rho(\hat{P})=\min_{V\in\cV(\mathcal{N})}[\mu^{\rho}(\ps{\delta(\hat{P})})](V)\,.
\label{eprho}
\ee 
Consider a self-adjoint operator $\hat{A}\in \mathcal{N}$ which can be written in the form 
\be
\hat{A}=\sum_{i}^na_i\hat{P}_i\,, \quad a_i \in \Rl\,,
\ee
for pairwise orthogonal projections $\hat{P}_i\in \mP(\mathcal{N})$. Given a state $\rho$ and using \eqref{eprho}, the expectation value of $\hat{A}$ is then
\be
E(\hat{A}, \rho)=\langle\rho\,;\hat{A}\rangle=\sum_{i}^na_i
\langle\rho\,;\hat{P}_i \rangle =\sum_i^na_i \min_{V\in\cvn}[\mu^{\rho}(\ps{\delta(\hat{P}_i)})](V)\,.
\ee
It follows that the topos analogue of the KMS condition $\langle \phi, \hat{A}\rangle=\langle \phi, \at\hat{A}\rangle$ is
\ba
\label{ali:tkms}
E(\hat{A},\rho)&=&E(\at\hat{A}, \rho)\,,\crcr
\sum_i^na_i \min_{V\in\cvn}[\mu^{\rho}(\ps{\delta(\hat{P}_i)})](V) &=&\sum_i^na_i \min_{V\in\cvn}[\mu^{\rho}(\ps{\delta(\at\hat{P}_i)})](V)\,.
\ea
We can relate this condition to the condition on the measure defined in Proposition \ref{pro:kms}. To this end, we first analyze the effect that the group elements $\at$ have on daseinisation. In particular, we consider
 \ba
\at[\dase(\hat{P})(V)]&=&\at\big(\bigwedge\{\hat{R}\in \mP(V)|\,\hat{R}\geq \hat{P}\}\big)\crcr
&=&\bigwedge\{\at \hat{R}\in \mP(\at V)|\,\at \hat{R}\geq \at \hat{P}\}\nonumber\\
&=&\dase(\at \hat{P})(\at V)\;.\nonumber
\ea 
From this, it follows that, given a KMS state $\rho$,
\ba
[\mu^{\rho}(\at^*\ps{\delta(\hat{P})})](V)&=&\rho\big(\at^*\ps{\delta(\hat{P})}(V)\big)\crcr
&=&\rho(\ps{\delta(\hat{P})}(\at V))=\at\, \rho(\ps{\delta(\hat{P})}(\at V))\crcr
&=&\rho\circ \at^{-1}(\ps{\delta(\hat{P})}(\at V))=\rho(\ps{\delta(\at^{-1}\hat{P})}(V))\crcr
&=&[\mu^{\rho}(\ps{\delta(\at^{-1}\hat{P})})](V)\,,
\ea
where the third and fourth equality follow from the fact that $\rho$ is a KMS state\footnote{ It is worth noting that the last equation in \eqref{ali:tkms} can also be written as $\sum_i^na_i \min_{V\in\cvn}\mu^{\rho}(\ps{\delta(\at \hat{P}_i)})(\at V)$, 
since $\at:\cV(\mn)\rightarrow\cV(\mn)$. 
Writing the equation in such a way and using the fact that $\dase(\at\hat{P})(\at V)=\at[\dase(\hat{P})(V)]$, we obtain that $[\mu^{\rho}(\at^*\ps{\delta(\hat{P})})](V)=[\mu^{\rho}(\ps{\delta(\at \hat{P})})](\at V)$, 
hence $[\mu^{\rho}(\ps{\delta(\at \hat{P})})](\at V)=[\mu^{\rho}(\at^*(\ps{\delta(\hat{P})})](V)$.}. 
Thus
\be
[\mu^{\rho}(\ps{\delta(\at \hat{P})})](V)=[\mu^{\rho}(\at^{-1})^*(\ps{\delta(\hat{P})})](V)\,.
\ee
Putting this result in \eqref{ali:tkms}, we obtain the topos analogue of the first KMS condition to be
\be
\sum_i^na_i \min_{V\in\cvn}[\mu^{\rho}(\ps{\delta(\hat{P}_i)})](V)=\sum_i^na_i \min_{V\in\cvn}[\mu^{\rho}(\at^{-1})^*\ps{\delta(\hat{P}_i)})](V)\,.
\ee
This is in accordance with the result obtained above, namely for KMS states $\rho$, the measure is invariant under transformation of the automorphism group. 

The second condition of Proposition \ref{pro:kms} can be related 
to the second condition on the KMS state. 
Considering the product of two operators $\hat{A}\hat{B}$ where $A, B$ are in a norm dense $\alpha$-invariant *-subalgebra $V_{\alpha}$ of $\mn$. Each such operator can be written as $\hat{A}=\sum_i^na_i\hat{P}_i$ and $\hat{B}=\sum_j^mb_j\hat{P}_j$.
Since $\hat{P}_i\hat{P}_j=\hat{P}_i\wedge\hat{P}_j$, in the topos 
framework, the relevant objects will be of the form 
$\dase(\hat{P}_i\wedge\hat{P}_j)(V)$, for some context $V$. 
In general, it is true that $\dase(\hat{P}_i\wedge\hat{P}_j)(V)\leq \dase(\hat{P}_i)(V)\wedge\dase(\hat{P}_j)(V)$, however, for those contexts $V$ for which $\hat{P}_i\wedge \hat{P}_j,\hat{P}_i, \hat{P}_j\in V$, then the same relation trivially reduces to $\dase(\hat{P}_i\wedge\hat{P}_j)(V)=\hat{P}_i\wedge\hat{P}_j$. Clearly such contexts will be of the form $V_{\alpha}$. 
Moreover, if $\hat{P}_i, \hat{P}_j\in V$, then we have
\be
\dase(\hat{P}_i\wedge\hat{P}_j)(V)\geq \dase(\hat{P}_i)(V)\wedge\dase(\hat{P}_j)(V)=\hat{P}_i\wedge\hat{P}_j\,.
\ee
On the other hand, if $\hat{P}_i\wedge \hat{P}_j\in V,$ then
\be
\dase(\hat{P}_i\wedge\hat{P}_j)(V)=\hat{P}_i\wedge\hat{P}_j\leq \dase(\hat{P}_i)(V)\wedge\dase(\hat{P}_j)(V)\,.
\ee
Hence, computing the minimum with respect to all contexts $V\in\cvn$, the two expressions yield the same result.
Thus, the expectation value of $\hat{P}_i\wedge\hat{P}_j$ is
 \be
E(\hat{P}_i\wedge\hat{P}_j, \rho)=\min_{V\in\cvn}
[\mu^{\rho}(\ps{\delta(\hat{P}_i\wedge \hat{P}_j}))](V) =\min_{V\in\cvn}[\mu^{\rho}(\ps{\delta(\hat{P}_i)}\wedge \ps{\delta(\hat{P}_j)})](V)=\min_{V_{\alpha}\in\cvn}[\mu^{\rho}(\ps{\delta(\hat{P}_i)}\wedge \ps{\delta(\hat{P}_j)})](V_{\alpha})\,.
\ee
where the last equality follows since we have assumed that 
$\hat{A}, \hat{B}\in V_{\alpha}$ .
In this situation, we obtain
\be
E(\hat{A}\hat{B}, \rho)=\sum_i^n\sum_j^m a_ib_j\min_{V\in\cV(\mathcal{N})}\mu^{\rho}(\ps{\delta(\hat{P}_i)}\wedge \ps{\delta(\hat{P}_j)})(V)\,.
\ee
We would like to mimicking the KMS condition $\langle\rho, \hat{A}\alpha_{t+i\beta}\hat{B}\rangle=\langle\rho, \at\hat{B}\hat{A}\rangle$ for $A, B$ in some $V_{\alpha}$. However, since $\langle\rho, \hat{A}\alpha_{t+i\beta}\hat{B}\rangle=\langle\rho, \hat{A}\alpha_{i\beta}\at\hat{B}\rangle$, then condition  
$\langle\rho, \hat{A}\alpha_{t+i\beta}\hat{B}\rangle=\langle\rho, \at\hat{B}\hat{A}\rangle$ simply reduces to $\langle\rho, \hat{A}\alpha_{i\beta}\hat{B}\rangle=\langle\rho,\hat{B}\hat{A}\rangle$. In topos language this gets translated, for all $V_{\alpha}\in\mv(\mn)$, to
\be
\sum_i^n\sum_j^m a_ib_j\min_{V_{\alpha}\in\cvn}[\mu^{\rho}(\ps{\delta(\hat{P}_i)}\wedge \ps{\delta(\alpha_{i\beta}\hat{P}_j)})](V_{\alpha})=
\sum_i^n\sum_j^m a_ib_j\min_{V_{\alpha}\in\cvn}[\mu^{\rho}(\ps{\delta(\hat{P}_j)}\wedge\ps{\delta(\hat{P}_i)} )](V_{\alpha})
\ee
which is equivalent to
 \be
\sum_i^n\sum_j^m a_ib_j\min_{V_{\alpha}\in\cV(\mathcal{N})}\mu^{\rho}(\ps{\delta(\hat{P}_i)}\wedge \alpha_{-i\beta}^*\ps{\delta(\hat{P}_j)})(V_{\alpha})=\sum_i^n\sum_j^m a_ib_j\min_{V_{\alpha}\in\cV(\mathcal{N})}\mu^{\rho}(\ps{\delta( \hat{P}_j)}\wedge \ps{\delta(\hat{P}_i)})(V_{\alpha})\,.
\ee 
If we consider for a moment two general sub-objects of the state space: $\ps{S},\ps{T}\subseteq \us$, then the above condition translates to the following 
\be\label{equ:8.44}
[\mu^{\rho}(\ps{S}\wedge (\alpha_{-i\beta}^*\ps{T})](V_{\alpha})=[\mu^{\rho}(\ps{T}\wedge\ps{S})](V_{\alpha})
\ee
Replacing $\alpha_{-i\beta}$ by $\alpha_{i\beta}$ equation \eqref{equ:8.44} becomes 
\be
\mu^{\rho}(\ps{S}\wedge \alpha_{i\beta}^*\ps{T})(V_{\alpha})=\mu^{\rho}(\ps{T}\wedge\ps{S})(V_{\alpha})
\ee
which is the topos analogue of the second KMS conditions: 
$\langle\rho, \hat{A}\,\alpha_{i\beta}\hat{B}\rangle=\langle\rho, \hat{B}\hat{A}\rangle$ and is in agreement with Proposition \ref{pro:kms}.

\subsection{Consequences on truth values}
\label{subsect:cons}
We now discuss the consequences of the KMS condition at the topos level. Given $\tob^{\rho}$, we compute, 
for all contexts $(V,r)\in \cV(\mn)\times (0,1)_L$,
\be
v(\ps{\delta\hat{P}}\in\ps{\mathbb{T}}^{\rho})(V,r)=\{(V',r)\leq (V,r)|\,\mu^{\rho}(\ps{\delta(\hat{P})})(V')\geq r\}\,.
\ee
Applying a transformation by $\at\in \Autr$ on this set, 
we have
\ba
\at\, v(\ps{\delta\hat{P}}\in\tob^{\rho})(V,r)&=&\at\{(V',r)\leq (V,r)|\,\mu^{\rho}(\ps{\delta(\hat{P})})(V')\geq r\}\crcr
&=&\{(\at V',r')\leq (\at V,r)|\,
[\mu^{\rho}(\ps{\delta(\hat{P})})](V')\geq r'\}\crcr
&=&\{(\at V',r')\leq (\at V,r)|\,
[\mu^{\rho}(\ps{\delta(\at\hat{P})})](\at V')\geq r'\}\crcr
&=&\{(\at V',r')\leq (\at V,r)|\,
[\mu^{\rho}(\at^*\ps{\delta(\hat{P})})](V)\geq r'\}\crcr
&=&v(\at^*\ps{\delta(\hat{P})}\in\tob^{\rho})(\at V,r)\,.
\ea
However $v\big(\at^*\ps{\delta(\hat{P})}\in\tob^{\rho}\big)(\at V,r)=v\big(\at^*\ps{\delta(\hat{P})}\in\tob^{\rho}(\at V,r)\big)=v\big(\at^*\ps{\delta(\hat{P})}\in\at^*\ps{\t}^{\rho}(V,r)\big) =v\big(\at^*\ps{\delta(\hat{P})}\in\at^*\ps{\t}^{\rho}\big)(V,r)$. Since $\mu^{\rho}(\at^*\ps{\delta(\hat{P})})(V)=\mu^{\rho}(\ps{\delta(\hat{P})})(V)$,  we get 
\ba
v(\at^*\ps{\delta(\hat{P})}\in\ps{\t}^{\rho})(V,r)&=&\{(\at V',r')\leq (\at V,r)|\,\mu^{\rho}(\at^*\ps{\delta(\hat{P})})(V)\geq r'\}\crcr
&=&\{(\at V',r')\leq \langle\at V,r)|\,\mu^{\rho}(\ps{\delta(\hat{P})})(V)\geq r'\}\crcr
&=&v(\ps{\delta(\hat{P})}\in\ps{\mathbb{T}}^{\rho})(V,r)\,.
\ea
Combining the two results, it follows that, given a KMS state $\tob^{\rho}$, then, for each context $(V,r)\in \cvn\times (0,1)_L$, the truth values of a given proposition $\ps{\delta(\hat{P})}$ are invariant under $\Autr$, i.e.
\be
v(\ps{\delta\hat{P}}\in\tob^{\rho})(V,r)=\at\,v(\ps{\delta\hat{P}}\in\tob^{\rho})(V,r)\,.
\ee

\subsection{Application: Topos KMS state on $\Cl^{3}$}
\label{subsect:applic}

In this section, we give an example of the topos analogue of the KMS state with the external condition. Our starting point will be the three dimensional Hilbert space $\Cl^3$ with orthonormal basis denoted by $\{|i\rangle\}$, $i=1,2,3$. In this context, $\mb(\mh)=M_3(\Cl)$, while our KMS state density\footnote{Note here our state $\rho$ is normal.} is $\varrho:=\sum_{i=1}^3a_i\hat{P}_i$ where $\hat{P}_i=|i\rangle\langle i|$ are the projection operators spanning $\Cl^3$
and $\sum_{i=1}^3 a_i =1$, $a_i\in [0,1]$. The associated state $\rho$ is $\rho:=\sum_{i=1}^3a_i^{\frac{1}{2}}\hat{P}_i$ which is is cyclic and separating. In fact given any $\hat{X}\in M_3(\Cl)$, if $X$ is orthogonal to all $\hat{A}\rho$ ($\hat{A}\in M_3(\Cl)$ then 
\be
\tr(\hat{X}^*\hat{A}\rho)=\sum_{i=1}^3a_i^{\frac{1}{2}}\langle i|\hat{X}^*\hat{A}i\rangle=0\,, \quad \forall \hat{A}\in  M_3(\Cl)\,.
\ee
In particular taking $A=\hat{P}_i$, then it follows that $\langle i|\hat{X}^*i\rangle=0$ thus $\hat{X}=0$. Similarly one can prove that $\rho$ is cyclic\footnote{Note that it is a standard result that given a faithful normal state on a von-Neumann algebra, the corresponding state is cyclic and separating.}. 
The automorphisms considered are of the form 
$\alpha_\rho(t)A = e^{it \hat H}A e^{-it \hat H}$, 
for all $A\in \mb(\mh)$,
where $\varrho=e^{-\beta \hat H}$, and $\hat H=(-1/\beta)\sum_{i=1}^3 \ln a_i \hat{P}_i$.

For simplicity, we will only consider a particular abelian von Neumann sub-algebra of $\mb(\mh)$ and restrict our calculations of $\tob^{\rho}$ to that sub-algebra. 
Let   $\hat{P}_{12}$ be the projection operator defined by
$\frac{1}{2}(|1\rangle+|2\rangle)(\langle2|+\langle 1|)$.
The sub-algebra we choose is $V={\rm lin}_{\Cl}(\hat{P}_{12},\hat{1}- \hat{P}_{12})$ whose elements are all matrix of the form
\[\begin{pmatrix} a& b& 0\\	
b&a&0 \\
0&0&c	
  \end{pmatrix}
  \]
 Given the simplicity of such an algebra, we have that $\dV=V$. It is straightforward to see that such an algebra $V$ is its own commutant
and therefore $V''=V$ is a von Neumann algebra. 
Within such a setting, we first compute the spectral presheaf for  the context $V$. Since such an algebra has only two generators, the spectral presheaf at this context only contains two elements, namely
\be
\us_V=\{\lambda_1, \lambda_2\}\;\;\text{ such that }\;\; \lambda_1(\hat{P}_{12})=\lambda_2\big(\hat{1}-\hat{P}_{12}\big)=1\,;\;\quad \lambda_1\big(\hat{1}-\hat{P}_{12}\big)=\lambda_1(\hat{P}_{12})=0\,.
\ee
The next object that we need to determine is $\Sub(\us)_V$. This collection contains only three clopen sub-sets, namely
\be\label{equ:subobjects}
\ps{S_1}_V:=S_1=\{\lambda_1\}\,,\;\quad \ps{S_2}_V:=S_2=\{\lambda_2\}\,,\;\quad \ps{S_{12}}_V:=S_{12}=\{\lambda_1, \lambda_2\}\,.
\ee
Applying the inverse image of the map $\mathfrak{S}:\mathcal{P}(V)\rightarrow \Sub(\us)_V$ defined in \eqref{equ:smap}, we obtain
\be\label{eq:proj}
\mathfrak{S}(S_1)=\hat{P}_{12}\,,\;\quad \mathfrak{S}(S_2)=\hat{1}-\hat{P}_{12}\,,\;\quad \mathfrak{S}(S_{12})=\hat{1}\,.
\ee
We are interested in computing the topos truth object associated to the KMS state $\rho$. This will be the unique (up to the $\mu$-equivalence) truth object $\tob^{\rho}$ satisfying conditions 1 and 2 in Definition \ref{def:kmstopos} for automorphism $\alpha_\rho(t)$ given above. 
Having such a $\tob^{\rho,r}$, it will be a simple task to built
the equivalence class $[[\tob^\rho]]_{\mu^\rho}$ associated to the truth object in the sense we prescribed above. 
We then consider a general $r\in [0,1]$, such that at context $V$ we obtain
\be
\ps{\mathbb{T}}^{\rho, r}_V=\{\ps{S}\in \Sub(\us_{\dV})|\;\forall V_i\subseteq V\,,\; \; [\mu^{\rho}(\ps{S})](V_i)\geq r\}\,.
\ee
It can be checked that conditions C1 and C2 of the measure 
$\mu^\rho$ are satisfied. We recall that the measure on an element $\ps{S_1}\in \ps{\mathbb{T}}_V^{\rho, r}$, is defined as $[\mu^{\rho}(\ps{S_1})](V):=\tr(\varrho \hat{P}_{\ps{S_1}_V})$ for each $V$. Given such an expression for the measure it is straightforward to show that indeed conditions C1 and C2 are satisfied for $\varrho=\sum_{i=1}^3a_i\hat{P}_i$ and the clopen sub-objects with associated projection operators defined in \eqref{equ:subobjects} and \eqref{eq:proj} respectively.

To determine which clopen sub-objects belong to $\tob^{\rho, r}_V$, we need to compute $\mu^{\rho}(\ps{S})(V_i)$ for each of the clopen sub-objects in \eqref{equ:subobjects}. This computation will provide us with conditions to be satisfied in order to determine if a clopen sub-object belongs to the KMS truth object or not. We do a case study:
\begin{enumerate}
\item Clopen sub-object 
$\ps{S_1}$:
\be
\mu^{\rho}(\ps{S_1})(V)=\tr(\varrho\, \hat{P}_{12})=\frac{1}{2}(a_1+a_2)\,.
\ee
Therefore, the condition obtained is 
\be
\text{if } \quad \frac{1}{2}(a_1+a_2) \geq r \,,\quad \text{ then } \quad  \ps{S_1}\in\ps{\mathbb{T}}^{\rho, r}_V\,.
\ee
\item  Clopen sub-object 
$\ps{S_2}$:
\be
\mu^{\rho}(\ps{S_2})(V)=\tr\big(\varrho\, (\hat{1}-\hat{P}_{12})\big)=1-\frac{1}{2}(a_1+a_2)\,.
\ee
Therefore the condition, in this case, is
\be
\text{if }\quad 1- \frac{1}{2}(a_1+a_2) \geq r \,,\quad \text{ then } \quad  \ps{S_2}\in\ps{\mathbb{T}}^{\rho, r}_V
\ee
\item  Clopen sub-object 
$\ps{S_{12}}$:
\be
\mu^{\rho}(\ps{S_{12}})(V)=\tr\big(\varrho \,\hat{1}\big)=1\,.
\ee
Therefore, it becomes obvious that 
\be
\ps{S_{12}}\in\ps{\mathbb{T}}^{\rho, r}_V\,.
\ee
\end{enumerate}

\section{Internal KMS condition}
\label{sect:inter}

Although the topos analogue of the KMS state reproduces the ordinary 
canonical properties, its definition still requires a collection of 
$\mu$-equivalent twisted presheaves. Nevertheless, it is possible to avoid twisted presheaves. This can indeed be done by 
{\it internalizing} the group $\Autr$ (henceforth, for simplicity denoted by $H$) and defining presheaves in terms of it. The tools used 
in order to  achieve this were introduced in \cite{group}. The first step is to define the topos analogue of $H$. 

For simplicity of exposition all properties induced by the external KMS condition in the previous section will not be treated in detail in the present section. We emphasize, however, that they should admit an analogue 
formulation in the internal condition setting.
 
It should be also noted that the notion of {\it external} and {\it internal} topos properties used in the present paper should not be confused with the notion of external and internal description as used in  \cite{comparison}, \cite{bass1} and  \cite{ bass}.

\subsection{The automorphism group}
\label{subsect:auto}

The first step is to define the topos analogue of the automorphisms group $H=\{\at |\,t\in \Rl\}$ of the algebra $\mathcal{N}$. For each element, in this group, we obtain the induced geometric morphism
\ba
\at ^*:\Shvn &\rightarrow& \Shvn \crcr
\ps{S}&\mapsto&\at ^*\ps{S}
\ea
such that $\at ^*\,\ps{S}(V):=\ps{S}(\at V)$.
This action, however, gives rise to twisted presheaves \cite{andreas5}. In order to avoid this feature, the same recipe as performed in \cite{group} can be used. We start with the base category $\cvnf$ which is fixed, i.e.  the group $H$ is not allowed to act on it. Then, we define the {\it internal} group $\ps{H}$ over this new base category as follows:

\begin{Definition}
The internal group $\ps{H}$ is the presheaf defined on
\begin{enumerate}
\item Objects: for each $V\in \cvnf$,  $\ps{H}_V=H$;
\item Morphisms are simply identity maps. 
\end{enumerate}
\end{Definition}
It is straightforward to realize that the global sections of this presheaf reproduce the group, i.e.  $\Gamma(\ps{H})=H$. Next, the fixed point group presheaf  $\ps{H_F}$ can be defined as:
\begin{Definition}
The presheaf $\ps{H_F}$ is defined on
\begin{enumerate}
\item Objects: for each $V\in \cvnf$,  $\ps{H_F}_V = H_{FV}$
subgroup of $H$ called  the fixed point group of $V$ i.e. $H_{FV}=\{\alpha\in\ps{H}|\forall a \in V\alpha a = a\}$.
\item Morphisms: given a morphism $i_{V'V}: V'\subseteq V$, the
corresponding morphism is
\ba
\ps{H_F}(i_{V'V}): (\ps{H_F}_V=H_{FV}) &\to& (\ps{H_F}_{V'}=H_{FV'}) \cr\cr
\alpha &\mapsto&\ps{H_F}(i_{V'V})[\alpha ] = \alpha|_{V'}
\ea
viz. $\ps{H_F}_{V}\rightarrow \ps{H_F}_{V'}$ is given in terms of subgroup inclusion, i.e. $\ps{H_F}_{V}\subseteq \ps{H_F}_{V'}$.
\end{enumerate}
\end{Definition}
Similarly, as above, we have that $\Gamma(\ps{H}_F)=H_F$. 

We now define the main presheaf of our analysis:

\begin{Definition}
The presheaf $\ps{H/H_F}\in \Setnf$ is defined on
\begin{enumerate}
\item Objects: for each $V\in \cvnf $ corresponds the set \be
\ps{H/H_F}_V := H_V/H_{FV}= H/H_{FV}=\{[g]_V|\,g\sim g_1\;\, \text{ iff }\,\, hg_1=g \text{ for } h\in H_{FV}\}\,.
\ee  
\item Morphisms: given a morphism $i_{V'V}:V'\subseteq V$, the corresponding presheaf morphism is the map $\ps{H/H_F}(i_{V'V}): H/H_{FV}\rightarrow H/H_{FV'}$ defined as the bundle map of the bundle $H_{FV'}/H_{FV}\rightarrow H/H_{FV}\rightarrow H/H_{FV'}$. 

\end{enumerate} 
\end{Definition}
We can also define the following presheaf:
\begin{Definition}
The presheaf $\ps{H}/\ps{H_F}$ over $\cvnf$ is defined on 
\begin{itemize}
\item Objects: for each $V\in \cvnf$ we obtain $(\ps{H}/\ps{H_F})_V:=H/H_{FV}$, since the equivalence relation is computed context wise.
\item Morphisms: for each map $i:V^{'}\subseteq V$ we obtain the morphisms 
\ba
(\ps{H}/\ps{H_F})_V&\rightarrow& (\ps{H}/\ps{H_F})_{V^{'}}\\
H/H_{FV}&\rightarrow&H/H_{FV^{'}}
\ea
These are defined to be the projection maps $\pi_{V^{'}V}$ of the fibre bundles \be
H_{FV^{'}}/H_{FV}\rightarrow H/H_{FV}\rightarrow H/H_{FV^{'}} \ee
with fibre isomorphic to $H_{FV^{'}}/H_{FV}$.
\end{itemize}
\end{Definition}

We then obtain the analogue of Theorem 6.1 in \cite{group}
\begin{Theorem}[\cite{group}]
\be
\ps{H/H_F}\simeq \ps{H}/\ps{H_F}\,.
\ee
\end{Theorem}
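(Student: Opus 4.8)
The plan is to exhibit a natural isomorphism by checking that the two presheaves agree stage-by-stage and arrow-by-arrow, the common stages being the coset spaces $H/H_{FV}$, and concluding that the family of identity maps on those stages is the required iso.

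First I would dispose of the object part. By construction $(\ps{H/H_F})_V = H/H_{FV}$. On the other side, $\ps{H}/\ps{H_F}$ is the quotient of the group object $\ps{H}\in\Setnf$ by the subgroup object $\ps{H_F}$; since $\ps{H}$ is abelian (it is a one-parameter group) this subgroup object is normal and the quotient group object exists, being the coequalizer of the pair $\ps{H_F}\times\ps{H}\rightrightarrows\ps{H}$ given by the action and the second projection. Colimits, hence coequalizers, are computed pointwise in a presheaf topos, so $(\ps{H}/\ps{H_F})_V$ is the coequalizer of $H_{FV}\times H\rightrightarrows H$, i.e. $H/H_{FV}$. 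Thus the two presheaves take literally the same value at every $V\in\cvnf$.

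The substance of the proof is the arrow part. For $V'\subseteq V$ the definition of the fixed-point groups gives $H_{FV}\subseteq H_{FV'}$ (a coarser algebra is pointwise fixed by more automorphisms), so there is a canonical coset projection $q_{V'V}\colon H/H_{FV}\twoheadrightarrow H/H_{FV'}$, $gH_{FV}\mapsto gH_{FV'}$. On one side, the restriction map of $\ps{H}/\ps{H_F}$ along $i_{V'V}$ is, by the pointwise description, the map induced on quotients by $\ps{H}(i_{V'V})=\mathrm{id}_H$ and $\ps{H_F}(i_{V'V})=(H_{FV}\hookrightarrow H_{FV'})$; descending the identity through the two quotients yields exactly $q_{V'V}$. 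On the other side, the restriction map of $\ps{H/H_F}$ along $i_{V'V}$ is, by definition, the bundle map of the fibre bundle $H_{FV'}/H_{FV}\to H/H_{FV}\to H/H_{FV'}$; but the total-space-to-base arrow of that bundle sends $gH_{FV}$ to $gH_{FV'}$, so it is again $q_{V'V}$. (One uses here that the displayed bundle is well defined precisely because $H_{FV}\subseteq H_{FV'}$, the fibre over $gH_{FV'}$ being $\{ghH_{FV}\mid h\in H_{FV'}\}\cong H_{FV'}/H_{FV}$.) Hence the two families of restriction maps coincide, the identity maps on the common stages constitute a natural transformation whose naturality is exactly this coincidence, and since every component is a bijection we obtain $\ps{H/H_F}\simeq\ps{H}/\ps{H_F}$. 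Equivalently, one transcribes verbatim the proof of Theorem 6.1 of \cite{group} with the one-parameter modular automorphism group $H$ in the role of the group treated there.

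The step I expect to be most delicate is the bookkeeping around the fixed-point subgroups and the bundles: confirming the inclusion $H_{FV}\subseteq H_{FV'}$ for $V'\subseteq V$, checking that the projections satisfy $q_{V''V'}\circ q_{V'V}=q_{V''V}$ over chains $V''\subseteq V'\subseteq V$ (so that both sides really are functors on $\cvnf^{\op}$), and verifying that the ``bundle map'' in the definition of $\ps{H/H_F}$ is literally the coset projection and not a twisted variant. Once these identifications are secured, the equivalence is the standard statement that the naive quotient presheaf reproduces the internal (pointwise) quotient of group objects in $\Setnf$.
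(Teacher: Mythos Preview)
Your argument is correct, but note that the paper does not actually prove this theorem: it is stated as the analogue of Theorem~6.1 in \cite{group} and the proof is simply deferred to that reference. In fact, as the paper has set things up, both $\ps{H/H_F}$ and $\ps{H}/\ps{H_F}$ are \emph{defined} explicitly with the very same data---objects $H/H_{FV}$ and restriction maps given by the bundle projection $H/H_{FV}\to H/H_{FV'}$---so as written the isomorphism is a tautology, and the content of the statement lies in the claim (made in the definition of $\ps{H}/\ps{H_F}$) that ``the equivalence relation is computed context wise''.

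Your proof is actually more substantive than what the paper offers: you interpret $\ps{H}/\ps{H_F}$ as the genuine categorical quotient of group objects (the coequalizer of the action and projection $\ps{H_F}\times\ps{H}\rightrightarrows\ps{H}$) and then invoke the fact that colimits in $\Setnf$ are computed pointwise to recover $H/H_{FV}$ at each stage. This is the right way to justify the phrase ``the equivalence relation is computed context wise'', and it is presumably what the argument in \cite{group} amounts to. Your identification of both restriction maps with the canonical coset projection $q_{V'V}\colon gH_{FV}\mapsto gH_{FV'}$ is exactly the check needed, and your closing remarks about the inclusion $H_{FV}\subseteq H_{FV'}$ and the compatibility $q_{V''V'}\circ q_{V'V}=q_{V''V}$ are the only nontrivial verifications involved.
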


Since $\cvnf$ is a poset and, as such, it is equipped with the lower Alexandroff topology (which we denote as $\cvfm$), we then have the ordinary result 
\be
\Setnf\simeq \Sh(\cvfm)\,,
\ee
where $\Sh(\cvfm)$ is the topos of sheaves over $\cvfm$. Thus, each presheaf in $\Setnf$ is a sheaf in $ \Sh(\cvfm)$. From now on, we will simply denote $\cvfm$ by $\cvnf$. Moreover, using the 1:1 correspondence between sheaves and etal\'e bundles, 
the presheaf $\ps{H/H_F}$ yields the associated etal\'e bundle\footnote{\begin{Definition}\vspace{-.1in}
Given a topological space $X$, a bundle $p_{E}:E\rightarrow X$ is said to be etal\'e iff $p_A$ is a local homeomorphism. This means that, for each $e\in E$ there exists an open set $V$ with $e\in V\subseteq E$, such that $pV$ is open in $X$ and $p_{|V}$ is a homeomorphism $V\rightarrow pV$.
\end{Definition}\vspace{-.1in}
If for example $X=\Rl^2$ then for each point of a fibre there will be an open disc isomorphic to an open disc in $\Rl^2$. It is not necessary that these discs have the same size. Such a collection of open discs on each fibre are glued together by the topology on $E$. 
Another example of etal\'e bundles are covering spaces. However, although all covering spaces are etal\'e, it is not the case that all etal\'e bundles are covering spaces.
} 
$p:\Lambda(\ps{H/H_{F}})\rightarrow \cvnf$ whose bundle space $\Lambda(\ps{H/H_{F}}) $ can be given a poset structure, as follows:
\begin{Definition}
Given two elements $[g]_{V'}\in H/H_{FV'} $, $[g]_V\in H/H_{FV}$ we define the partial ordering by
\be
[g]_{V'}\leq [g]_{V} \;\;\text{ iff }\;\; p([g]_{V'})\subseteq  p([g]_V)
\;\;\text{ and }\;\;   [g]_V\subseteq  [g]_{V'}\,.
\ee
\end{Definition}
In addition, since each element $[g]$ corresponds uniquely to a faithful\footnote{Here, by faithful, we mean that, for each $V$,  we will only consider automorphisms of the algebra $\mn\supseteq V$, which do not leave any element $A\in V$ unchanged. } automorphisms $\alpha^g_{\rho}(t)\in H$, it is also possible to define the ordering in terms of such automorphisms. We denote the set of faithful automorphisms by $\autf(V,\mn)$, from $V \to \mn$.
\begin{Definition}
Given two faithful automorphisms $\alpha_1\in\autf (V,\mn)$ and $\alpha_2\in \autf(V',\mn)$, we define the partial ordering by
\be
\alpha_2\leq \alpha_1 \;\;\text{ iff }\;\; p(\alpha_2)\subseteq  p(\alpha_1)\;\;\text{ and }\;\; \alpha_2=\alpha_1|_{V'}\,.
\ee
\end{Definition}
In the following, switching from $[g]$ to the representative $\alpha^g_{\rho}(t)$, we will denote $\alpha^g_{\rho}(t)$ as $l_g$.

\begin{Theorem}
The map $I:\Shvfm\rightarrow \Sh(\Lambda( \ps{H/H_F}))$ is a
functor defined as follows:
\begin{enumerate}
\item [(i)] Objects: $\big(I(\ps{A})\big)_{[g]_V}:=\ps{A}_{\,l_g(V)}=\Big((l_g)^*(\ps{A})\Big)(V)$. If $[g]_{V'}\leq [g]_{V}$, then
\be
(I\ps{A}(i_{[g]_{V'},[g]_V})):=\ps{A}_{\,l_g(V),l_g(V')}:\ps{A}_{\,l_g(V)}\rightarrow \ps{A}_{\,l_g(V')}\,,
\ee
where $V=p([g]_V)$ and $V'=p([g]_{V'})$.
\item [(ii)] Morphisms: given a morphism $f:\ps{A}\rightarrow\ps{B}$ in $\Shvfm$, define the associated morphism in 
$\Sh(\Lambda (\ps{H/H_F}))$ as
\ba
I(f)_{[g]_V}:I(\ps{A})_{[g]_V}&\rightarrow& I(\ps{B})_{[g]_V}\\
f_{[g]_V}:\ps{A}_{\,l_g(p([g]_V))}&\rightarrow
&\ps{B}_{\,l_g(p([g]_V))} 
\ea
\end{enumerate}
\end{Theorem}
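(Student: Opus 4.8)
The strategy is to check the three things required of a functor: that $I$ sends each object $\ps{A}\in\Shvfm$ to an honest sheaf on $\Lambda(\ps{H/H_F})$, that it sends each arrow $f:\ps{A}\to\ps{B}$ to an honest morphism of sheaves, and that it preserves identities and composition. Since $\Lambda(\ps{H/H_F})$ is a poset carrying the lower Alexandroff topology, a sheaf over the associated space is nothing but a contravariant functor out of the poset (exactly as $\Setnf\simeq\Sh(\cvfm)$ above), so the first point reduces to showing that $I(\ps{A})$ is a well-defined presheaf on $\Lambda(\ps{H/H_F})$. On objects, I would first observe that $\big(I(\ps{A})\big)_{[g]_V}:=\ps{A}_{\,l_g(V)}$ is independent of the chosen representative of the coset $[g]_V=gH_{FV}$: any $h\in H_{FV}$ fixes $V$ elementwise, hence $h(V)=V$, so $l_{gh}(V)=g(h(V))=g(V)=l_g(V)$; and since $l_g$ is an automorphism of $\mn$, $l_g(V)$ is again an abelian von Neumann subalgebra, i.e. an object of $\cvnf$.

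For the restriction maps I would use that the poset ordering $[g]_{V'}\le[g]_V$ forces $V'=p([g]_{V'})\subseteq p([g]_V)=V$ together with $[g]_{V'}=\pi_{V'V}([g]_V)$, so that a single representative $g$ serves at both contexts; since $l_g$ preserves subalgebra inclusion, $l_g(V')\subseteq l_g(V)$, and we may set $I(\ps{A})(i_{[g]_{V'},[g]_V}):=\ps{A}_{\,l_g(V),l_g(V')}$, the restriction of $\ps{A}$. Functoriality of this assignment is inherited directly from that of $\ps{A}$: the relation $[g]_V\le[g]_V$ goes to $\ps{A}_{\,l_g(V),l_g(V)}=\mathrm{id}$, and for a chain $[g]_{V''}\le[g]_{V'}\le[g]_V$, where $V''\subseteq V'\subseteq V$ with the representative consistently $g$ and hence $l_g(V'')\subseteq l_g(V')\subseteq l_g(V)$, one has $\ps{A}_{\,l_g(V'),l_g(V'')}\circ\ps{A}_{\,l_g(V),l_g(V')}=\ps{A}_{\,l_g(V),l_g(V'')}$. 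Thus $I(\ps{A})$ is a presheaf, and therefore a sheaf, on $\Lambda(\ps{H/H_F})$.

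Next I would check that, given $f:\ps{A}\to\ps{B}$, the components $I(f)_{[g]_V}:=f_{\,l_g(V)}:\ps{A}_{\,l_g(V)}\to\ps{B}_{\,l_g(V)}$ assemble into a natural transformation $I(\ps{A})\to I(\ps{B})$: for $[g]_{V'}\le[g]_V$ the square to be verified,
\be
I(\ps{B})(i_{[g]_{V'},[g]_V})\circ I(f)_{[g]_V}=I(f)_{[g]_{V'}}\circ I(\ps{A})(i_{[g]_{V'},[g]_V})\,,
\ee
unwinds to $\ps{B}_{\,l_g(V),l_g(V')}\circ f_{\,l_g(V)}=f_{\,l_g(V')}\circ\ps{A}_{\,l_g(V),l_g(V')}$, which is exactly the naturality of $f$ at the inclusion $l_g(V')\subseteq l_g(V)$ in $\cvnf$. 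That $I$ respects identities and composition is then immediate from this component-wise description: $I(\mathrm{id}_{\ps{A}})_{[g]_V}=(\mathrm{id}_{\ps{A}})_{\,l_g(V)}=\mathrm{id}_{\ps{A}_{\,l_g(V)}}$, so $I(\mathrm{id}_{\ps{A}})=\mathrm{id}_{I(\ps{A})}$; and $I(h\circ f)_{[g]_V}=(h\circ f)_{\,l_g(V)}=h_{\,l_g(V)}\circ f_{\,l_g(V)}=\big(I(h)\circ I(f)\big)_{[g]_V}$, so $I(h\circ f)=I(h)\circ I(f)$.

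The point that genuinely needs care — and the only real obstacle — is the representative-independence invoked above: one must confirm not only that $l_g(V)$ and the restriction $\ps{A}_{\,l_g(V),l_g(V')}$ depend solely on the cosets $[g]_V$ and $[g]_{V'}$, but also that along any descending chain in $\Lambda(\ps{H/H_F})$ a single representative can be chosen coherently. It is precisely this coherence, made available by the bundle structure $H_{FV'}/H_{FV}\to H/H_{FV}\to H/H_{FV'}$ (using $H_{FV}\subseteq H_{FV'}$ for $V'\subseteq V$), that prevents the twisted-presheaf obstruction of \cite{andreas5}, which afflicts $(l_g)^*\ps{A}$ on $\cvnf$, from resurfacing on $\Lambda(\ps{H/H_F})$; everything else is a mechanical transcription of the presheaf and naturality axioms for $\ps{A}$, $\ps{B}$ and $f$ through the map $V\mapsto l_g(V)$.
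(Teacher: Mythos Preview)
The paper does not actually supply a proof of this theorem; it states the definition of $I$ and immediately refers the reader to \cite{group} for the details. Your verification is therefore not something to be compared against an in-paper argument, but it is the natural and correct one: checking representative-independence of $\big(I(\ps{A})\big)_{[g]_V}$ via the fixed-point property of $H_{FV}$, transporting the presheaf axioms of $\ps{A}$ along $V\mapsto l_g(V)$, and reducing naturality of $I(f)$ and preservation of identities/composition to the corresponding properties of $f$ in $\Shvfm$. The one point you rightly flag as delicate --- coherent choice of a representative along a chain $[g]_{V''}\le[g]_{V'}\le[g]_V$ --- is exactly what the bundle description $H_{FV'}/H_{FV}\to H/H_{FV}\to H/H_{FV'}$ is there to guarantee, and is the content developed in \cite{group}.
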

We are now able to map all the sheaves in $\Shvfm$ to sheaves in $\Sh(\Lambda(\ps{H/H_F}))$. Applying the functor $p!:\Sh(\Lambda(\ps{H/H_F}))\rightarrow \Shvfm$, we finally obtain sheaves on our fixed category $\cvnf$ (for details the reader should refer 
to \cite{group}).

In this context, the object $\ps{H}$ is indeed a group object and, as shown in \cite{complexnumbers}, is the topos analogue of a one parameter group taking its values in $\ps{\Rl}$. Moreover, by defining the relevant objects in our formalism through the above method, i.e. in terms of the composite functor $p!\circ I$, the action of $\ps{H}$ does not induce twisted presheaves. 

Having defined the automorphism group $\ps{H}$, we 
can define the KMS condition in terms of such a group.
This is the purpose of the next section.

\subsection{Internal KMS condition}
\label{subsect:internalkms}

We work in the topos $\Shvfm$ for which the state space is defined as follows:
\begin{Definition}
The spectral presheaf $\breve{\us}:=p!\circ I(\us)$ is defined on
\begin{itemize}
\item[(i)] Objects: for each $V\in\cvnf$ we have
\be
\breve{\us}_V:=\coprod_{[g]_V\in H/H_{FV}}\us(l_gV)\simeq\coprod_{\alpha\in \autf(V, \mn)}\us(\alpha V) \ee which represents the disjoint
union of the Gel'fand spectrum of all algebras related to $V$, via
a faithful group transformation.
\item[(ii)] Morphisms: given a morphism $i_{V'V}:V'\rightarrow V$ ($V'\subseteq V)$, in $\cvnf$ the corresponding spectral presheaf morphism is
\ba
\breve{\us}(i_{V'V}):\breve{\us}_{V}&\rightarrow& \breve{\us}_{V'}\\
\coprod_{\alpha_1\in \autf(V, \mn)}\us_{\alpha_1(V)}&\rightarrow&\coprod_{\alpha_2\in \autf(V', \mn)}\us_{\alpha_2(V')}
 \ea
such that, given $\lambda\in\us_{\alpha_1(V)}$, we obtain $\breve{\us}(i_{V'V})(\lambda):=\us_{\alpha_1(V),\alpha_2(V')}\lambda=\lambda|_{\alpha_2(V')}$. Thus, $\breve{\us}(i_{V'V})$ is actually a co-product of morphism $\us_{\alpha_1(V),\alpha_2(V')}$, one for each 
$\alpha_{1} \in \autf(V, \mn)$.
\end{itemize}
\end{Definition}
Similarly, analogous to the previous $\ps{[0,1]}$, 
the presheaf of OR functions can be defined as follows:
\begin{Definition}
The presheaf $ \breve{\ps{[0,1]}}:=p!\circ I(\ps{[0,1]})$ is defined on 
\begin{enumerate}
\item[(i)] Objects: for each $V\in\cvnf$, we have
\be
\breve{\ps{[0,1]}}_{V}:=\coprod_{[g]\in H/H_{FV}}\ps{[0,1]}_{\,l_gV}\simeq\coprod_{\alpha\in \autf(V, \mn)}\ps{[0,1]}_{\,\alpha V}\,.
\ee
\item[(ii)] Morphisms: given $i_{V'V}:V'\subseteq V$, the corresponding presheaf morphism is 
\ba
\breve{\ps{[0,1]}}(i_{V'V}):\breve{\ps{[0,1]}}_{V}&\rightarrow& \breve{\ps{[0,1]}}_{V'}\\
\coprod_{\alpha_1\in \autf(V, \mn)}\ps{[0,1]}_{\,\alpha_1(V)}&\rightarrow&\coprod_{\alpha_2\in \autf(V', \mn)}\ps{[0,1]}_{\,\alpha_2(V')}\,.
 \ea
\end{enumerate} 
\end{Definition}
Applying the composite functor $p!\circ I$ to the truth object $\tob^{\rho, r}$, we obtain the presheaf $\breve{\tob}^{\rho, r}:=p!\circ I(\tob^{\rho, r})$.
\begin{Definition}
The truth object presheaf $\breve{\ps{\t}}^{\rho, r}$ is defined on
\begin{itemize}
\item[(i)] Objects: for each $V\in\cvnf$ we have
\be
\breve{\tob}^{\rho, r}_V:=\coprod_{[g]_V\in H/H_{FV}}\tob^{\rho, r}(l_gV)\simeq\coprod_{\alpha\in \autf(V, \mn)}\tob^{\rho, r}(\alpha V) \ee 
which represents the disjoint union of the truth object defined for all algebras related to $V$, via
a faithful group transformation.
\item[(ii)] Morphisms: given a morphism $i:V'\rightarrow V$ ($V'\subseteq V)$, in $\cvnf$ the corresponding morphism is
\ba
\breve{\tob}^{\rho, r}(i_{V'V}):\breve{\tob}^{\rho, r}_{V}&\rightarrow& \breve{\tob}^{\rho, r}_{V'}\\
\coprod_{\alpha_1\in \autf(V, \mn)}\tob^{\rho, r}_{\,\alpha_1(V)}&\rightarrow&\coprod_{\alpha_2\in \autf(V', \mn)}\tob^{\rho, r}_{\,\alpha_2(V')}
 \ea
such that, given $\ps{S}\in\tob^{\rho, r}_{\alpha_1(V)}$, we have $\breve{\tob}^{\rho, r}(i_{V'V})(\ps{S}):=\tob^{\rho, r}_{\alpha_1(V),\alpha_2(V')}\ps{S}=\ps{S}_{|\downarrow \alpha_2(V')}$.
 $\breve{\tob}^{\rho, r}(i_{V'V})$ is actually a co-product of morphism $\tob^{\rho, r}_{\alpha_1(V),\alpha_2(V')}$, one for each 
$\alpha_1 \in \autf(V, \mn)$.
\end{itemize}
\end{Definition}

In this setting, given a sate $\rho$, the associated measure is
\ba
\breve{\mu}^{\rho}:\breve{\us}&\rightarrow&\Gamma \breve{\ps{[0,1]}}\\
\breve{\ps{S}}&\mapsto&\breve{\mu}^{\rho}(\breve{\ps{S}})
\ea
such that, for each $V\in \cvnf$, we have
\be
\breve{\mu}^{\rho}(\breve{\ps{S}})(V):=\coprod_{[g]\in H/H_{FV}}\mu^{\rho}\ps{S}(l_gV)=\coprod_{[g]\in H/H_{FV}}\mu^{\rho}(l_g^*\ps{S})(V)\,.
\ee
If $\rho$ is a KMS state, then $\mu^{\rho}(l_g^*\ps{S})(V)=\mu^{\rho}(\ps{S})(V)$. It follows that $\mu^{\rho}(\ps{S}):\cvnf\rightarrow[0,1]$ is constant on all $H$ related $V$, i.e. such a measure is constant on the orbits $H/H_{FV}$ defined for each $V$.  It is then reasonable to first of all define a measure on $\Sub(I(\us))$, where $I:\Shvfm\rightarrow \Sh(\Lambda(\ps{H/H_F}))$ was introduced above. Such a measure would be
\ba
\bar{\mu}^{\rho}:\Sub(I(\us))&\rightarrow&\Gamma( I\ps{[0,1]})\crcr
\ps{S}&\mapsto&\bar{\mu}^{\rho}\ps{S}
\ea
where $\bar{\mu}^{\rho}\ps{S}:\Lambda(\ps{H/H_F})\rightarrow [0,1]$ is such that, $\forall [g]\in \Lambda(\ps{H/H_F})$ we have $\bar{\mu}^{\rho}\ps{S}[g]:=\mu^{\rho}\ps{S}(l_g(V))=
\rho(\hat{P}_{ \ps{S}(l_g(V))})$.

Since for each $V\in\cvnf$ corresponds the orbit $H/H_{FV}$, each global element $\bar{\mu}^{\rho}(\ps{S})$, when restricted to such an orbit, gives the constant local section. In other words, 
\be
\bar{\mu}^{\rho}\ps{S}_{|H/H_{FV}}:\Lambda(\ps{H/H_F})(V)\rightarrow [0,1]
\ee
is constant on all $[g]\in \Lambda(\ps{H/H_F})(V)$. 
It is possible to consider $\bar{\mu}_{|H/H_{FV}}(I(\ps{S}))$ as a constant global section if we introduced the presheaf $\overline{\ps{[0,1]_V}}\in \Sets^{\Lambda(\ps{H/H_{FV}})^{\op}}$. Here $\ps{H/H_{FV}}$ is the category whose elements are the equivalence classes $[g]$, as in $\Lambda(\ps{H/H_F})$, but whose morphisms are now given by group multiplication, i.e. $g:[h]\rightarrow [h^{'}]$ iff $h^{'}=gh$ and $h\notin H_{FV}$.

\begin{Definition}
The presheaf $\overline{\ps{[0,1]_V}}\in\Sets^{\Lambda(\ps{H/H_{FV}})^{\op}}$ has as 
\begin{enumerate}
\item Objects: for each $[h]$, the respective presheaf set is 
\be
\overline{\ps{[0,1]_V}}([h]):=I(\ps{[0,1]})[h]=\ps{[0,1]}l_h(V)=\{f:\downarrow l_h(V)\rightarrow [0,1]|f \text{ is OR}\}\,.
\ee
\item Morphisms: given a map $g:[h]\rightarrow [h^{'}]$ such that $h^{'}=gh$ then we get a corresponding presheaf map
\ba
\overline{\ps{[0,1]_V}}(g):\overline{\ps{[0,1]_V}}([h^{'}])&\rightarrow& \overline{\ps{[0,1]_V}}([h])\crcr
f&\mapsto&l_{g-1}(f)
\ea
where $f:\downarrow l_h(V)\rightarrow [0,1]$ and  $l_{g^-1}f:\downarrow l_{g^{-1}h}V\rightarrow [0,1]$.
\end{enumerate}
 \end{Definition} 
To show that this is indeed a presheaf, we need to show that, given two maps $g_j:[h^{''}]\rightarrow [h^{'}]$ and $g_i:[h^{'}]\rightarrow [h]$, then the following equality holds:
\be
\overline{\ps{[0,1]_V}}(g_j\circ g_i)=\overline{\ps{[0,1]_V}}(g_i)\circ\overline{\ps{[0,1]_V}}(g_j)\,.
\ee
 Considering first the left hand side we obtain
 \ba
 \overline{\ps{[0,1]_V}}(g_j\circ g_i):\overline{\ps{[0,1]_V}}[h]&\rightarrow&\overline{\ps{[0,1]_V}}[h^{''}]\crcr
 f&\mapsto&l_{(g_jg_i)^{-1}}(f)=l_{g_i^{-1}}l_{g_j^{-1}}f
 \ea
 On the other hand the right hand side is
 \ba
 \overline{\ps{[0,1]_V}}(g_i)\circ\overline{\ps{[0,1]_V}}(g_j):\overline{\ps{[0,1]_V}}[h]&\rightarrow&\overline{\ps{[0,1]_V}}[h^{'}]\rightarrow \overline{\ps{[0,1]_V}}[h^{'}]\crcr
 f&\mapsto&l_{g_j^{-1}}f\mapsto l_{g_i^{-1}}l_{g_j^{-1}}f
 \ea
 Thus, indeed $\overline{\ps{[0,1]_V}}$ is a well defined presheaf.

Obviously, for each $V_i\in\cvnf$, one gets a presheaf $\overline{\ps{[0,1]_{V_i}}}\in \Sets^{\Lambda(\ps{H/H_{FV_i}})^\op}$.  
Using these new presheaves, we can then define the first condition on the measure associated to a KMS state to be the following: 
\be
\breve{\mu}^{\rho}(\breve{\ps{S}})(V)=\Gamma^c(\overline{\ps{[0,1]_V}})\,,
\ee
where $\Gamma^c(\overline{\ps{[0,1]_V}})$ denotes the global element with constant value $c$, which, in this case is $\mu^{\rho}(\ps{S})(l_gV)$.
This condition on the measure translates to the following condition on the truth object representing a KMS:
\be\label{equ:truthkms}
\breve{\tob}^{\rho, r}_V=\{\breve{\ps{S}}\in \Sub(\breve{\us}_{|\dV})|\, \forall V'\subseteq V,\;\breve{\mu}(\breve{\ps{S}})(V')=\Gamma^c(\overline{\ps{[0,1]_{V'}}})\text{ for } c=\mu^{\rho}(\ps{S})(V) \geq r\}\,.
\ee
Thus, the truth object associated to the KMS state represents the collection of sub-objects for which the measure is constant on the orbits $H/H_{FV}$, for each $V\in \cvnf$.
A moment of thought reveals that this is nothing but the following:
\be
\breve{\tob}^{\rho, r}_V=[[\at^*\tob^{\rho, r}]]_{ \mu^{\rho}\,,V}\,.
\ee
The action of the group $\ps{H}$ on $ \breve{\tob}^{\rho, r}$ is defined for each $V\in\cvnf$
\ba
\ps{H}_V\times  \breve{\ps{\mathbb{T}}}^{\rho, r}_V&\rightarrow& \breve{\ps{\mathbb{T}}}^{\rho, r}_V\\
(\at, \ps{S})&\mapsto&\at^*(\ps{S})
\ea
Thus restricting only to the first KMS condition, we can define a truth object satisfying such a condition as follows:

\begin{Definition}
\label{def:condition1}
Given a truth object $\breve{\tob}^{\rho, r}$, it satisfies the first KMS condition iff the following diagram commutes:
\[\xymatrix{
\breve{\tob}^{\rho, r}\ar[dd]_{ \alpha(t)^*}\ar[ddrr]^{\mu^{\rho}}&&\\
&&\\
\breve{\tob}^{\rho, r}\ar[rr]^{\mu^{\rho}}&&\breve{\ps{\Gamma[0,1]}}\\
}\]
for all $\alpha(t)\in \Lambda(\ps{H})$.
\end{Definition}

We are still missing the second condition for a KMS state, namely for any two elements $\hat{A},\hat{B}\in \mn_{\alpha}$, then 
$\langle\rho, \hat{A}\alpha_{t+i\gamma}\hat{B}\rangle=\langle\rho, \at\hat{B}\hat{A}\rangle.$
Here $\alpha_{t+i\gamma}$ belongs to the set $E=\{\alpha_{t+i\gamma}|\; t\in\Rl\,, \gamma\in [0,\beta]\}$ which represents an extension of the group $H$ to a strip on the complex plane. Since we are also considering $\gamma=0$, it follows that $H$ is contained in the above set. Is it possible to internalize $E$ as we did for $H$? The answer is yes and it is done through the construction of the following trivial presheaf $\ps{E}$ for which, at each context, we simply assign the set 
$E$ itself 
 and the maps are identity maps. Note that such presheaf is not a group, nevertheless we get $\ps{H}\subset \ps{E}$. Recalling the definition of the presheaf $\ps{H}_F$, it is possible to define its action on $\ps{E}$ as follows:
\be
\ps{H_{F}}\times\ps{E}\rightarrow\ps{E}\,,
\ee
such that, for each $V\in\cvnf$, we obtain
\ba
\ps{H_F}_{V}\times\ps{E}_V&\rightarrow&\ps{E}_V\\
(\alpha(t_1), \alpha(t_2+i\gamma))&\mapsto&\alpha(t_1+t_2, i\gamma)
\ea
Given this action, it is meaningful to define the quotient presheaf $\ps{E/H_F}$. Thus, for each $V\in \cvnf$ we obtains the set $E/H_{FV}$ where two elements $\alpha(t_1+i\gamma)\simeq \alpha(t_2+i\gamma)$ iff there exists an element $\alpha(t_3)$ such that $t_2=t_1+t_3$. Clearly, $\ps{H/H_{F}}\subset \ps{E/H_F}$. We can then extend what has been done with respect to the presheaf $\ps{H/H_{F}}$ to the presheaf $\ps{E/H_F}$. 

First, we define the new truth object for each $V\in\cvnf$ as
\be
\tilde{\tob}^{\rho,r}_{V}:=\coprod_{[g]\in E/H_{FV}}\tob^{\rho, r}(l_gV)\,.
\ee 
Here $\tilde{\tob}^{\rho,r}$ is defined as\footnote{Note
that we use the same method that was used to define $\breve{\ps{\t}}^{\rho, r}$, but now we replace $I$ by $\mathcal{I}$ and $p!$ by $p_E!$. }
$\tilde{\tob}^{\rho,r}:=p_E!\circ \mathcal{I} (\tob^{\rho,r})$ where $\mathcal{I}:\Shvfm\rightarrow \Sh(\Lambda(\ps{E/H_F}))$ and $p_E:\Lambda(\ps{E/H_F})\rightarrow \cvnf$.
Since $\ps{H}\subset \ps{E}$, the above truth object contains all the elements of $\breve{\tob}^{\rho}_{(V,r)}$. Hence, the first KMS condition keeps its form.
On the other hand, the second KMS condition seems a little more involved since it concerns measures on the conjunction of two sub-objects of the state space. To implement the second KMS condition,  the following natural transformation is needed:
\be\label{equ:f}
F:\ps{E}\times\tilde{\tob}^{\rho}\rightarrow\tilde{\tob}^{\rho}\,,
\ee
However, we need to pay attention to domain issues. In particular we know that  $F_{\ps{T},\ps{S}}(t+i\beta)=[\mu^{\rho}(\ps{T}\wedge\alpha(t+i\beta)^*\ps{S}]\mn_{\alpha}$ only for $\mn_{\alpha}$ being a $\sigma$-weakly $\alpha$-invariant sub-algebra of $\mn$. Therefore we can construct a category $\mv(\mn_{\alpha})$ of abelian sub-algebras of  $\mn_{\alpha}$ ordered by inclusion. Clearly $\mv(\mn_{\alpha})$ is a full subcategory of $\mv(\mn)$, thus we can define the continuous identity map $i:\mv(\mn_{\alpha})\rightarrow\mv(\mn)$. This gives rise to the corresponding geometric morphisms between $i:Sh(\mv(\mn_{\alpha}))\rightarrow Sh(\mv(\mn))$ whose associated inverse image is  $i^*:Sh(\mv(\mn))\rightarrow Sh(\mv(\mn_{\alpha}))$ such that $(i^*(\ps{A}))_{V_{\alpha}}:=\ps{A}_{i(V_{\alpha})}=\ps{A}_{V_{\alpha}}$.

We then utilise these inverse image morphisms to correctly define the natural transformation in \eqref{equ:f}, which we still call $F$, as
\ba
F:i^*(\ps{E}\times \tilde{\tob}^{\rho})&\rightarrow&i^*(\tilde{\tob}^{\rho})\\
F:i^*(\ps{E})\times i^*(\tilde{\tob}^{\rho})&\rightarrow&i^*(\tilde{\tob}^{\rho})
\ea
where the last equation holds since the left adjoint $f^*$ preserves finite limits. Given a context $V_{\alpha}\in\mv(\mn_{\alpha})$ we then have 
\ba
F_V:(i^*(\ps{E}))_{V_{\alpha}}\times( i^*(\tilde{\tob}^{\rho})_{V_{\alpha}}&\rightarrow&(i^*(\tilde{\tob}^{\rho})_{V_{\alpha}}\crcr
F_V:\ps{E}_{V_{\alpha}}\times \tilde{\tob}^{\rho}_{V_{\alpha}}&\rightarrow&\tilde{\tob}^{\rho}_{V_{\alpha}}\crcr
(\alpha(t+i\beta), \ps{S})&\mapsto&F((\alpha(t+i\beta), \ps{S})):=\alpha^*(t+i\beta) \ps{S}\,.
\ea
To show that this is a well defined natural transformation, we prove that, given a pair of contexts $V'$ and $V$, such that $i_{V'V}:V'\subseteq V$,  the following diagram commutes\footnote{Note that, since $(i^*(\ps{A}))_{V_{\alpha}}=\ps{A}_{i(V_{\alpha})}=\ps{A}_{V_{\alpha}}$, form now on we will simply write $\ps{A}_{V_{\alpha}}$ and the action of the $i^*$ functor will be assumed.}:
\[\xymatrix{
\ps{E}_{V_{\alpha}}\times\tilde{\tob}^{\rho}_{V_{\alpha}}\ar[rr]^{F_{V_{\alpha}}}\ar[dd]_{\langle\ps{E}(i_{V'_{\alpha}V_{\alpha}}), \tilde{\tob}^{\rho}(i_{V'_{\alpha}V_{\alpha}})\rangle}&&\tilde{\tob}^{\rho}_{V_{\alpha}}\ar[dd]^{ \tilde{\tob}^{\rho}(i_{V'_{\alpha}V_{\alpha}})}\\
&&\\
\ps{E}_{V'_{\alpha}}\times\tilde{\tob}^{\rho}_{V'_{\alpha}}\ar[rr]^{F_{V'_{\alpha}}}&&\tilde{\tob}^{\rho}_{V'_{\alpha}}\\
}\]
Chasing the diagram round one way, we obtain:
\be
\Big[ \tilde{\tob}^{\rho}(i_{V'_{\alpha}V_{\alpha}})\circ F_V\Big]\big(\alpha(t), \ps{S}\big)=\tilde{\tob}^{\rho}(i_{V'_{\alpha}V_{\alpha}})\big(\alpha(t)^*\ps{S}\big)=(\alpha(t)^*\ps{S})_{|\dV'_{\alpha}}=\ps{S}_{|\alpha(t)V'_{\alpha}}
\ee
while going round the other way it yields:
\be
F_{V'}\circ \langle\ps{E}(i_{V'_{\alpha}V_{\alpha}}), \tilde{\tob}^{\rho}(i_{V'_{\alpha}V_{\alpha}})\rangle\big(\alpha(t), \ps{S}\big)=F_{V'_{\alpha}}\big(\alpha(t)_{|V'_{\alpha}}, \ps{S}_{|\dV'_{\alpha}}\big)=\alpha(t)^*\ps{S}_{|\dV'_{\alpha}}=\ps{S}_{|\alpha(t)V_{\alpha}'}\,.
\ee
Hence, $F$ is a natural transformation.

We can now write the second condition for the KMS state as follows:

\begin{Definition}
The truth object $\tilde{\tob}^{\rho}$ represents the topos analogue of the KMS state iff for all $V_{\alpha}\in\mv(\mn)$, the following diagram commutes
\[\xymatrix{
\tilde{\tob}^{\rho, r}_{V_{\alpha}}\times \tilde{\tob}^{\rho, r}_{V_{\alpha}}\ar[rr]^{\langle\at^*,id\rangle}
\ar[dd]_{\langle\alpha_{t+i\gamma}^*, id \rangle}&&\tilde{\tob}^{\rho, r}_{V_{\alpha}}\times \tilde{\tob}^{\rho, r}_{V_{\alpha}}\ar[rr]^{f}&&\tilde{\tob}^{\rho, r}_{V_{\alpha}}\times \tilde{\tob}^{\rho, r}_{V_{\alpha}}\ar[dd]^{\wedge}\\
&&&&\\
\tilde{\tob}^{\rho, r}_{V_{\alpha}}\times \tilde{\tob}^{\rho, r}_{V_{\alpha}}\ar[rrrr]^{\mu^{\rho}}\ar[dd]^{\wedge}&&&&\tilde{\tob}^{\rho, r}_{V_{\alpha}}\ar[dd]^{\mu^{\rho}}\\
&&&&\\
\tilde{\tob}^{\rho, r}_{V_{\alpha}}\ar[rrrr]^{\mu^{\rho}}&&&& \tilde{\ps{\Gamma[0,1]}}_{V_{\alpha}}\\
}\]
for all $\alpha_{t+i\gamma}\in \Lambda(\ps{E})$. Here $f$ is a switching function,  i.e. $f:\tilde{\tob}^{\rho, r}_{V_{\alpha}}\times \tilde{\tob}^{\rho, r}_{V_{\alpha}}\rightarrow \tilde{\tob}^{\rho, r}_{V_{\alpha}}\times \tilde{\tob}^{\rho, r}_{V_{\alpha}}$; $f(\ps{S}, \ps{T}):=(\ps{T}, \ps{S})$.

\end{Definition}

Thus given a pair $(\ps{S}, \ps{T})\in\tilde{\tob}^{\rho, r}\times \tilde{\tob}^{\rho, r}$, going one way round the diagram, we obtain
\be
\mu^{\rho}\circ\wedge\circ f\circ\langle\at ^*,id\rangle(\ps{S}, \ps{T})_{V_{\alpha}}=\mu^{\rho}\circ\wedge\circ f((\at ^*\ps{S})_{V_{\alpha}}, \ps{T}_{V_{\alpha}})=\mu^{\rho}\circ\wedge(\ps{T}_{V_{\alpha}},\at ^*\ps{S}_{V_{\alpha}})=\mu^{\rho}(\ps{T}\wedge\at ^*\ps{S})_{V_{\alpha}}\,.
\ee
Going the other way instead, it can be proved that
\be
\mu^{\rho}\circ\wedge\circ\langle\alpha_{t+i\gamma}^*, id \rangle(\ps{S}, \ps{T})_{V_{\alpha}}=\mu^{\rho}\circ
\wedge((\alpha_{t+i\gamma}^*\ps{S})(V), \ps{T}_{V_{\alpha}})=
\mu^{\rho}(\alpha_{t+i\gamma}^*\ps{S}\wedge \ps{T})_{V_{\alpha}}\,.
\ee
This is precisely the second KMS condition.

It is straightforward to see that the above definition incorporates Definition \ref{def:condition1} by setting $\gamma=0$.
So far, we have considered the one parameter family of truth objects $\tilde{\tob}^{\rho, r}_V$ one for each $r\in(0,1)_L$. As stated in the Appendix \ref{app:truth}, these can be combined  to define an object in the topos $\Sh(\cvnf\times (0,1)_L)$ as follows:
\begin{Definition}
The truth object $\tilde{\tob}^{\rho}\in \Sh(\cvnf\times (0,1)_L)$  is defined on:
\begin{enumerate}
\item[(i)] Objects: for each pair $(V, r)\in \cvnf\times (0,1)_L$, we 
have
\be
\tilde{\tob}^{\rho}_{(V,r)}:=\coprod_{[g]\in E/H_{FV}}\tob^{\rho, r}_{l_g(V)} \simeq \coprod_{\at \in \autf(V, \mn)}\tob^{\rho, r}_{\at (V)}\ee 
where each 
\be
\tob^{\rho,r}_{\,\at (V)}=\{\ps{S}\in \Sub(\us_{\downarrow\at (V)})|\,\forall 
V' \subseteq\at (V),\;
(\mu^{\rho}\ps{S})(V')\geq r\} \,.
\ee
\item[(ii)] Morphisms: given a map $i:(V', r')\leq (V, r)$ (iff $V'\subseteq V$ and $r'\leq r$), then the associated morphism is
\ba
 \tilde{\tob}^{\rho}(i):\coprod_{[g]\in H/H_{FV}}\tob^{\rho,r}_{\,l_g(V)}&\rightarrow& \coprod_{[h]\in H/H_{FV'}}\tob^{\rho,r}_{\, l_h(V')}\crcr
 \coprod_{\alpha^1(t)\in \autf(V, \mn)}\tob^{\rho, r}_{\,\alpha^1(t)(V)}&\rightarrow&\coprod_{\alpha^2(t)\in \autf (V', \mn)}\tob^{\rho, r}_{\, \alpha^2(t)(V')}
  \ea 
 such that, given a sub-object $\ps{S}\in\tob^{\rho,r}_{\,\alpha^1(t)(V)} $, we get
\be
\breve{\ps{\mathbb{T}}}^{\rho}(i)\ps{S}:=\ps{\mathbb{T}}^{\rho,r}(i_{\alpha^1(t)(V),\alpha^2(t)(V')})\ps{S}=\ps{S}_{|\alpha^2(t)(V')}
\ee 
where 
$\alpha^2(t)\leq \alpha^1(t)$ thus 
$p(\alpha^2(t))\subseteq p(\alpha^1(t))$ and $\alpha^2(t)(V')=(\alpha^1(t))|_{V'}(V')$. 
Obviously, now the condition on the restricted sub-object is
$(\mu^{\rho}\ps{S})(V'')\geq r'$ where $V''\subseteq
\alpha^2(t)(V')$. However, such a condition is trivially satisfied since
$r'\leq r$.
\end{enumerate}
\end{Definition}
Given the internal definition of the topos KMS state, then it is possible to derive conditions on expectation values and truth values in the same was as was done for the external perspective.  Moreover the derivation of the standard KMS state from the topos KMS state is done similarly as for the internal situation.

\section{Connection with Tomita-Takesaki modular theory}
\label{sect:tttheo}

We briefly analyze the connection between the topos formulation of KMS state and the Tomita-Takesaki formulation. The topos role of the operators $\Delta$ and $J$ has to be identified. 
We will focus only on the external formulation, but it is clear
that there should exist a corresponding internal formulation of the
following.

In Tomita-Takesaki theory, $\Delta$ is a self-adjoint operator called modular operator and $J$ is an antiunitary operator.  As such, in our topos $\Setn$, $\Delta$  should be represented  as a map between the state space $\us$ and the real number quantity value object $\ps{\Rl}^{\leftrightarrow}$, i.e.
$
\breve{\delta{\Delta}}:\us\rightarrow \ps{\Rl}^{\leftrightarrow}
$, whereas $J$ should be  represented as 
$
\breve{\delta}(J):\us\rightarrow\ps{\Cl}^{\leftrightarrow}.
$ 
Nevertheless, such a description, following from \cite{andreas5}, 
does not seem to be useful in the present analysis. We proceed
differently in the following.

\begin{Definition}
The presheaf $\ps{\Delta}\in \Setn$  is defined on:
\begin{enumerate}
\item [(i)] Objects: for each $V\in\cvn$, we have the set $\ps{\Delta}_V:=\{\Delta^{it}|\, t\in\Rl\}$ where $\Delta^{it}$ is a unitary operator and 
$\ps{\Delta}_V$ forms a strongly continuous unitary group.
\item [(ii)] Morphisms: for a given morphism $i_{V'V}:V'\subseteq V$, the corresponding presheaf morphism is simply the identity map.
\end{enumerate}
\end{Definition}
 Such a presheaf represents a group in $\Setn$ and is clearly the more abstract definition of the group $\ps{H}$. Thus, when defining the topos analogue of the KMS state, if we want to have an explicit connection with Tomita-Takesaki formulation, $\ps{H}$ should be replaced by $\ps{\Delta}$.

On the other hand, to describe $J$ in the topos formulation, its role 
has to be understood better. In particular, let us consider a von Neumann algebra $\mn$, such that it contains two sub-algebras which we call $\mn_l$ and $\mn_r$. These sub-algebras are such that i) $\mn_l\cap\mn_r=id_{\mn}$, ii) $\mn_l=(\mn_r)^{''}$ ($\mn_r=(\mn_l)^{''}$) and iii) 
\be
\label{equ:level}
J\mn_lJ=\mn_r\,.
\ee
It can be underlined that this description is not physically insignificant. Indeed, by applying the modular theory to a particle in a plane submitted to a magnetic field  perpendicular to that plane, 
 $\mn_{l,r}$ algebras correspond to algebra of Hilbert-Schmidt 
operators built out of the two sectors of the Hilbert
space of states (Landau levels) of the system when the magnetic field point towards the ``up'' or ``down'' direction \cite{ali}. 
In this setting the transformation  \eqref{equ:level} represents a change in the magnetic direction. 

Starting from \eqref{equ:level},  at the level of categories, 
a map can be defined as follows:
\ba\label{ali:jmap}
P:\cV(\mn_l)&\rightarrow&\cV(\mn_r)\\
V_l&\mapsto&JV_lJ
\ea
where $\cV(\mn_l)$ and $\cV(\mn_r)$ are the categories of von Neumann abelian sub-algebras of the algebras $\mn_l$ and $\mn_r$ respectively. Both such category are ordered by inclusion.
We intend to show that the map $P$ is continuous, to this end we need to define a topology on the categories $\cV(\mn_l)$ and $\cV(\mn_r)$. Since they are both posets under algebra inclusion, they are automatically equipped with the Alexandroff topology whose basis are the lower sets $\dV:=\{V'|V'\subseteq V\}$.
In order to show continuity, we define an inverse which is trivially
\ba
P^{-1}:\cV(\mn_r)&\rightarrow&\cV(\mn_l)\\
V_r&\mapsto&J^{-1}V_rJ^{-1}
\ea 
\begin{Theorem}
The map $P$ is continuous with respect to the Alexandroff topology.
\end{Theorem}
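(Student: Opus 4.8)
The plan is to reduce the statement to the elementary fact that, for two posets equipped with the lower Alexandroff topology, a map is continuous precisely when it is order preserving; equivalently, it suffices to check that the preimage of every basic open set $\downarrow V_r:=\{V'\in\cV(\mn_r)\mid V'\subseteq V_r\}$ is open in $\cV(\mn_l)$. So I would organise the proof around two points: (i) $P$ is well defined, i.e.\ $JV_lJ$ really is an abelian von Neumann sub-algebra of $\mn_r$ for every $V_l\in\cV(\mn_l)$; and (ii) $P^{-1}(\downarrow V_r)$ is a down-set of $\cV(\mn_l)$ for every $V_r$.

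For (i) I would invoke the Tomita--Takesaki data recalled above: $J$ is antiunitary, $J=J^*$, $J^2=I_\mh$, and $J\mn_lJ=\mn_r$. The conjugation $\mathrm{Ad}_J:A\mapsto JAJ$ is then a bijection of $\mb(\mh)$ which is conjugate-linear, multiplicative (indeed $JABJ=JAJ\,JBJ$ using $J^2=I_\mh$), $*$-preserving, and ultraweakly continuous because $J$ is antiunitary. Hence $\mathrm{Ad}_J$ carries weakly closed $*$-sub-algebras to weakly closed $*$-sub-algebras and preserves commutativity, so it restricts to a bijection $\cV(\mn_l)\to\cV(\mn_r)$; in particular $P(V_l)=JV_lJ\in\cV(\mn_r)$, and $P$ is order preserving, since $V_l\subseteq V_l'$ forces $JV_lJ\subseteq JV_l'J$. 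The same remarks applied to $P^{-1}$ show $J V_r J\in\cV(\mn_l)$.

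For (ii), using $J^{-1}=J$ I would simply compute, for fixed $V_r\in\cV(\mn_r)$,
\be
P^{-1}(\downarrow V_r)=\{V_l\in\cV(\mn_l)\mid JV_lJ\subseteq V_r\}=\{V_l\in\cV(\mn_l)\mid V_l\subseteq JV_rJ\}=\downarrow (JV_rJ)\,,
\ee
which is a basic open set for the Alexandroff topology on $\cV(\mn_l)$. Since every open set is a union of such $\downarrow V$'s and preimages commute with unions, $P$ is continuous; applying the identical argument to $P^{-1}$ (which has the same conjugation-by-$J$ form) shows $P^{-1}$ is continuous as well, so $P$ is in fact a homeomorphism.

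The only step that is not purely formal is (i): one must be certain that conjugating an abelian von Neumann algebra by the antiunitary $J$ again yields an abelian von Neumann algebra, and this rests on the normality (ultraweak continuity) of $\mathrm{Ad}_J$ together with the hypothesis $J\mn_lJ=\mn_r$. Everything else is the routine translation ``continuous for the Alexandroff topology $\Leftrightarrow$ monotone'' combined with the elementary identity $J^{-1}=J$.
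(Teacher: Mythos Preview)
Your proof is correct and follows essentially the same route as the paper: both arguments show that the preimage of a basic down-set $\downarrow V_r$ is again a basic down-set, namely $\downarrow(J V_r J)=\downarrow P^{-1}(V_r)$. Your version is somewhat more thorough---you explicitly verify that $P$ is well defined on $\cV(\mn_l)$ and go on to note that $P$ is in fact a homeomorphism---but the core continuity argument is the same computation the paper gives.
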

\begin{proof}
Given an open set (it suffices to show this for the basis sets)  $\dV_r$, the pullback is $P^{-1}(\dV_r)$ and is defined as 
\be
P^{-1}(\dV_r):=\{P^{-1}V'_r|\,V'_r\subseteq V_r\}\,.
\ee
Since $V'_r\subseteq V_r$ iff $P^{-1}V'_r\subseteq P^{-1}V_r$, the above equation can be written as
\be
P^{-1}(\dV_r):=\{P^{-1}V'_r|\,P^{-1}V'_r\subseteq P^{-1}V_r\}=\downarrow P^{-1}V_r=\dV_l\,.
\ee
\end{proof}
\begin{Lemma}
Given the map $P:\cV(\mn_l)\rightarrow \cV(\mn_r)$ between the two posets, then $P$ is order preserving iff it is continuous with respect to the Alexandroff topology, i.e. for each lower subset $\downarrow L\in\cV(\mn_r)$ then $P^{-1}(\downarrow L)$ is a lower subset in $\cV(\mn_l)$.
\end{Lemma}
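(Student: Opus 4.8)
The plan is to prove this as the standard equivalence between monotonicity and Alexandroff-continuity for maps of posets, specialised here to the lower Alexandroff topology, whose open sets are precisely the lower sets and which has as a basis the collection of principal down-sets $\dV$. The two implications are each a one-line diagram chase once this characterisation of the open sets is in hand.

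First I would prove ``order preserving $\Rightarrow$ continuous''. It suffices to check that the preimage under $P$ of each basic open set $\downarrow\!L$ of $\cV(\mn_r)$ is open in $\cV(\mn_l)$, and since the open sets here are exactly the lower sets, it suffices to show that $P^{-1}(\downarrow\!L)$ is down-closed. So take $V\in P^{-1}(\downarrow\!L)$ and any $V'\subseteq V$ in $\cV(\mn_l)$; monotonicity of $P$ gives $P(V')\subseteq P(V)$, and since $P(V)\in\downarrow\!L$ with $\downarrow\!L$ a lower set, we get $P(V')\in\downarrow\!L$, i.e. $V'\in P^{-1}(\downarrow\!L)$. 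Hence $P^{-1}(\downarrow\!L)$ is a lower set, so open, and $P$ is continuous.

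Conversely, for ``continuous $\Rightarrow$ order preserving'' I would use that every principal down-set is itself open. Given $V'\subseteq V$ in $\cV(\mn_l)$, the set $\downarrow\!P(V)$ is open in $\cV(\mn_r)$, so by continuity $P^{-1}(\downarrow\!P(V))$ is open, hence a lower set, in $\cV(\mn_l)$. It contains $V$ because $P(V)\in\downarrow\!P(V)$, so being down-closed it also contains $V'$; that is, $P(V')\in\downarrow\!P(V)$, i.e. $P(V')\subseteq P(V)$. Thus $P$ preserves the order.

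I do not expect a serious obstacle here. The only point needing a little care is the characterisation of the Alexandroff topology that is used in both directions — that its open sets are exactly the lower sets (equivalently, arbitrary unions of principal down-sets $\dV$), and in particular that each $\dV$ is open — which is precisely the fact already invoked in the preceding theorem on the continuity of $P$, so it is legitimate to take it as given. One may also note that the argument is completely formal and applies verbatim with $\cV(\mn_l)$ and $\cV(\mn_r)$ interchanged and $P^{-1}$ in place of $P$, consistent with $P$ being a homeomorphism.
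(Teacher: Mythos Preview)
Your proof is correct and follows essentially the same approach as the paper's own proof: both directions are argued in the same way, using down-closedness of $P^{-1}(\downarrow L)$ for the forward direction and the principal down-set $\downarrow P(V)$ for the converse. The only differences are notational, and your added remark about the characterisation of the Alexandroff topology being the key input is a helpful clarification that the paper leaves implicit.
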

\begin{proof}
Let us assume that $P$ is order preserving and $\dV_r$ is a lower set in $\cV(\mn_r)$. Moreover consider $V_l^1\in P^{-1}(\dV_r)$ which is equivalent to $P(V_l^1)=V_r^1\in \dV_r$. We assume that $V^2_l\in\cV(\mn_l)$ such that $V^2_l\leq V^1_l$. 
Since $P$ is order preserving by assumption, then $P(V^2_l)\leq P(V_l^1)=V_r^1\in \dV_r$. Since $\dV_r$ is a lower set it means that  $P(V^2_l)\in \dV_r$, i.e. $V^2_l\in P^{-1}(\dV_r)$. It follows that  $P^{-1}(\dV_r)$ is a lower set, i.e. $P$ is continuous.

Conversely, let us assume that $P$ is continuous, then, given any lower set $\dV_r\in\cV(\mn_r)$, $P^{-1}\dV_r$ is a lower set in $\cV(\mn_l)$ and that there is a pair $V_l^1, V_l^2\in\cV(\mn_l)$ such that $V_l^1\leq V_l^2$. Since $\downarrow P(V_l^2)$ is a lower set in $\cV(\mn_r)$ then $P^{-1}\downarrow P(V_l^2)$ is a lower set in $\cV(\mn_l)$. However $P(V_l^2)\in \downarrow P(V_l^2)$, therefore $V_l^2\in P^{-1}\downarrow P(V_l^2)$. But  $V_l^1\leq V_l^2$ implies that $V_l^1\in P^{-1}\downarrow P(V_l^2)$ ($P(V_l^1)\in\downarrow P(V_l^2)$) which means that $P(V_l^1)\leq P(V_l^2)$. Thus $P$ is order preserving.
\end{proof}

Since $P$ is continuous, we can lift it to the level of sheaves and define a geometric morphism whose direct image is \ba
P_*:\Sh(\cV(\mn_l))&\rightarrow&\Sh(\cV(\mn_r))\\
\ps{S}&\mapsto&P_*(\ps{S}):=\ps{S}\circ P^{-1}
\ea
such that, for each context $V_r\in\cV(\mn_r)$, we have $(P_*(\ps{S}))(V_r)=\ps{S}\circ P^{-1}(V_r)=\ps{S}(J^{-1}V_rJ^{-1})$.
On the other hand the inverse image is
\ba
P^*:\Sh(\cV(\mn_r))&\rightarrow&\Sh(\cV(\mn_l)\\
\ps{A}&\mapsto&P^*(\ps{A}):=\ps{A}\circ P
\ea
such that, for each context $V_l\in\cV(\mn_l)$, we have $(P^*(\ps{A}))(V_l)=\ps{A}\circ P(V_l)=\ps{S}(JV_lJ)$.
The above maps are adjoint of each another in particular $P^*$ is the left adjoint of $P_*$ which is denoted as $P^*\dashv P_*$. The properties of adjoints are the following: 
To each morphism $f:X_r\rightarrow P_*A_l$ in $\Sh(\cV(\mn_r))$, there is associated the unique morphism $h:P^*X^r\rightarrow A_l$ in $\Sh(\cV(\mn_l)$. This means is that the following 
\be
\theta:\Hom_{\Sh(\cV(\mn_r))}(X_r, P_*A_l)\rightarrow \Hom_{\Sh(\cV(\mn_l))}(P^*X_r, A_l)
\ee
is an isomorphism for any element $X_r\in \Sh(\cV(\mn_r))$ and $A_l\in \Sh(\cV(\mn_l))$. 
We conjecture at this point that the change in the magnetic direction can be described in terms of a geometric morphism. The consequences of such a property will be 
analyzed in a forthcoming work. 

\section*{Acknowledgements}
C.F. would like to thank her
mother Elena Romani, her father Luciano Flori and her grandmother Pierina Flori for constant support and encouragement.
 
Research at Perimeter Institute is supported by the Government of Canada through Industry
Canada and by the Province of Ontario through the Ministry of Research and Innovation.

\bigskip
\bigskip

\section*{Appendix}
\label{app}

\appendix

\renewcommand{\theequation}{\Alph{section}.\arabic{equation}}
\setcounter{equation}{0}

This appendix gathers basic notions in category and topos theories,
precisions on notations and topos applications to any quantum phase space. 
Well-known results used in the text have been, for the sake
of clarity, rederived.

\section{Short introduction to category theory}
\label{app:category}

\subsection{Categories, arrows, posets and all that}

{\it ``Category Theory allows you to work on structures without the need first to pulverize them into set theoretic dust''} (Corfiel).

The above quotation explains, in a rather pictorial way, what {\it category theory} and, in particular, {\it topos theory} are really about.
In fact, category theory (and topos theory) allows to abstract from the specification of points (elements of a set) and 
functions between these points to a universe of discourse in which the basic elements are arrows, and any property is given 
in terms of compositions of arrows.

The reason for the above characterization is that the underlying philosophy behind category theory (and topos theory) is that of describing mathematical objects from an external point of view, i.e. in terms of relations.

This is in radical contrast to {\it set theory}, whose approach is essentially internal in nature. By this we mean that the basic/primitive notions of what sets are and the belonging relations between sets, are defined in terms of the elements which belong to the sets in question, i.e. an internal perspective.

In order to be able to implement the notion of external definition, we first need to define two important notions: i) the notion of a map or arrow, which is simply an abstract characterization\footnote{By abstract characterization here we mean a notion that does not depend on the sets or objects between which the arrow is defined.} of the notion of a function between sets; ii) the notion of an {\it equation} in categorical language. We will first start with the notion of a map. For detailed analysis see \cite{topos3,topos8, bell, mythesis}.

Given two general objects $A$ and $B$ (not necessarily sets) an arrow $f$ is said to have domain $A$ and codomain $B$ if it goes from $A$ to $B$, i.e. $f:A\rightarrow B$. It is convention to denote $A={\rm dom}(f)$ and $B={\rm cod}(f)$.
We often draw such an arrow as follows:
\be
A\xrightarrow{f}B
\ee

Given two arrows $f:A\rightarrow B$ and $g:B\rightarrow C$, such that ${\rm cod}(f)={\rm dom}(g)$ then we can {\it compose} the two arrows obtaining $g\circ f:A\rightarrow C$. The property of {\it composition} is drawn as follows:
\be
A\xrightarrow{f}B\xrightarrow {g}C
\ee
For each object (or set $A$), there always exists and {\it identity arrow} \be
id_A:A\rightarrow A\,,\quad \text{such that}
\quad id_A(a)=a\,, \quad \forall a\in A\,.
\ee
The collection of arrows between various objects satisfies two laws:
\begin{enumerate}
\item [i)] {\it Associativity}: given three arrows $f:A\rightarrow B$, $g:B\rightarrow C$ and $h:C\rightarrow D$ with appropriate domain and codomain relations,  we then have
\be
h\circ (g\circ f)=(h\circ g)\circ f\,.
\ee
\item [ii)] {\it Unit law}: given $f:A\rightarrow B$, $id_A:A\rightarrow A$ and $id_B:B\rightarrow B$ the following holds
\be
f\circ id_A=f=id_B\circ f\,.
\ee
\end{enumerate}

We can now give the axiomatic definition of a category:
\begin{Definition}
A (small\footnote{A category $\mathcal{C}$ is called small if $Ob(\mathcal{C})$ is a Set. }) category $\mathcal{C}$ consists of the following elements:
\begin{itemize}
\item[1.] A collection $Ob(\c)$ of $\c$-objects.
\item[2.] For any two objects $a, b\in Ob(\c)$, a set $Mor_{\c}(a, b)$ of $\c$-arrows (or $\c$-morphisms) from $a$ to $b$.
\item[3.] Given any three objects $a, b, c\in\c$, a map which represents composition operation.
\ba
\circ:Mor_{\c}(b,c)\times Mor_{\c}(a,b)&\rightarrow& Mor_{\c}(a,c)\\
(f,g)&\mapsto&f\circ g
\ea
Composition is {\it associative}, i.e. for $f\in Mor_{\c}(b,c)$, $g\in Mor_{\c}(a,b)$ and $h\in Mor_{\c}(c,d)$ we have
\be
h\circ(f\circ g)=(h\circ f)\circ g
\ee
which in diagrammatic form is the statement that the following diagram commutes
\[\xymatrix{
\ar[dd]_{h\circ(f\circ g)}&&a\ar[rr]^{g}\ar[ddrr]_{\;\;\;\;h\circ f\;\;\;}\ar[dd]_{(h\circ f)\circ g}&&b\ar[dd]^{f}\ar[ddll]^{\;\;\;f\circ g}\\
&&&&\\
&&d&&c\ar[ll]^{h}\\
}\]
\item[4.] For each object $b\in\c$, an identity morphism $id_{b}\in Mor_{\c}(b,b)$, such that the following {\it Identity law} holds: for all $g\in Mor_{\c}(a,b)$ and $f\in Mor_{\c}(b, c)$ then $f=f\circ id_b$ and $g=id_b\circ g$. In diagrammatic form, this is represented by the fact that the diagram
\[\xymatrix{
a\ar[rr]^{g}\ar[ddrr]^{g}&&b\ar[dd]^{id_b}\ar[ddrr]^{f}&&\\
&&&&\\
&&b\ar[rr]^{f}&&c\\
}\]
commutes.
\end{itemize}
\end{Definition}
So, a category is essentially a collection of diagrams for which certain equations or commutative relations hold.

An example of a category which is used throughout the paper is a {\it poset}. 
A poset is a set in which the elements are related by a partial order, i.e. not all elements are related to each other. The definition of a poset is as follows:
\begin{Definition}
Given a set $P$ we call this a poset iff a partial order $\leq$ is defined on it.  A partial order is a binary relation $\leq$ on a set $P$, which has the following properties:
\begin{itemize}
\item Reflexivity: $a\leq a$ for all $a\in P$.
\item Antysimmetry: if $a\leq b$ and $b\leq a$, then $a=b$.
\item Transitivity: If $a\leq b$ and $b\leq c$, then $a\leq b$.
\end{itemize}

\end{Definition}
An example of a poset is any set with an inclusion relation defined on it. Another example is $\Rl$ with the usual ordering defined on it.  

A poset forms a category whose objects are the elements of the poset and, given any two elements $p, q$, there exists a map $p\rightarrow q$ iff $p\leq q$ in the poset ordering.
We will be using such a (poset) category quite often when defining a topos description of quantum theory. Thus, it is worth pointing out the following:
\begin{Definition}
Given two partial ordered sets $P$ and $Q$, a map/arrow $f:P\rightarrow Q$ is a partial order homomorphism (otherwise called monotone functions or order preserving functions) if
\be
\forall x,y\in P\,, \quad \;\;x\leq y\,\,\Longleftrightarrow \,\,f(x)\leq f(y)\,.
\ee
\end{Definition}
Homomorphisms are closed under composition. A trivial example of partial order homomorphisms is given by the identity
maps.

We now define some basic notions in category theory. 

\begin{Definition} 
A \textbf{terminal object} in a category $\c$ is a $\c$-object 1 such that, 
given any other $\c$-object A, there exists one and only one $\c$-arrow from A 
to 1. 
\end{Definition}
For example, in $\Sets$, a terminal object is a singleton $\{*\}$. Since given any other element $A\in \Sets$, there exists one and only one arrow $A\rightarrow\{*\}$. On the other hand, in \textbf{Pos}, the poset $(\{*\}, \{(*, *)\})$ is the terminal object.

\subsection{Functors and natural transformations}

It is also possible to define maps between categories, such maps are called {\it functors}. Generally speaking a functor is a transformation from one category $\c$ to another category $\d$, such that the categorical structure of the domain $\c$ is preserved, i.e. gets mapped onto the structure of the codomain category
$\d$.\\
There are two types of functors {\it covariant functors} and {\it contravariant functor}.

\begin{Definition}:
A \textbf{covariant functor} from a category $\c$ to a category $\d$ is a map 
$F:\c\rightarrow\d$ that assigns to each $\c$-object $a$, a 
$\d$-object F(a) and to each $\c$-arrow $f:a\rightarrow b$ a $\d$-arrow
$F(f):F(a)\rightarrow F(b)$, such that the following are satisfied:
\begin{enumerate}
\item $F(id_a)=id_{F(a)}$
\item $F(f \circ g)=F(f) \circ F(g)$ for any $g:c\rightarrow a$ 
\end{enumerate}
\end{Definition}

It is clear from the above that a covariant functor is a transformation that preserves both:
\begin{itemize}
\item The domain's and the codomain's identities.
\item The composites of functions, i.e. it preserves the direction of the arrows.
\end{itemize}
A pictorial description for a covariant functor is given as follows:
 \[\xymatrix{
a\ar[rr]^f\ar[rrdd]_h&&b\ar[dd]^g\\
&&\\
&&c\\
}
\xymatrix{\ar@{=>}[rr]^F&&}
\xymatrix{
F(a)\ar[rr]^{F(f)}\ar[rrdd]_{F(h)}&&F(b)\ar[dd]^{F(g)}\\
&&\\
&&F(c)\\
}\]
On the other hand, we have
\begin{Definition}

A \textbf{contravariant functor} from a category $\c$ to a category $\d$ is a 
map $X:\c\rightarrow\d$ that assigns to each $\c$-object a a 
$\d$-object X(a) and to each $\c$-arrow $f:a\rightarrow b$ a $\d$-arrow
$X(f):X(b)\rightarrow X(a)$, such that the following conditions are satisfied:
\begin{enumerate}
\item $X(id_a)=id_{X(a)}$
\item $X(f \circ g)=X(g) \circ X(f)$ for any $g:c\rightarrow a$ 
\end{enumerate}
\end{Definition}
A diagrammatic representation of a contravariant functor is the following:

\[\xymatrix{
a\ar@{>}[rr]^f\ar[rrdd]_h&&b\ar[dd]^g\\
&&\\
&&c\\
}
\xymatrix{\ar@{=>}[rr]^X&&}
\xymatrix{
F(a)&&X(b)\ar[ll]^{X(f)}\\
&&\\
&&X(c)\ar[lluu]^{X(h)}\ar[uu]^{X(g)}\\
}\]

\vspace{.2in}
Thus, a contravariant functor in mapping arrows from one category 
to the next reverses
the directions of the arrows, by mapping domains to codomains and vice versa. A contravariant functor is also called a {\it presheaf}.\\
So far we have defined categories and maps between them called {\it functors}. We will now abstract a step more and define maps between functors. These are called {\it natural transformations}.
 \begin{Definition} 
A \textbf{natural transformation} from $Y:\c\rightarrow \d$ to 
$X:\c\rightarrow \d$ 
is an  assignment of an arrow 
$N:Y\rightarrow X$ that associates to each object A in $\c$ an arrow 
$N_A:Y(A)\rightarrow X(A)$ in $\d$ such that, for any 
$\c$-arrow $f:A\rightarrow B$ the following diagram commutes
\[\xymatrix{
A\ar[dd]^f&&Y(B)\ar[rr]^{N_B}\ar[dd]_{Y(f)}&&X(B)\ar[dd]^{X(f)}\\
&&&&\\
B&&Y(A)\ar[rr]_{N_A}&&X(A)\\
}\]
i.e.
\begin{equation*}
N_A \circ Y(f)=X(f) \circ N_B
\end{equation*}
\end{Definition}
Here $N_A:Y(A)\rightarrow X(A)$ are the components on $N$, while $N$ is the {\it natural transformation}.
From this diagram, it is clear that the two arrows $N_A$ and $N_B$ turn the $Y$-picture of $f:A\rightarrow B$
into the respective $X$-picture.
If each $N_A$ ($A\in\c$) is an isomorphism, then $N$ is a {\it natural isomorphism}
\be
N:Y\xrightarrow{\simeq}X
\ee
The collection of all contravariant functor, with natural transformations as morpshims, forms a topos which is denotes as $\Sets^{\c^{op}}$ . Roughly speaking a topos is a category which has a lot of extra structure. This extra structure implies that a general topos `looks like' $\Sets$ in the sense that any mathematical operation which can be done in set theory has a general topos analogue (ex: disjoint union, cartesian product etc.).\\\\
The most important properties of a {\it topos} for our purpose are:
\begin{enumerate}
\item The existence of a terminal object, with associated global elements.
\item The existence of a subobject classifier $\Omega$ which represents the generalisation of the set $\{0,1\}\simeq\{\text{ true, false}\}$ of truth-values in the category $\Sets$. As the name suggests, the subobject classifier identifies subobjecs. In the case of $\Sets$, given a set $A$ we say either $A\subseteq X$ or $A\nsubseteq X$. Thus to the proposition ``$A$ a subset of $X$" we can ascribe either the value true ($1$) or false $(0)$, respectively. Moreover, if $A\subseteq X$ one can ask which points $x\in X$ lie also in $A$. This can be expressed mathematically
using the so called characteristic function: $\chi_A:S\rightarrow\{0,1\}$, which is 
defined as follows:
\be\label{equ:character}
\chi_A(x)=\begin{cases} 0 \text{ if } x\notin A\\
1 \text{ if }x\in A  
\end{cases}
\ee
where here $\{0,1\}=\{\text{false, true}\}$. Thus, in sets, we only have true or false as truth values, i.e $\Omega=\{0,1\}$. This type of truth values determines the internal logic of set to be Boolean, i.e. $S\vee\neg S=1$.\\
However, in a general topos $\Omega\neq\{0,1\}$ but it will be a more generalised object (not necessarily a set) leading to a multivalued logic. In this setting we obtain a well defined mathematical notion of what it means for an object to nearly be a subobject of a given object and how far it is from being a subobject. Thus the role of a sbobject classifier $\Omega$ in a topos is to define how subobjects fit in a given object.\\
The elements of this object $\Omega$, similarly as was the case in $\Sets$, represent the truth values. The collection of all such truth values undergoes a Heyting algebra (see below). 
\item The existence of an internal logic derived from the collection of all sub-objects of any object in the topos. Such a logic is called a {\it Heyting algebra}, which is distributive and for which the law of excluded middle does not hold, i.e. $S\vee \neg S\leq 1$. 
This represents a generalisation of the Boolean algebra in $\Sets$ for which $S\vee \neg S= 1$.\\
An example of Heyting algebra is given by the collection of all open sets in a topological space
\end{enumerate}

A terminal object in the Topos of presheaves $\Sets^{\c^{op}}$ is defined as follows:
\begin{Definition}
A \textbf{terminal object} in $\Sets^{\c^{op}}$ is the constant functor 
$1:\c\rightarrow \Sets$
that maps every $\c$-object to the one element  set $\{*\}$ and every 
$\c$-arrow to 
the identity arrow on $\{*\}$.
\[\xymatrix{
&&C\ar[dl]\\
E\ar[d]&B\ar[dl]&\\
A&&D\ar[ll]\\
}\hspace{.2in}
\xymatrix{
&&\\
\ar@{=>}[rr]^{1}&&}\hspace{.2in}
\xymatrix{
&&\{*\}\ar[dl]^{id_{\{*\}}}\\
\{*\}\ar[d]_{id_{\{*\}}}&\{*\}\ar[dl]^{id_{\{*\}}}&\\
\{*\}&&\{*\}\ar[ll]^{id_{\{*\}}}\\
}\]

\end{Definition}

The notion of terminal object allows us to define the notion of global element (or global section).
\begin{Definition} \label{def:glo}
A \textbf{global section} or \textbf{global element} of a presheaf $X$ in $\Sets^{\c^{op}}$ is an arrow 
$k:1\rightarrow X$ from the terminal object 1 to the presheaf $X$.
\end{Definition} 
What $k$ does is to assign to 
each object A in $\c$ an element $k_A\in X(A)$ in the corresponding set of the presheaf X.  
The assignment is such that, given an arrow $f:B\rightarrow A$ the following relation holds
\begin{equation}
X(f)(k_A)=k_B  \,.   \label{eq:k}
\end{equation}
The above relation \eqref{eq:k} uncovers that the fact that 
the elements of $X(A)$ assigned by the global section $k$ are mapped 
into each other by the morphisms in $X$.
Presheaves with a local or partial section can exist even if they do not have a global section.

\section{State space and sub-objects }
\label{app:stat}
We first discuss the topos analogue of state space and its sub-objects
in this section.

Let $\mn$ be a von Neumann algebra 
and $\cvn$ be the category of abelian von Neumann sub-algebras
of $\mn$. We consider also $\Sets$ the category of sets. 
The main topos considered here is $\Setn$, set of all 
contravariant functors from $\cvn \to \Sets$.

\subsection{State space}
To start, let us introduce basic ingredients in order to define the topos state space.

\begin{Definition}
\label{appdefsetn}
The topos $\Setn$ is the collection of all contravariant functors (called presheaves) from $\cV(\mn)$ to $\Sets$. A general element $T\in \Setn$ is defined by:
\begin{itemize}
\item[(i)] Objects: Given an object $V$ in $\cV(\mn)$, the associated set is $T(V)=T_V\in \Sets$.
\item[(ii)] Morphisms: Given a morphism $i_{V'V}:V'\rightarrow V$ ($V'\subseteq V$) in $\cV(\mn)$, the associated morphism is $T(i_{V'V}):T(V)\rightarrow T(V')$. 
\end{itemize}
\end{Definition}

The topos state space is the object in $\Setn$ defined as follows:
\begin{Definition}
\label{appdefus}
The spectral presheaf, $\us$, is the contravariant functor from the
category $\cV(\mn)$ to $\Sets$  defined
by:
\begin{itemize}
\item[(i)] Objects: Given an object $V$ in $\cvn$, the associated set $\us(V)=\us_V$ is defined to be the Gel'fand spectrum of the (unital) commutative von Neumann sub-algebra $V,$ i.e. the set of all multiplicative linear functionals $\lambda:V\rightarrow \Cl$, such that $\lambda(\hat{1})=1$, where $\hat{1}$ is the unit operator on $\mn$,
and that $\lambda(ab)=\lambda(a)\lambda(b)$, for  $a,b\in V$. 
We simply write: $\us_V=\{\lambda: V\to \Cl|\; \lambda(\hat{1})=1\}$.

\item[(ii)] Morphisms: Given a morphism $i_{V'V}:V'\rightarrow V$ ($V'\subseteq V$) in $\cvn$, the associated morphism $\us(i_{V'V}):\us_V\rightarrow
\us_{V'}$ is defined for all $\lambda\in\us_V$ to be the
restriction of the functional $\lambda:V\rightarrow\Cl$ to the
sub-algebra $V'\subseteq V$, i.e.
$\us(i_{V'V})(\lambda):=\lambda|_{V'}$.
\end{itemize}
\end{Definition}

Note that since our base category $\cvn$ is a partially ordered set, then we have an equivalence of topoi: $\Setn\simeq \Sh(\cvn)$ where $\Sh(\cvn)$ is the topos of sheaves over $\cvn$. A sheaf is nothing but a presheaf  carrying topological information and can be defined as an etal\'e bundle. In this paper, we will denote presheaves as $\ps{X}$ while the corresponding etal\'e bundles as $X$. The interested reader 
is referred to \cite{andreas5}. 

\begin{Definition}
\label{appdefsub}

A sub-object $\ps{S}$ of the spectral presheaf $\us$ is a contravariant functor  $\ps{S} :\cvn\rightarrow \Sets$ such that:
\begin{itemize}
\item[(i)] $\ps{S}_V$ is a subset of $\us_V$ for all $V\in\cvn$.
\item[(ii)] Given a map $i_{V'V}:V'\to V$, then 
$\ps{S}(i_{V'V} ): \ps{S}_V \rightarrow \ps{S}_{V'}$ is simply the restriction of the map $\us(i_{V'V} )$ to the subset $\ps{S}_V\subseteq \us_V$, thus it is given by $\lambda\mapsto\lambda|_{V'}$.
\end{itemize}
When each set $\ps{S}_V$ is clopen (closed and open) in $\us_V$, where the latter is equipped
with the usual compact and Hausdorff spectral topology, then $\ps{S}$ is called a clopen sub-object of $\us$.
The collection of all clopen sub-objects of $\us$ is denoted by $\Sub(\us)$.
\end{Definition}

\subsection{Projection operators and sub-objects of the state space}
\label{subsect:proj}

We now describe clopen sub-objects of $\us$ which are the topos analogues of propositions.

As a first step, we have to introduce the concept of
{\it daseinisation}.  Roughly speaking, the daseinisation procedure is
a kind of approximation scheme for operators so as to make them fit into any given context $V$.
Explicitly, consider the simple case in which we would like
to analyze the projection operator $\hat{P}$, which corresponds via the
spectral theorem to the proposition ``$A\in\Delta$''.\footnote{It should be noted that different propositions may correspond to the same projection operator, i.e. the mapping from propositions to projection operators is many to one. Thus, to account for this, one is really associating an equivalence class of propositions to each projection operator. The reason why von Neumann algebras were chosen instead of general C$^*$ algebras is precisely because all projections representing propositions are contained in the former, but not necessarily in the latter.} In particular, let us take a context $V$ such that $\hat{P}\notin
\mP(V)$ (the lattice of projection operators in $V$). We need to define a
projection operator which does belong to $V$ and which is related,
in some way, to our original projection operator $\hat{P}$. This
can be achieved by approximating $\hat{P}$
from above in $V$, with the {\it smallest} projection operator in $V$,
{\it greater than} or equal to $\hat{P}$. More precisely, the
{\it outer dasenisation},
 $\dase(\hat P)$, of $\hat P$ is defined at each context $V$ by
 \begin{equation}
\dase(\hat{P})_V:=\bigwedge\{\hat{R}\in
\mP(V)|\hat{R}\geq\hat{P}\}\,.
\end{equation}
Since projection operators represent propositions, $\dase(\hat{P})_V$ is a coarse graining of the proposition $``A\in \Delta"$.

This process of outer dasenisation takes place for all contexts
and, hence, gives for each projection operator $\hat{P}$, a
collection of daseinised projection operators, one for each
context V, i.e.,
\begin{align}
\hat{P}\mapsto\{\dase(\hat{P})_V|\,V\in\cvn \}\,.
\end{align}
Due to the Gel'fand transform, to each operator $\hat{P}\in
\mP(V)$ is associated the map $\bar{P}:\us_V\rightarrow\Cl$,
which takes values in $\{0,1\}\subset\Rl\subset\Cl$, since
$\hat{P}$ is a projection operator. Thus, $\bar{P}$ is a
characteristic function of the subset  
$S_{\hat{P}_V}\subseteq\us(V)$ defined by
\begin{equation}
S_{\hat{P}_V}:=\{\lambda\in  \us_V|\,\bar{P}(\lambda):=\lambda(\hat{P})=1\}\,.
\end{equation}
Since $\bar{P}$ is continuous with respect to the spectral
topology on $\ps\Sigma(V)$, then $\bar{P}^{-1}(1)=S_{\hat{P}_V}$ is a clopen subset of
$\us(V)$, since both $\{0\}$ and $\{1\}$ are clopen
subsets of the Hausdorff space $\Cl$.

Through the Gel'fand transform, it is then possible to define a
bijective map between projection operators, $\dase(\hat{P})_V\in
\mP(V)$, and clopen subsets of $\us_V$ where,
for each context V,
\begin{equation}
S_{\dase(\hat{P})_V}:=\{\lambda\in\us_V|\,\lambda
(\dase(\hat{P})_V)=1\}\,.
\end{equation}
This correspondence between projection operators and clopen
sub-objects of the spectral presheaf $\us$, or simply $\Sub(\us)$,  implies the
existence of a lattice homeomorphism for each $V$
\begin{equation}
\label{equ:smapfin}
\mathfrak{S}:\mP(V)\rightarrow \Sub(\us)_V\,,
\end{equation}
such that
\begin{equation}
\dase(\hat{P})_V\mapsto
\mathfrak{S}(\dase(\hat{P})_V):=S_{\dase(\hat{P})_V}\,.
\end{equation}
The collection of subsets
$\{S_{\dase(\hat{P})_V}\}$, $V\in\cvn$, induces a sub-object
of $\us$. 

\noindent
\begin{Theorem}
For each projection operator $\hat{P}\in \mP(\mn)$, the collection
 \be
\ps{\delta(\hat{P})}:=\{S_{\dase(\hat{P})_V}|\,V\in\cvn\}\,
\ee

forms a (clopen) sub-object of the spectral presheaf.
\end{Theorem}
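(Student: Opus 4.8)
The plan is to verify directly the two defining conditions of a clopen sub-object listed in Definition \ref{appdefsub}. Condition (i)---that each $S_{\dase(\hat P)_V}$ is clopen in $\us_V$---is already in hand: it was observed above that $S_{\dase(\hat P)_V}$ is the preimage of $\{1\}$ under the Gel'fand transform of the projection $\dase(\hat P)_V\in\mP(V)$, and since that transform is continuous and $\{1\}$ is clopen in the Hausdorff space $\Cl$, the set $S_{\dase(\hat P)_V}$ is clopen in the spectral topology on $\us_V$. So the real content is condition (ii): for every inclusion $i_{V'V}:V'\subseteq V$ in $\cvn$, the restriction map $\us(i_{V'V}):\lambda\mapsto\lambda|_{V'}$ must carry $S_{\dase(\hat P)_V}$ into $S_{\dase(\hat P)_{V'}}$. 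Once this is shown, $\ps{\delta(\hat P)}$ is automatically a contravariant functor with the stated restriction maps (functoriality being inherited from $\us$), hence a clopen sub-object.

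First I would record the monotonicity of outer daseinisation under coarse graining. If $V'\subseteq V$ then $\mP(V')\subseteq\mP(V)$, so
\be
\{\hat R\in\mP(V')\mid \hat R\geq\hat P\}\subseteq\{\hat R\in\mP(V)\mid \hat R\geq\hat P\}\,;
\ee
taking infima over a smaller family makes the infimum larger, whence $\dase(\hat P)_{V'}\geq\dase(\hat P)_{V}$. Both projections lie in $V$ (note that $\dase(\hat P)_{V'}\in\mP(V')\subseteq\mP(V)$, so $\lambda$ may legitimately be evaluated on it), and for projections the relation $\dase(\hat P)_{V'}\geq\dase(\hat P)_V$ is equivalent to $\dase(\hat P)_{V'}\,\dase(\hat P)_V=\dase(\hat P)_V$.

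The key step is then to exploit the multiplicativity of the spectral functionals. Given $\lambda\in S_{\dase(\hat P)_V}$, i.e.\ $\lambda(\dase(\hat P)_V)=1$, I would compute
\be
1=\lambda\big(\dase(\hat P)_V\big)=\lambda\big(\dase(\hat P)_{V'}\,\dase(\hat P)_V\big)=\lambda\big(\dase(\hat P)_{V'}\big)\,\lambda\big(\dase(\hat P)_V\big)=\lambda\big(\dase(\hat P)_{V'}\big)\,,
\ee
so that $\lambda|_{V'}(\dase(\hat P)_{V'})=1$, i.e.\ $\lambda|_{V'}\in S_{\dase(\hat P)_{V'}}$, as required. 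Combining this with the fact that $\us$ is already a presheaf (so identities and composites of restriction maps behave correctly) finishes the proof.

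The argument is elementary, and I expect the only delicate point---what I would flag as the main obstacle---to be ensuring one uses the full multiplicativity of the characters $\lambda\in\us_V$, and not merely their linearity and positivity. Positivity alone would give only the inequality $\lambda(\dase(\hat P)_{V'})\geq\lambda(\dase(\hat P)_V)$, which does not force the value $1$ to persist; it is precisely because $\us_V$ consists of the characters of the abelian algebra $V$ that the value of $\lambda$ on a projection is pinned to $\{0,1\}$ and is transported correctly under the ordering $\dase(\hat P)_{V'}\geq\dase(\hat P)_V$.
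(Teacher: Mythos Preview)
Your proof is correct and follows essentially the same strategy as the paper's: both establish the monotonicity $\dase(\hat P)_{V'}\geq\dase(\hat P)_V$ for $V'\subseteq V$ and then use it to show that $\lambda(\dase(\hat P)_V)=1$ forces $\lambda(\dase(\hat P)_{V'})=1$. The only cosmetic difference is in how that last implication is extracted: the paper writes the additive decomposition $\dase(\hat P)_{V'}=\dase(\hat P)_V+\hat\beta$ with $\hat\beta$ a projection and uses linearity together with $\lambda(\hat\beta)\in\{0,1\}$, whereas you use the equivalent multiplicative identity $\dase(\hat P)_{V'}\,\dase(\hat P)_V=\dase(\hat P)_V$ and the character property of $\lambda$. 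Your version is arguably slightly cleaner, and you are more explicit than the paper about why $\dase(\hat P)_{V'}\in V$ so that $\lambda$ may be evaluated on it.

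One small remark on your closing commentary: positivity together with the normalisation $\lambda(\hat 1)=1$ would in fact already suffice, since then $\lambda(\dase(\hat P)_{V'})\in[0,1]$ and the inequality $\lambda(\dase(\hat P)_{V'})\geq 1$ forces equality. This does not affect your proof, which is correct as written.
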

\begin{proof}
We  know that, for each $V\in\cvn$, $S_{\dase(\hat{P})_V}\subseteq \us_V$. Therefore, what we need to show is that these clopen subsets get mapped one to another by the presheaf morphism. Indeed, consider an element $\lambda\in S_{\dase(\hat{P})_V}$ and given any $V'\subseteq V$, then by the definition of daseinisation, we get $\dase(\dase(\hat{P})_{V})_{V'}=\bigwedge\{\hat{\alpha}\in \mP(V')|\hat{\alpha}\geq  \dase(\hat{P})_V\}\geq \dase(\hat{P})_V$. Therefore, if $\dase(\hat{P})_{V'}-\dase(\hat{P})_V=\hat{\beta}$, then 
$\lambda(\dase(\hat{P})_{V'})=\lambda(\dase(\hat{P})_V)+\lambda(\hat{\beta})=1$ since $\lambda(\dase(\hat{P})_V)=1$ and $\lambda(\hat{\beta})\in\{0,1\}$. Moreover,
\be
\{\lambda|_{V'}|\, \lambda\in S_{\dase(\hat{P})_V}\} \subseteq\{ \lambda\in S_{\dase(\hat{P})_{V'}} \}\,.
\ee
However, $\lambda|_{V'}$ is precisely $\us(i_{V'V})\lambda$
and so one infers 
\be
\{\lambda|_{V'}|\, \lambda\in S_{\dase(\hat{P})_V}\}=\us(i_{V'V})S_{\dase(\hat{P})_V}\,.
\ee
It follows that $\ps{\delta(\hat{P})}$ is a sub-object of $\us$.
\end{proof}

\section{Phase space measure}
\label{app:mes}

In this section, we will investigate the close relationship between states $\rho$ on a von Neumann algebra and measures on the state space $\us$. In particular, we will show how, given a state $\rho$ it is possible to uniquely define a measure $\mu^{\rho}$ and, on the other hand, given an abstract characterization of a measure it is possible to uniquely determine a state (\cite{doring}).

\subsection{Deriving a measure from a state}
\label{subapp:dermeas}

To define the notion of measure on the state space $\us$,  the notion of probability valued presheaf is required.
\begin{Definition}
\label{appdef01}
The presheaf $\ps{[0,1]}:\cvn\rightarrow \Sets$ is defined on:
\begin{enumerate}
\item[(i)] Objects: For each context $V\in \cvn $, the set 
$\ps{[0,1]}_V:=\{f:\,\dV\rightarrow [0,1]|\,f \text{ is order reversing} \}$ is a set of order reversing (OR) functions over $\dV$, the downwards set of all abelian von Neumann sub-algebras of $V$.
\item[(ii)] Morphisms: Given $i_{V'V}:V'\to V,$ $V' \subset V,$ then the corresponding presheaf map is 
\ba
\ps{[0,1]}(i_{V'V}):\ps{[0,1]}_V&\rightarrow& \ps{[0,1]}_{V'}\crcr
f&\mapsto&f|_{\dV'}
\ea
\end{enumerate}
where $f|_{\dV'}$ is the restriction of $f$ to $\dV'$. 
\end{Definition}

We stress the fact that $\ps{[0,1]}$ is usually denoted
by $\ps{[0,1]}^{\succeq}$ in topos literature to distinguish it from $\ps{[0,1]}^{\preceq}$, which is the presheaf assigning to each $V$ the set of order preserving functions from $\dV$ to $\Cl$. 

\begin{Definition}
\label{appdefgam}
The collection $\ps{\Gamma[0,1]}$ of global elements (global sections) is a collection of natural transformations from the terminal object\footnote{The terminal object $\ps{1}$ is the presheaf which assigns, to each context $V$, a singleton $\{*\}$ and, to each morphism, simply the identity map.} $\ps{1}\in\Setn$ to the presheaf $\ps{[0,1]}$. Thus an element $\gamma\in \ps{\Gamma[0,1]}$ is defined, for each $V,$ as
\ba
\gamma_V:\ps{1}_V&\rightarrow& \ps{[0,1]}_V\crcr
\{*\}&\mapsto&\gamma_V(\{*\})
\ea
where $\gamma_V(\{*\}):\,\dV\rightarrow \Cl$. 
Global sections mean that, for $V'\subseteq V$, 
\be
\gamma_{V'}\{*\}=\ps{[0,1]}(i_{V'V})\gamma_V\{*\}\,.
\ee
\end{Definition}
We are now interested in constructing a measure $\mu$ on the state space $\us$. As such, it should somehow define a size or weight for each sub-object of $\us$. However, we will not consider all sub-objects of $\us$ but only a collection of them, which we define as {\it measurable}. This collection of measurable sub-objects of $\us$ will be the collection of all clopen sub-objects  $\Sub(\us)$. The reasons behind this 
choice come from classical physics ideas. In the latter context,  a proposition is defined as a measurable subset of the state space. Hence, in the topos formulation of quantum theory, we ascribe the status of measurable to all clopen sub-objects of $\us$, {\it viz.} all propositions. It could be the case that one can consider a larger collection of sub-objects but, surely, $\Sub(\us)$ is the minimal such collection. 
In any case, we will define a probability measure on $\Sub(\us)$ such that $\us$ will have measure 1 and $\ps{0}$ (the {\it empty} presheaf) will have measure 0. 
As we will see, the measure defined on $\us$ will be such that there exists a bijective correspondence between such a measure and states of the quantum system, i.e. $\mu\mapsto\varrho_{\mu}$ is a bijection\footnote{At this point it should be noted that the correspondence between measures and states is present also in the context of classical physics. In fact, in that situation,  a pure state (i.e. a point) is identified with the Dirac measure, while a general state is simply a probability measure on the state space. Such a probability measure assigns a number in the interval $[0,1]$ called the weight 
and, thereby, gives sense of the measurable set.} (for $\varrho$ a density matrix).
The remaining question is: How is this measure defined?  A suitable measure on $\us$ is defined as follows:

Given a state $\rho$, we have:
\ba
\mu_{\rho}: \Sub(\us)&\rightarrow& \Gamma\ps{[0,1]}\crcr
\ps{S}=(S_V)_{V\in\cvn}&\mapsto&\mu_{\rho}(\ps{S}):=(\rho( P_{\ps{S}_V}))_{V\in\cvn}
\label{ali:measure}
\ea
Recall that $P_{\ps{S}_V}=\mathfrak{S}^{-1}\ps{S}_V$, where 
$\mathfrak{S}:\mathcal{P}(V)\rightarrow \Sub(\us)_V$
is defined as in \eqref{equ:smapfin} (see also \eqref{equ:smap}).

Thus,
 the measure $\mu_{\rho}$ defined in \eqref{ali:measure} takes a clopen sub-object of $\us$ and defines an OR  function 
\ba
\mu_{\rho}(\ps{S}): \cvn  &\rightarrow & \ps{[0,1]}  \crcr
\;[\mu_{\rho} (\ps{S})](V): \ps{1}_V& \rightarrow& \ps{[0,1]}_V \crcr
\{*\} &\mapsto& [\mu_{\rho}(\ps{S})] (V)(\{*\}):=
\;[\mu_{\rho}(\ps{S})](V)
\ea
where $[\mu_{\rho}(\ps{S})](V):\dV\rightarrow [0,1]$.
It is worth mentioning some properties of $\mu_{\rho}$:
\begin{enumerate}
\item[(i)] For each $V\in \cvn$, then $[\mu_{\rho}(\ps{0})](V)=\rho(\hat{0})=0$ therefore, globally,
\be
\mu_{\rho}(\ps{0})=(0)_{V\in \cvn}\,.
\ee
\item[(ii)]  For each $V\in \cV(\mh)$, 
$[\mu_{\rho}(\us)](V)=\rho(\hat{1})=1$, therefore, globally,
\be
\mu_{\rho}(\us)=(1)_{V\in \cvn}\,.
\ee
\item[(iii)] Given two disjoint clopen sub-objects $\ps{S}$ and $\ps{T}$ of $\us$, we then have for each context $V\in \cvn$ that
\ba\label{ali:additive}
[\mu_{\rho}(\ps{S}\vee\ps{T})](V)&=&\mu_{\rho}((\ps{S}\vee\ps{T})_V)\crcr
&=&\mu_{\rho}(\ps{S}_V\cup\ps{T}_V)\crcr
&=&\rho(\hat{P}_{\ps{S}_V}\vee\hat{P}_{\ps{T}_V})\crcr
&=&\rho((\hat{P}_{\ps{S}_V}+\hat{P}_{\ps{T}_V})\crcr
&=&\rho(\hat{P}_{\ps{S}_V})+\rho(\hat{P}_{\ps{T}_V})\crcr
&=&[\mu_{\rho}(\ps{S})](V)+[\mu_{\rho}(\ps{T})](V)\,.
\ea
It follows, globally, that
\be
\mu_{\rho}(\ps{S}\vee\ps{T})=\mu_{\rho}(\ps{S})+\mu_{\rho}(\ps{T})\,.
\ee
This is the property of finite additivity.
\item[(iv)] As generalization of the above property, one has: given two arbitrary clopen sub-objects $\ps{S}$ and $\ps{T}$ of $\us$, for all $V\in \cvn$, the following is valid:
\ba
[\mu_{\rho}(\ps{S}\vee\ps{T})](V)+[\mu_{\rho}(\ps{S}\wedge\ps{T})](V)&=&\rho(\hat{P}_{\ps{S}_V}\vee\hat{P}_{\ps{T}_V})
+\rho(\hat{P}_{\ps{S}_V}\wedge\hat{P}_{\ps{T}_V})\crcr
&=&\rho(\hat{P}_{\ps{S}_V}\vee\hat{P}_{\ps{T}_V}
+\hat{P}_{\ps{S}_V}\wedge\hat{P}_{\ps{T}_V})\crcr
&=&\rho(\hat{P}_{\ps{S}_V}+\hat{P}_{\ps{T}_V})\crcr
&=&\rho(\hat{P}_{\ps{S}_V})+
\rho(\hat{P}_{\ps{T}_V})\crcr
&=&[\mu_{\rho}(\ps{S})](V)+[\mu_{\rho}(\ps{T})](V)\,.
\ea
Globally, this yields
\be
\mu_{\rho}(\ps{S}\vee\ps{T})+\mu_{\rho}(\ps{S}\wedge\ps{T})=\mu_{\rho}(\ps{S})+\mu_{\rho}(\ps{T})\,.
\ee
\item[(v)] 
Because the collection of clopen sub-objects of $\us$ forms  a bi-Heyting algebra and so a Boolean algebra (this has been discussed in \cite{doringx}), one has
\be
\ps{S}\vee\neg\ps{S}<\us\,,
\ee
so that 
\be
\mu_{\rho}(\ps{S}\vee\neg\ps{S})<1_{V\in\cvn}\,.
\ee
\item[(vi)] The closest property from $\sigma$-additivity is the following: given a countable infinite family $(\ps{S}_i)_{i\in I}$ for open sub-objects, such that for each context $V\in \cvn$ the clopen subsets $(\ps{S}_i)_V\subseteq \us_V$ (for all $i\in I$) are pairwise disjoint, then we have
\be
\Big[\mu_{\rho}\Big(\bigvee_{i\in I}\ps{S}_i)\Big)\Big](V)=\rho(\bigvee_{i\in I}\hat{P}_{\ps{S}_i})=\rho(\sum_{i\in I}\hat{P}_{\ps{S}_i})
=\sum_{i\in I}\rho(\hat{P}_{\ps{S}_i})=
\sum_{i\in I}[\mu_{\rho}(\ps{S}_i)](V)\,.
\ee
Thus,  $\mu$ is locally $\sigma$-additive.

\end{enumerate}

\subsection{Deriving a state from a measure}
\label{subsect:statfromes}

So far, we have defined a particular measure given a quantum state $\rho$. We are now interested in establishing the reciprocal. Thus, we first give an abstract characterization $\mu$ of a measure with no reference to a state and, then, show how such a measure can uniquely be inferred from a state $\rho_\mu$.
We end up with a bijective correspondence between the space of states $\rho$ and the space of measures $\mu$ on $\us$.

\begin{Definition}\label{def:measure}
A measure $\mu$ on the state space $\us$ is a map
\ba
\mu: \Sub(\us)&\rightarrow&\Gamma\ps{[0,1]}\crcr
\ps{S}=(\ps{S}_V)_{V\in\cvn}&\mapsto&(\mu(\ps{S}_V))_{V\in\cV(\mh)}
\ea
such that the following conditions hold:
\begin{enumerate}
\item[(i)] $\mu(\us)=(1)_{V\in \cvn}$;
\item[(ii)] For all $\ps{S}$ and $\ps{T}$ in $\Sub(\us)$, 
then $\mu(\ps{S}\vee\ps{T})+\mu(\ps{S}\wedge\ps{T})
=\mu(\ps{S})+\mu(\ps{T}).$

\end{enumerate}
\end{Definition}

\begin{Theorem}\label{theo:measure}
Given a measure $\mu$ on $\us$,  there exists a unique state $\rho_\mu$ associated to that measure.
\end{Theorem}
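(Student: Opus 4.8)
The plan is to transport the topos measure $\mu$, which lives on the clopen sub-objects of $\us$, down to an ordinary finitely additive probability measure on the projection lattice $\mP(\mn)$ of the ambient algebra, and then to apply the generalised version of Gleason's theorem \cite{maeda} to obtain a unique normal state extending it. Concretely, for a projection $\hat{P}\in\mP(\mn)$ choose a context $V\in\cvn$ with $\hat{P}\in V$; then $\dase(\hat{P})_V=\hat{P}$, and one sets
\be
m(\hat{P}):=[\mu(\ps{\delta(\hat{P})})](V)\,,
\ee
the value at $V$ of the global section $\mu(\ps{\delta(\hat{P})})\in\Gamma\ps{[0,1]}$. The first thing to verify is that this does not depend on the choice of $V\ni\hat{P}$: restricting down to the abelian algebra generated by $\hat{P}$ alone and using that $\mu(\ps{\delta(\hat{P})})$ is a genuine global section — so its components are compatible under the restriction maps of $\ps{[0,1]}$ — shows that all admissible contexts give the same number. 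Equivalently one may simply define $m(\hat{P})=\min_{V\in\cvn}[\mu(\ps{\delta(\hat{P})})](V)$, the minimum being attained precisely on the contexts containing $\hat{P}$, in parallel with the expectation-value formula \eqref{eprho}.

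Next I would check that $m\colon\mP(\mn)\to[0,1]$ is a finitely additive probability measure. Normalisation $m(\hat{1})=1$ is immediate from axiom~(i) of Definition~\ref{def:measure}, since $\ps{\delta(\hat{1})}=\us$. For additivity, take mutually orthogonal projections $\hat{P},\hat{Q}$ and work in a context $W\in\cvn$ containing both of them, hence also $\hat{P}+\hat{Q}=\hat{P}\vee\hat{Q}$. Because $\mathfrak{S}_W\colon\mP(W)\to\Sub(\us)_W$ is a lattice isomorphism, the $W$-component of $\ps{\delta(\hat{P})}\vee\ps{\delta(\hat{Q})}$ corresponds to $\hat{P}\vee\hat{Q}$ and that of $\ps{\delta(\hat{P})}\wedge\ps{\delta(\hat{Q})}$ corresponds to $\hat{P}\wedge\hat{Q}=\hat{0}$, i.e. it is the empty sub-object at $W$. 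Evaluating the modularity axiom~(ii), $\mu(\ps{S}\vee\ps{T})+\mu(\ps{S}\wedge\ps{T})=\mu(\ps{S})+\mu(\ps{T})$, in the component $W$ with $\ps{S}=\ps{\delta(\hat{P})}$ and $\ps{T}=\ps{\delta(\hat{Q})}$ then gives $m(\hat{P}\vee\hat{Q})+0=m(\hat{P})+m(\hat{Q})$; in particular $m(\hat{0})=0$, and by induction $m$ is finitely additive. If a stronger statement is wanted, local $\sigma$-additivity of $\mu$ (property~(vi) of the measure) transfers the same way and makes $m$ countably additive.

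With $m$ in hand, the generalised version of Gleason's theorem \cite{maeda} — valid because $\mn$ contains no direct summand of type $I_2$, which is automatic for $\mn=\mb(\mh)$ with $\dim\mh\ge3$ — yields a unique normal state $\rho_\mu\colon\mn\to\Cl$ with $\rho_\mu|_{\mP(\mn)}=m$, and this $\rho_\mu$ is the asserted state. Uniqueness at the level of $\mu$ is then clean: if normal states $\rho$ and $\rho'$ both induce $\mu$ via \eqref{ali:measure}, then for every $\hat{P}$ and every $V\ni\hat{P}$ one has $\rho(\hat{P})=[\mu(\ps{\delta(\hat{P})})](V)=\rho'(\hat{P})$, so $\rho$ and $\rho'$ agree on $\mP(\mn)$ and hence, being normal, on all of $\mn$; thus $\rho_\mu$ is the unique state associated to $\mu$.

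I expect the genuine obstacle to be the compatibility step of the second paragraph — showing that the context-indexed data packaged in $\mu$ really assemble into a \emph{single} well-defined finitely additive functional on the global lattice $\mP(\mn)$ (independence of the chosen $V\ni\hat{P}$, and survival of additivity under the passage between the context containing both $\hat{P}$ and $\hat{Q}$ and the individual contexts). Once $m$ is produced, everything analytically heavy — extending a finitely additive measure on projections to a genuine normal state, together with its uniqueness — is delegated to the generalised Gleason theorem, at the usual price of excluding type $I_2$ summands.
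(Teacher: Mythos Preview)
Your overall strategy is the same as the paper's: define a function $m$ on $\mP(\mn)$ from the topos measure $\mu$, verify that $m$ is a finitely additive probability measure on projections, and then invoke the generalised Gleason theorem to obtain a unique state $\rho_\mu$. The paper carries this out in essentially the same order, and the appeal to Gleason at the end is identical.

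There is, however, one substantive step that the paper addresses and that your argument does not. You define $m(\hat{P}):=[\mu(\ps{\delta(\hat{P})})](V)$ for a context $V\ni\hat{P}$ and argue independence of $V$ by ``restricting down to the algebra generated by $\hat{P}$'' and using the global-section property of $\mu(\ps{\delta(\hat{P})})$. But compatibility of a global section of $\ps{[0,1]}$ under the presheaf maps only tells you that the associated function on $\cvn$ is order-reversing; for $V_0\subseteq V$ this gives $[\mu(\ps{\delta(\hat{P})})](V_0)\geq[\mu(\ps{\delta(\hat{P})})](V)$, not equality. Your alternative definition via the minimum is likewise circular: formula \eqref{eprho} is derived only for measures coming from a state, which is precisely what you do not yet have. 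The paper handles this by first proving, via the modularity axiom~(ii), a \emph{locality} lemma: if two clopen sub-objects $\ps{S},\ps{T}$ satisfy $\ps{S}_V=\ps{T}_V$, then $[\mu(\ps{S})](V)=[\mu(\ps{T})](V)$. This shows that $[\mu(\ps{S})](V)$ depends only on the projection $\hat{P}_{\ps{S}_V}$, after which independence of the chosen $V$ is obtained by passing through $V\cap V'$ using daseinisation as a bridge.

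The same locality lemma is also exactly what is missing in your additivity argument. You correctly observe that at a context $W\ni\hat{P},\hat{Q}$ the $W$-component of $\ps{\delta(\hat{P})}\vee\ps{\delta(\hat{Q})}$ corresponds to $\hat{P}\vee\hat{Q}$ and that of $\ps{\delta(\hat{P})}\wedge\ps{\delta(\hat{Q})}$ is empty. But to conclude $[\mu(\ps{\delta(\hat{P})}\vee\ps{\delta(\hat{Q})})](W)=m(\hat{P}\vee\hat{Q}):=[\mu(\ps{\delta(\hat{P}\vee\hat{Q})})](W)$ and $[\mu(\ps{\delta(\hat{P})}\wedge\ps{\delta(\hat{Q})})](W)=0$, you are comparing $\mu$-values at $W$ of \emph{different} global sub-objects that merely agree at $W$; this step is precisely the locality lemma. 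So the paper's choice to work with arbitrary sub-objects rather than only daseinised ones is not merely cosmetic: it forces the statement and proof of the ingredient that your approach also needs but leaves implicit. Once you insert that lemma (using modularity as in the paper), your argument goes through and is equivalent to the paper's.
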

\begin{proof}
We define 
\ba
m:\mP(\mn)&\rightarrow& [0,1]\crcr
P&\mapsto&m(P)
\ea
such that i) $m(\hat{1})=1$; ii) if $\hat{P}\hat{Q}=0$ (are orthogonal), then $m(\hat{P}\vee\hat{Q})=m(\hat{P}+\hat{Q})=m(\hat{P})+m(\hat{Q})$. Such a map is called in the literature a {\it finitely additive probability measure on the projections of $V$}. We now want to define a unique such finite additive measure, given the probability measure $\mu$. To this end, we define
\be\label{equ:defmu}
m(\hat{P}):=[\mu(\ps{S}_{\hat{P}})](V)= \mu(\ps{S}_{\hat{P}})_V\,,
\ee
such that $m(\hat{1}):=1$. In the above formula, $\ps{S}$ is some clopen sub-object $\ps{S}\subseteq \us$, such that, for $V$, we have $\mathfrak{S}^{-1}(\ps{S}_V)=\hat{P}_{\ps{S}_V}$. 

This statement is well defined if it proves to be independent of which sub-object of $\us$ is chosen to represent the projection operator. In fact, it could happen that for contexts $V$ and $V'$, $\ps{S}_V$ and $\ps{T}_{V'}$ both correspond to the same projection operator $\hat{P}=\mathfrak{S}^{-1}(\ps{S}_V)=\mathfrak{S}^{-1}(\ps{T}_{V'})$. 
For this situation, one must show that $\mu(\ps{S}_{\hat{P}_V})=\mu(\ps{T}_{\hat{P}_{V'}})$. 
Let us first take the case for which there exist two sub-objects of $\us$, such that they correspond to the same projector at the same context.
 It can be obtained that
\ba
[\mu(\ps{S})](V)+[\mu(\ps{T})](V)&\stackrel{\ref{def:measure}}{=}&[\mu(\ps{S}\vee\ps{T})](V)+[\mu(\ps{S}\wedge\ps{T})](V)\crcr
&=&\mu[(\ps{S}\vee\ps{T})_V]+\mu[(\ps{S}\wedge\ps{T})_V]\crcr
&=&\mu(\ps{S}_V\cup\ps{T}_V)+\mu(\ps{S}_V\cap\ps{T}_V)\crcr
&=&\mu(\ps{S}_V)+\mu(\ps{S}_V)\crcr
&=&[\mu(\ps{S})](V)+[\mu(\ps{S})](V)\,.
\ea
Therefore, as expected,
\be\label{equ:samecontext}
[\mu(\ps{S})](V)=[\mu(\ps{T})](V)\,.
\ee
Next, let us consider the case in which $\ps{S}_V=\ps{T}_{V'}$ and 
that these sub-objects both correspond to the projection operator $\hat{P}$. This means that $\hat{P}\in V$ and $\hat{P}\in V'$, 
so that $\hat{P}\in V\cap V'$. For instance,  $\ps{\dase(\hat{P})}_V=\ps{S}_V=\ps{T}_V$.  Note that although $\dase(\hat{P})_{V'\cap V}=\hat{P}$, however, it is not true that $\mathfrak{S}^{-1}(\ps{S}_{V'\cap V})=\hat{P}$ or $\mathfrak{S}^{-1}(\ps{T}_{V'\cap V})$. Given such a situation, we obtain
\ba
[\mu(\ps{S})](V)&\stackrel{\ref{equ:samecontext}}{=}&
[\mu(\dase(\hat{P}))](V)\crcr
&=&[\mu(\dase(\hat{P}))](V'\cap V)\crcr
&=&[\mu(\dase(\hat{P}))](V')\crcr
&\stackrel{\ref{equ:samecontext}}{=}&[\mu(\ps{T})](V')\,.
\ea
Hence, Definition \ref{equ:defmu} makes sense. It remains to prove that $m$ satisfies ordinary properties of measures on $\mn$:
(i) $m$ is actually a finitely additive probability measure on the projections in $\mP(\mn)$. To this end, consider two orthogonal projection operators $\hat{P}$ and $\hat{Q}$, such that $\mathfrak{S}^{-1}(\ps{T}_V)=\hat{Q}$ and $\mathfrak{S}^{-1}(\ps{S}_V)=\hat{P}$. Then, $(\ps{S}\vee\ps{T})_V$ corresponds to the projection operator $\hat{P}\vee\hat{Q}$, such that
\ba
m(\hat{P}\vee\hat{Q})&:=&[\mu(\ps{S}\vee\ps{T})](V)\crcr
&=&[\mu(\ps{S})](V)+[\mu(\ps{T})](V)+[\mu(\ps{S}\wedge\ps{T})](V)\crcr
&=&[\mu(\ps{S})](V)+[\mu(\ps{T})](V)\crcr
&=:&m(\hat{P})+m(\hat{Q})\,;
\ea
(ii) the fact that $m(\hat{0})=0$ is easily inferred. Putting these results together, one can prove that, given a measure $\mu$ on $\us$, a unique   finitely additive probability measure on the projections in $\mn$ is defined. Through Gleason's theorem \cite{Gleason},\footnote{Such a theorem shows that a quantum state is uniquely determined by the values it takes on projections. Since clopen sub-objects have components which correspond to projections, Gleason's theorem applies.} it is possible to show that such a probability measure corresponds to a state $\rho_{\mu}$. Thus, the following chain $\mu\mapsto m\mapsto\rho_{\mu}$ induces 
one-to-one mappings.
\end{proof}

The proof that there exists a one-to-one correspondence between states and measures, as defined in \ref{def:measure}, has been established.

\section{Truth object and the topos $\Sh(\cvn\times (0,1)_L)$}
\label{app:truth}

We now give the definition of a truth object which is the topos analogue of a state. In particular, for any state $\rho$ and any number $r\in [0,1]$, the associated topos state is represented by the following:
\begin{Definition}
The presheaf $\ps{\mathbb{T}}^{\rho, r}\in \Setn$ is defined on
\begin{enumerate}
\item [(i)] Objects: for each context $V\in \cV(\mn)$, we have
\be\label{equ:rho}
\ps{\mathbb{T}}^{\rho, r}_V:=\{\ps{S}\in \Sub(\us_{|\dV})|\,\forall V'\subseteq V,\,  \rho(\hat{P}_{S_{V'}})\geq r\}\,.
\ee
\item [(ii)] Morphisms: given $i_{V'V}:V'\rightarrow V$, $V'\subseteq V$, the corresponding morphism is $\ps{\mathbb{T}}^{\rho, r}(i_{V'V}):\ps{\mathbb{T}}^{\rho, r}_V\rightarrow \ps{\mathbb{T}}^{\rho, r}_{V'}$; $\ps{S}\mapsto\ps{S}|_{\dV'}$.
\end{enumerate}
\end{Definition}
It can be shown that, if we have two distinct numbers $r_1\le r_2\leq 1$, then 
\be
\ps{\mathbb{T}}^{\rho, r_2}\subseteq\ps{\mathbb{T}}^{\rho, r_1}\,.
\ee
This means that the collection of sub-objects (propositions) of the state space, which are true with probability at least $r_1$, includes the collection of sub-objects (propositions) of the state space which are true with a greater probability (at least $r_2$). 

As can be seen from the above definition, we end up with a one parameter family of truth objects, namely, one for each $r$. 
We now would like to combine such a family together so as to be itself an
object in a topos. To this end one needs to extend the topos from $\Sh(\cvn)$ to $\Sh(\,\cvn \times (0,1)_L)$. 

Consider  the topological space $(0,1)$ whose open sets are the intervals $(0,r)$ for $0\leq r\leq 1$. This topological space is denoted as $(0,1)_L$ and the collection of open sets as $\mathcal{O}((0,1)_L)$. This is a category under inclusion. One can then define a bijection
\ba
\iota:[0,1]&\rightarrow& \mathcal{O}((0,1)_L)\\
r&\mapsto&(0,r)
\ea
In this setting, the stages/contexts are now pairs $(V, r)$. Such a category $\cvn\times (0,1)_L$ can be given the structure of a poset as follows:
\be
(V', r')\leq (V, r) \qquad \text{ iff } 
\qquad V'\leq V \;\text{ and }\; r'\leq r\,.
\label{vrvr}
\ee
The new truth object now becomes, for each context 
$(V, r)$,
\be
\tob^{\rho}_{\Sh(\,\cvn \times (0,1)_L)}((V, r)):= 
\tob^{\rho, r}(V)=\{\ps{S}\in \Sub(\us_{|\dV})|\;\forall V'\subseteq V\;,\; \rho(\hat{P}_{\ps{S}_{V'}})\geq r\}\,.
\ee
As shown in \cite{probabilities} each object in the topos $\Setn$ can be consistently translated to an object in the topos $\Sh(\,\cvn\times (0,1)_L)$, without any information being lost.

\end{document}